\documentclass[11pt,letterpaper]{article}
\pdfoutput=1

\usepackage[utf8]{inputenc}
\usepackage{lmodern}
\usepackage{amsmath, amsthm, amssymb, mathtools}
\usepackage{amsfonts}
\usepackage{bm}
\usepackage{booktabs}
\usepackage{braket}
\usepackage{xcolor}
\usepackage{enumitem}
\usepackage{etoolbox}
\usepackage{inconsolata}
\usepackage[margin=1in]{geometry}
\usepackage{graphicx}
\usepackage{mleftright}\mleftright
\usepackage{caption}
\usepackage{microtype}
\PassOptionsToPackage{hyphens}{url}
\usepackage{thmtools,thm-restate}
\usepackage{url}
\usepackage[normalem]{ulem}
\usepackage{verbatim}
\usepackage[linesnumbered,ruled,procnumbered]{algorithm2e}
\usepackage[all,cmtip]{xy}


\makeatletter
\newcommand{\para}{%
    \@startsection{paragraph}{4}%
    {\z@}{2ex \@plus 3.3ex \@minus .2ex}{-1em}%
    {\normalfont\normalsize\bfseries}%
}
\makeatother

\DeclarePairedDelimiter{\abs}{\lvert}{\rvert} 
\DeclarePairedDelimiter{\norm}{\lVert}{\rVert} 

\DeclarePairedDelimiter{\parens}{\lparen}{\rparen}

\DeclareMathOperator{\poly}{poly}

\DeclareMathOperator*{\Tr}{Tr}

\DeclareMathOperator{\Aut}{Aut}

\DeclareMathOperator{\Sym}{Sym}

\DeclareMathOperator{\diag}{diag}

\newcommand{\eps}{\varepsilon}

\newcommand{\ZZ}{{\mathbb{Z}}}
\newcommand{\RR}{{\mathbb{R}}}
\newcommand{\CC}{{\mathbb{C}}}

\newcommand{\ii}{\mathrm{i}}

\newcommand{\rd}{{\mathrm{d}}}

\newcommand{\half}{{\tfrac 1 2}}

\newcommand{\calC}{{\mathcal{C}}}
\newcommand{\calD}{{\mathcal{D}}}
\newcommand{\calH}{{\mathcal{H}}}

\newcommand{\calS}{{\mathcal{S}}}
\newcommand{\bigO}[1]{\mathcal{O}\left( #1 \right)}


\newcommand{\one}{{\mathbf{1}}}
\newcommand{\nn}{{\mathsf{n}}} 
\newcommand{\NN}{{\mathsf{N}}} 
\newcommand{\SU}{{\mathsf{SU}}} 
\newcommand{\su}{{\mathfrak{su}}} 
\newcommand{\SO}{{\mathsf{SO}}} 
\newcommand{\so}{{\mathfrak{so}}} 


\newcommand{\pauli}{{\mathbf{P}}} 

\newcommand{\ad}{{\mathsf{ad}}}
\newcommand{\sutwo}{{\mathfrak{su}(2)}} 
\newcommand{\E}{\mathop{\mathbb{E}}}

\newcommand{\skewe}{{\mathbf{E}}}

\usepackage[backend=biber,style=alphabetic,url=false,isbn=false,maxnames=10,minnames=3,maxalphanames=4,minalphanames=3]{biblatex}
\addbibresource{refs.bib}
\usepackage[colorlinks=true]{hyperref}
\usepackage{xcolor}
\definecolor{darkred}  {rgb}{0.5,0,0}
\definecolor{darkblue} {rgb}{0,0,0.5}
\definecolor{darkgreen}{rgb}{0,0.5,0}
\hypersetup{
  urlcolor   = blue,         
  linkcolor  = darkblue,     
  citecolor  = darkgreen,    
  filecolor   = darkred       
}
\usepackage[nameinlink,capitalize]{cleveref}
\usepackage{tikz}
\usetikzlibrary{quantikz2}

\declaretheorem[numberwithin=section]{theorem}
\declaretheorem[sibling=theorem]{lemma}
\declaretheorem[sibling=theorem]{corollary}
\declaretheorem[sibling=theorem]{proposition}
\theoremstyle{definition}
\declaretheorem[sibling=theorem]{definition}
\declaretheorem[sibling=theorem]{remark}

\begin{document}
\title{
\vspace{-15mm}Efficient approximate unitary designs from random Pauli rotations}

\author{Jeongwan Haah\thanks{Microsoft Quantum, ~~~
$^{**}$Department of EECS, UC Berkeley, ~~~
$^{***}$ Department of Mathematics, MIT.
} ~~~~~~~~ Yunchao Liu$^{**}$ ~~~~~~~~ Xinyu Tan$^{***}$}
\date{}

\maketitle

\begin{quote}
    {\smaller
    \vspace{-5mm}
    We construct random walks on simple Lie groups that quickly converge to the Haar measure for all moments up to order~$t$.
    Specifically, a step of the walk on the unitary or orthognoal group of dimension~$2^{\mathsf n}$
    is a random Pauli rotation~$e^{\mathrm i \theta P /2}$.
    The spectral gap of this random walk is shown to be $\Omega(1/t)$,
    which coincides with the best previously known bound for a random walk on the permutation group on~$\{0,1\}^{\mathsf n}$.
    This implies that the walk gives an $\varepsilon$-approximate unitary $t$-design 
    in depth $\mathcal O(\mathsf n t^2 + t \log 1/\varepsilon)d$ 
    where $d=\mathcal{O}(\log \mathsf n)$ is the circuit depth to implement~$e^{\mathrm i \theta P /2}$.
    Our simple proof uses quadratic Casimir operators of Lie algebras.
    }
\end{quote}

\vspace{-2mm}

\section{Introduction}

An approximate unitary $t$-design~\cite{dankert2009unitary2design,gross2007designs} 
is an ensemble of unitaries that behaves similarly to the Haar random ensemble up to $t$-th moments.
For $\nn$-qubit ($\CC^2$) systems, there have been constructions of approximate unitary $t$-designs 
with circuit size $\poly(\nn,t)$~\cite{Brand_o_2016,Haferkamp2022randomquantum},
which have found wide applications in quantum information theory.
However, existing constructions using local random quantum circuits had rather steep dependence on~$t$.
In this paper, we consider random Pauli rotations,
which are $\exp(\ii \theta P/2)$ where $\theta$ is a random angle 
and $P$ is a random $\nn$-qubit Pauli operator.
We show that the product of $k$ independent random Pauli rotations
$e^{\ii \theta_k P_k/2} \cdots e^{\ii \theta_2 P_2/2} e^{\ii \theta_1 P_1/2}$
converges to a unitary $t$-design as $k$ increases.
\begin{theorem}\label{thm:main1}
    For any integers $\nn,t\geq 1$, it holds that
    \begin{equation}
        \norm*{\E_{\theta\sim(-\pi,\pi)}\E_{P \sim\pauli_\nn } \left(e^{\ii\frac\theta 2 P} \otimes e^{-\ii\frac\theta 2 \bar P}\right)^{\otimes t} - \E_{U \sim \SU(2^\nn)} (U \otimes \bar U)^{\otimes t}}\leq 1-\frac 1 {4t} - \frac 1 {4^\nn - 1}.
        \label{eq:mainbound}
    \end{equation}
    Here, $\pauli_\nn = \{ \one_2, \sigma^x, \sigma^y, \sigma^z\}^{\otimes \nn} \setminus \{ \one_{2^\nn} \}$ 
    is the set of all nonidentity $\nn$-qubit Pauli operators,
    the norm denotes the greatest singular value,
    $\bar U$ denotes the complex conjugate of $U$,
    the distributions for $P$, $\theta$, and $U$ are uniform in the designated domain. 
    
    In addition, for any finite dimensional unitary representation~$\rho$ of~$\SU(2^\nn)$, we have
    \begin{equation}
        \norm*{
            \E_{\theta\sim(-2\pi,2\pi)}\E_{P \sim\pauli_\nn } \rho(e^{\ii\frac\theta 2 P}) - \E_{U \sim \SU(2^\nn)} \rho(U)
        }
        \leq 1 - \frac 1 {4^\nn - 1}.\label{eq:univeral_spectral_gap}
    \end{equation}
\end{theorem}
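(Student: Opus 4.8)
Here is how I would approach the proof; it gives both inequalities from a single argument about representations, and I will be explicit about where the work concentrates.

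\textbf{Step 1 (the angle average is a projector).} Fix a nonidentity Pauli $P$ and a finite‑dimensional unitary representation $\rho$ of $\SU(2^\nn)$. The map $\theta\mapsto\rho(e^{\ii\theta P/2})$ is a one‑parameter unitary group $e^{\ii\theta A_P}$ with $A_P$ Hermitian, and $P^2=\one$ (so $e^{2\pi\ii P}=\one$) forces $e^{4\pi\ii A_P}=\one$, hence $A_P$ has spectrum in $\tfrac12\ZZ$. Averaging $e^{\ii\theta A_P}$ over the full period $\theta\in(-2\pi,2\pi)$ annihilates every nonzero frequency and leaves exactly the orthogonal projector $\Pi_P$ onto $\ker A_P$, so $\E_\theta\,\rho(e^{\ii\theta P/2})=\Pi_P$; this is the form needed for \eqref{eq:univeral_spectral_gap}. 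For \eqref{eq:mainbound} one takes $\rho=\rho_t$ with $\rho_t(U):=U^{\otimes t}\otimes\bar U^{\otimes t}$; permuting tensor legs identifies $(e^{\ii\theta P/2}\otimes e^{-\ii\theta\bar P/2})^{\otimes t}$ with $\rho_t(e^{\ii\theta P/2})$ and $(U\otimes\bar U)^{\otimes t}$ with $\rho_t(U)$ by a fixed unitary conjugation, which leaves the norm unchanged, and since $\rho_t(-\one)=\one$ here $A_P$ has \emph{integer} spectrum, so the shorter range $\theta\in(-\pi,\pi)$ already suffices. Either way the operator of interest is the average of projectors $M:=\E_{P\sim\pauli_\nn}\Pi_P$.

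\textbf{Step 2 (reduce to an irreducible and invoke the Casimir).} Decompose $\rho$ into isotypic components, which are mutually orthogonal and on each of which $\rho(g)$ acts as $\pi(g)\otimes\one$ for the corresponding irreducible type $\pi$. Then $M$ and $\E_U\rho(U)$ act block‑diagonally, so $\lVert M-\E_U\rho(U)\rVert=\max_\pi\lVert M_\pi-\E_U\pi(U)\rVert$, where $M_\pi:=\E_{P\sim\pauli_\nn}\Pi_P^\pi$ and $\Pi_P^\pi$ is the projector onto $\ker$ of the generator $H_P$ of $\phi\mapsto\pi(e^{\ii\phi P})=e^{\ii\phi H_P}$ (it is immaterial that the step distribution is not conjugation‑invariant). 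The trivial type contributes $0$, so it suffices to bound $\lVert M_\pi\rVert$ for nontrivial $\pi$. The gain from restricting to an irreducible is that the quadratic Casimir becomes a scalar: the rescaled Paulis $\{\ii P_a/\sqrt{2^\nn}\}_{a\in\pauli_\nn}$ form an orthonormal basis of $\su(2^\nn)$ for the form $\langle X,Y\rangle=-\Tr(XY)$, so $-\tfrac1{2^\nn}\sum_{a\in\pauli_\nn}H_{P_a}^2$ is the image of the quadratic Casimir in $\pi$ and hence a scalar by Schur's lemma; write $\sum_{a\in\pauli_\nn}H_{P_a}^2=\kappa_\pi\one$ on $\pi$, with $\kappa_\pi>0$ for nontrivial $\pi$ (a vector in the common kernel of all $H_{P_a}$ would be $\su(2^\nn)$‑invariant). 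As a sanity check $\kappa=4^\nn-1$ in the defining representation.

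\textbf{Step 3 (the universal bound).} Since each $H_{P_a}^2\succeq0$ and they sum to $\kappa_\pi\one$, every $H_P^2\preceq\kappa_\pi\one$; as $\one-\Pi_P^\pi$ projects onto $(\ker H_P)^\perp$, where $\tfrac1{\kappa_\pi}H_P^2\preceq\one$, and both operators vanish on $\ker H_P$, we get $\tfrac1{\kappa_\pi}H_P^2\preceq\one-\Pi_P^\pi$. Averaging over $P$,
\begin{equation*}
    \one-M_\pi=\E_P\bigl(\one-\Pi_P^\pi\bigr)\;\succeq\;\tfrac1{\kappa_\pi}\,\E_P H_P^2\;=\;\tfrac1{\kappa_\pi(4^\nn-1)}\sum_{a\in\pauli_\nn}H_{P_a}^2\;=\;\tfrac1{4^\nn-1}\,\one ,
\end{equation*}
hence $\lVert M_\pi\rVert\le1-\tfrac1{4^\nn-1}$. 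This proves \eqref{eq:univeral_spectral_gap} and already yields the weaker bound $1-\tfrac1{4^\nn-1}$ for the left‑hand side of \eqref{eq:mainbound}.

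\textbf{Step 4 (the extra $\tfrac1{4t}$ — the main obstacle).} The bound $\lVert H_P\rVert^2\le\kappa_\pi$ used above is loose precisely for the irreducibles $\pi$ appearing in $\rho_t$: by Schur--Weyl for $V^{\otimes t}\otimes\bar V^{\otimes t}$ (with $V=\CC^{2^\nn}$) such a $\pi$ has a highest weight carrying at most $2t$ boxes, so $\kappa_\pi$ scales like $4^\nn$ per box, whereas $\max_P\lVert H_P|_\pi\rVert$ grows only linearly in the number of boxes. Quantitatively, what is needed is that for every nontrivial $\pi\subseteq\rho_t$,
\begin{equation*}
    \bigl(4^\nn-1+4t\bigr)\,\max_{P\in\pauli_\nn}\lVert H_P|_\pi\rVert^2\;\le\;4t\,\kappa_\pi ,
\end{equation*}
for then $\one-M_\pi\succeq\tfrac1{\max_P\lVert H_P|_\pi\rVert^2}\E_P H_P^2=\tfrac{\kappa_\pi}{(4^\nn-1)\max_P\lVert H_P|_\pi\rVert^2}\,\one\succeq\bigl(\tfrac1{4t}+\tfrac1{4^\nn-1}\bigr)\one$, which gives \eqref{eq:mainbound}. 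I expect this sharpened Casimir estimate to be the crux. One route is an explicit highest‑weight computation: bound $\lVert H_P|_\pi\rVert$ by the extremal weight of the $\sutwo$‑subalgebra generated by $P$ and compare it with the Casimir value $\langle\lambda,\lambda+2\varrho\rangle$, checking the worst box configuration (the adjoint $(\square,\square)$, with $\kappa=2\cdot4^\nn$ and $\lVert H_P\rVert=2$, is an instructive special case). A second route is a comparison argument bounding the contribution of the large‑weight sector by the known $\Omega(1/t)$ spectral gap of the random‑transposition‑type walk on the $2^\nn$ computational basis vectors, thereby importing the $\tfrac1{4t}$ directly.
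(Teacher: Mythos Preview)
Your Steps 1--3 are correct and coincide with the paper's approach: the $\theta$-average gives the kernel projector (the paper's Lemma on integer eigenvalues of $\tau_*(P/2)$), one restricts to nontrivial irreps, and the quadratic operator inequality $\Pi_P\preceq\one-H_P^2/\lVert H_P\rVert^2$ together with the fact that $\sum_P H_P^2$ is a scalar yields the universal $1-\tfrac{1}{4^\nn-1}$ bound. Your normalization check is fine, and you implicitly use (correctly) that $\lVert H_P\rVert$ is independent of $P$ by Clifford conjugacy, which the paper states as a separate lemma.

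The genuine gap is Step~4: you isolate the right inequality
\[
(4^\nn-1+4t)\,\lVert H_P|_\pi\rVert^2\le 4t\,\kappa_\pi
\]
but do not prove it, and neither of your two proposed routes is carried out. Your route~2 (importing the $\Omega(1/t)$ gap from a permutation-group walk) is not how the paper proceeds and would be considerably harder to make precise. Your route~1 (highest-weight Casimir formula $\langle\lambda,\lambda+2\varrho\rangle$) would work, but the paper's argument is more elementary and avoids the full highest-weight machinery. Here is what you are missing.

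Write $J_P=\phi_*(P/2)$ and $\ell=\lVert J_P\rVert$ (so $H_P=2J_P$, $\lVert H_P\rVert=2\ell$, $\kappa_\pi=4\sum_P J_P^2$). Two ingredients suffice:
\begin{itemize}
\item[(a)] $\ell\le t$. This is immediate from the explicit form of $\tau_*(P/2)$ as a sum of $2t$ commuting terms each with spectrum $\{\pm\tfrac12\}$; you allude to this (``at most $2t$ boxes'') but should state it.
\item[(b)] $\sum_{P\in\pauli_\nn} J_P^2\succeq(\tfrac{\NN^2}{4}\ell+\ell^2)\one$ on any irrep, where $\NN=2^\nn$.
\end{itemize}
The paper proves (b) using only $\su(2)$: pick $\ket\psi$ with $J_{Z_1}\ket\psi=\ell\ket\psi$ where $Z_1=\sigma^z\otimes\one_2^{\otimes(\nn-1)}$. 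For every $W\in\{\one,\sigma^x,\sigma^y,\sigma^z\}^{\otimes(\nn-1)}$ the triple $(Z_1,\sigma^x\otimes W,\sigma^y\otimes W)$ spans an $\su(2)$, and since $\ket\psi$ is a top-weight vector for this $\su(2)$ one has $(J_{\sigma^x\otimes W}^2+J_{\sigma^y\otimes W}^2)\ket\psi=\ell\ket\psi$ from the $\su(2)$ Casimir identity $J_x^2+J_y^2+J_z^2=\ell(\ell+1)$. Summing over the $4^{\nn-1}=\NN^2/4$ choices of $W$ and adding $J_{Z_1}^2\ket\psi=\ell^2\ket\psi$ gives (b). Combining (a) and (b),
\[
\frac{\kappa_\pi}{(4^\nn-1)\lVert H_P\rVert^2}=\frac{1}{(4^\nn-1)\ell^2}\sum_P J_P^2\ge\frac{\NN^2}{4\ell(\NN^2-1)}+\frac{1}{\NN^2-1}\ge\frac{1}{4t}+\frac{1}{4^\nn-1},
\]
which is exactly your displayed target. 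So the missing piece is short and does not require the general Casimir formula; the work is in spotting the $\NN^2/4$ copies of $\su(2)$ through a common diagonal generator.
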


\begin{corollary}\label{thm:main2}
    Consider two mixed unitary channels
    \begin{equation}
        \calC_t : \eta \mapsto \E_{P\sim \pauli_\nn, \, \theta \sim (-\pi,\pi)} \left(e^{\ii \frac \theta  2 P}\right)^{\otimes t} \eta \left(e^{-\ii \frac \theta  2 P}\right)^{\otimes t}
        \qquad\text{and}\qquad
        \calH_t : \eta \mapsto \E_{U \sim \SU(2^\nn)} U^{\otimes t} \eta U^{\dag \otimes t} 
    \end{equation}
    using the same distribution of~$P,\theta$ and~$U$ as in~\cref{eq:mainbound}. Then,
    \begin{equation}
    \begin{aligned}
        \norm*{\calC_t^k - \calH_t}_\diamond \le \eps \qquad &\text{ if } \qquad k \geq (4 \log 2) \nn t^2 + 4t \log \frac 1 \eps,\\
        (1-\eps)\calH_t \preceq  \calC^k_t \preceq(1+\eps)\calH_t\qquad &\text{ if } \qquad k \geq (4 \log 8) \nn t^2 + 4t \log \frac 1 \eps.
    \end{aligned}
    \end{equation}
    Here, $\norm*{\cdot}_\diamond$ denotes the diamond norm (completely bounded norm).
    Every instance $e^{\ii \theta_k P_k/2} \cdots e^{\ii \theta_1 P_1/2}$
    can be implemented using $\bigO{k \nn}$ $1$-qubit and any-to-any $\mathrm{CNOT}$ gates in depth~$\bigO{k\log \nn}$. 
\end{corollary}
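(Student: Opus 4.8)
The plan is to obtain both parts of \cref{thm:main2} from the one‑step spectral gap in \cref{eq:mainbound} by the classical convolution‑powers argument, together with the standard (slightly lossy) passages from the natural‑representation norm to the diamond norm and to the completely positive ordering. Write $d:=2^{\nn t}$ for the dimension of the $t$‑copy space, and let $M_\calC,M_\calH$ denote the natural matrix representations of the superoperators $\calC_t,\calH_t$, acting on $\CC^{d}\otimes\CC^{d}$. After a fixed norm‑preserving permutation of tensor legs and using $\overline{e^{\ii\theta P/2}}=e^{-\ii\theta\bar P/2}$, the operand of the norm in \cref{eq:mainbound} equals $M_\calC-M_\calH$, so $\norm*{M_\calC-M_\calH}\le 1-\tfrac1{4t}$. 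Two facts drive the rest: (a) $M_\calH=\E_U \bar U^{\otimes t}\otimes U^{\otimes t}$ is the average of a unitary representation of $\SU(2^\nn)$, hence the orthogonal projector onto the invariant subspace, $M_\calH^2=M_\calH=M_\calH^\dagger$; and (b) every $e^{\ii\theta P/2}$ lies in $\SU(2^\nn)$ --- a nonidentity Pauli is traceless so $\det e^{\ii\theta P/2}=1$ --- whence, by invariance of the Haar measure, $M_\calC M_\calH=M_\calH M_\calC=M_\calH$, i.e.\ $\calC_t\circ\calH_t=\calH_t\circ\calC_t=\calH_t$. A one‑line induction then gives the telescoping identity $(M_\calC-M_\calH)^k=M_\calC^{\,k}-M_\calH=M_{\calC_t^k}-M_{\calH_t}$, and submultiplicativity of the operator norm combined with \cref{thm:main1} yields $\norm*{M_{\calC_t^k}-M_{\calH_t}}\le(1-\tfrac1{4t})^k\le e^{-k/(4t)}=:\delta_k$.

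For the diamond‑norm bound I would use that for any Hermiticity‑preserving map $\Phi$ on $d\times d$ matrices one has $\norm*{\Phi}_\diamond\le d\,\norm*{\Phi}_{2\to2}$, where $\norm*{\Phi}_{2\to2}$ is the operator norm of its natural representation: $d$ ancilla dimensions suffice for the diamond norm, and for $X$ of size $d^2$, $\norm*{(\Phi\otimes\mathrm{id})(X)}_1\le d\,\norm*{(\Phi\otimes\mathrm{id})(X)}_2\le d\,\norm*{\Phi}_{2\to2}\norm*{X}_2\le d\,\norm*{\Phi}_{2\to2}\norm*{X}_1$. Taking $\Phi=\calC_t^k-\calH_t$ gives $\norm*{\calC_t^k-\calH_t}_\diamond\le 2^{\nn t}\delta_k$, which is at most $\eps$ as soon as $\tfrac{k}{4t}\ge\nn t\log2+\log\tfrac1\eps$, i.e.\ $k\ge(4\log2)\,\nn t^2+4t\log\tfrac1\eps$, which is the first bound.

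For the relative‑error bound, $\calA\preceq\calB$ means $\calB-\calA$ is completely positive, equivalently $J(\calB)\succeq J(\calA)$ for Choi matrices, so it suffices to show $-\eps J(\calH_t)\preceq J(\mathcal E)\preceq\eps J(\calH_t)$ for $\mathcal E:=\calC_t^k-\calH_t$. I would combine: (i) \emph{support containment} --- $\calC_t^k$ is a mixed‑unitary channel all of whose unitaries have the form $V^{\otimes t}$ with $V$ a product of $k$ Pauli rotations (as $\calC_t$ is mixed‑unitary with unitaries $(e^{\ii\theta P/2})^{\otimes t}$ and $(A)^{\otimes t}(B)^{\otimes t}=(AB)^{\otimes t}$), so $\mathrm{supp}\,J(\calC_t^k)$, the span of its Kraus operators, lies inside $\mathrm{span}\{U^{\otimes t}:U\in\SU(2^\nn)\}=\mathrm{supp}\,J(\calH_t)=:\Pi$, hence $\mathrm{supp}\,J(\mathcal E)\subseteq\Pi$; (ii) \emph{a realignment bound} --- reindexing matrix entries preserves the Hilbert--Schmidt norm, so $\norm*{J(\mathcal E)}_\infty\le\norm*{J(\mathcal E)}_2=\norm*{M_{\mathcal E}}_2\le 2^{\nn t}\norm*{M_{\mathcal E}}_\infty\le 2^{\nn t}\delta_k$; and (iii) \emph{a spectral lower bound} $J(\calH_t)\succeq 2^{-2\nn t}\Pi$, which I would prove by an explicit estimate on the Gram matrix $G_{\pi\sigma}=(2^{\nn})^{c(\pi^{-1}\sigma)}$ of the permutation operators spanning the commutant (equivalently a bound on the Weingarten matrix). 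Putting these together, once $\delta_k\le\eps\,2^{-3\nn t}$, i.e.\ $k\ge(4\log8)\,\nn t^2+4t\log\tfrac1\eps$, we get $\norm*{J(\mathcal E)}_\infty\le 2^{\nn t}\delta_k\le\eps\,2^{-2\nn t}$, so $\pm J(\mathcal E)\preceq\norm*{J(\mathcal E)}_\infty\Pi\preceq\eps J(\calH_t)$, giving $(1-\eps)\calH_t\preceq\calC_t^k\preceq(1+\eps)\calH_t$.

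The circuit count is routine synthesis: single‑qubit Clifford conjugations (depth $1$) turn $e^{\ii\theta P/2}$ into $e^{\ii\theta(Z\otimes\cdots\otimes Z)/2}$ on the support of $P$, and the latter is implemented by a balanced binary tree of $\mathrm{CNOT}$s that collects the parity onto one qubit, a single $Z$‑rotation, and the mirror‑image tree --- $\bigO{\nn}$ one‑ and two‑qubit gates in depth $\bigO{\log\nn}$, hence $\bigO{k\nn}$ gates in depth $\bigO{k\log\nn}$ for the length‑$k$ product. I expect ingredient (iii) to be the only step that is not a soft consequence of \cref{thm:main1}: one must lower bound the least nonzero eigenvalue of the Haar Choi matrix $J(\calH_t)$ uniformly in $\nn$ and $t$ --- including the regime $2^{\nn}<t$, where the permutation operators are linearly dependent and $G$ is singular, so one must work with an orthonormal basis of the commutant rather than invert $G$ directly --- and extract the clean constant $2^{-2\nn t}$ that produces the $4\log8$ coefficient; everything else is bookkeeping around \cref{thm:main1}.
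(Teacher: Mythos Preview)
Your diamond-norm argument is exactly the paper's: both pass from the $2\!\to\!2$ norm bound $(1-\tfrac1{4t})^k$ to the diamond norm by the factor~$2^{\nn t}$, and the telescoping $(M_{\calC}-M_{\calH})^k=M_{\calC}^k-M_{\calH}$ is the amplification the paper already records in~\S1.1. The circuit-synthesis sketch is likewise the same idea as the paper's \cref{thm:singlePauliRotationComplexity}.

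Where you diverge is the relative-error bound. The paper does not touch Choi matrices at all: it simply strengthens the diamond-norm bound to $\eps\,4^{-\nn t}$ (hence the constant $4\log 8$) and then invokes \cite[Lemma~3]{Brand_o_2016}, which says that $\norm{\calC-\calH}_\diamond\le \eps\,D^{-2}$ implies $(1-\eps)\calH\preceq\calC\preceq(1+\eps)\calH$ for $D=2^{\nn t}$. Your steps (i)--(iii) are effectively a self-contained reproof of that lemma, and they are sound. One remark on (iii): the Gram/Weingarten route you propose is harder than necessary and, as you note, awkward when $2^{\nn}<t$. A cleaner argument is the Schur--Weyl block decomposition: writing $(\CC^{2^\nn})^{\otimes t}=\bigoplus_\lambda V_\lambda\otimes W_\lambda$ with $d_\lambda=\dim V_\lambda$, $m_\lambda=\dim W_\lambda$, one finds that the nonzero eigenvalues of $J(\calH_t)$ are exactly $m_\lambda/d_\lambda$ (with multiplicity $d_\lambda^2$), so $J(\calH_t)\succeq (\min_\lambda m_\lambda/d_\lambda)\Pi\succeq 2^{-\nn t}\Pi$, which is in fact stronger than the $2^{-2\nn t}$ you need and sidesteps the linear-dependence issue entirely. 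Either way your argument closes; the paper's route is just shorter because it outsources this step.
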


We also give similar results for the special orthogonal groups in \cref{sec:orthogonaldesigns}.

\subsection{Previous spectral gap bounds}

Unless otherwise noted, $\NN$ stands for~$2^\nn$.

For a distribution~$\nu$ over~$\SU(\NN)$, 
the spectral gap~$\Delta(\nu,t)$ at $t$-th order\footnote{%
    In~\cite{bourgain2012spectral} the spectral gap
    means~$1 - \sup_\rho \norm*{\E_{U \sim \nu} \rho(U) - \E_{U \sim \SU(\NN)} \rho(U)}
    =1 - \sup_{\rho} \norm*{\E_{U \sim \nu} \rho(U)}$
    where the supremum is over all nontrivial finite dimensional unitary irreps of~$\SU(\NN)$.
    This is an immediate consequence of~\cref{thm:HaarProjectionTrivialRep} and the Peter--Weyl theorem.
    See also~\cite[Thm~3.9]{odonnell2023explicit}.
    Not all possible irreps of~$\SU(\NN)$ 
    appear in~$U \mapsto \bigoplus_{t \ge 1} (U \otimes \bar U)^{\otimes t}$.
}
is given by
\begin{equation}
    1 - \Delta(\nu,t) = \norm*{\E_{U \sim \nu} (U \otimes \bar U)^{\otimes t} - \E_{U \sim \SU(\NN)} (U \otimes \bar U)^{\otimes t}}.
\end{equation}
Consider the distribution of the product of~$k$ independent draws from~$\nu$, 
which corresponds to the $k$-fold convolution~$\nu^{*k}$.
Then, since $\calH_t = \E_{U \sim \SU(\NN)} (U \otimes \bar U)^{\otimes t}$ is a projector
and $\E_{U \sim \nu} (U \otimes \bar U)^{\otimes t}$ contains the image of~$\calH_t$
in the eigenspace of eigenvalue~$+1$ (see~\cref{thm:HaarProjectionTrivialRep}),
the spectral gap amplifies as~$1- \Delta(\nu^{*k},t) = (1-\Delta(\nu,t))^k$.
This allows us to exponentially improve the accuracy 
at the cost of linear blow-up in circuit size.
More generally, proving lower bounds on the spectral gap of the $t$-th moment operator 
is a standard approach to show that a random walk on a group
quickly converges to a $t$-wise independent distribution (often referred to as ``designs'').

Hence, a primary goal in efficient approximate unitary designs 
is to find~$\nu$ with $\poly(\nn)$ circuit size
with a large spectral gap, for example, $1/\poly(\nn,t)$.
A simple brickwall ``spacetime'' geometry of random unitary circuit has been shown to achieve this goal~\cite{Brand_o_2016},
whose analysis was recently improved~\cite{Haferkamp2022randomquantum}. 
Once the operator norm distance is bounded, 
one can convert it to additive or relative diamond distance.

As far as we know, the best previous spectral gap 
for any efficient approximate unitary design on an $\nn$-qubit system was
$\Omega(t^{-4-o(1)})$~\cite{Haferkamp2022randomquantum}. 
This work takes the ensemble of~\cite{Brand_o_2016},
where the circuit geometry is brickwall that uses local gates in a one-dimensional lattice.
Our ensemble does not have any geometric locality.
Note that~\cref{thm:main1} gives a lower bound~$4^{-\nn}$ on the spectral gap independent of~$t$.
Such a $t$-independent bound was also given in~\cite[Theorem~1]{Haferkamp2022randomquantum},
which reads $\Omega(\nn^{-5}4^{-\nn})$.

Similarly to unitary designs, the best previous spectral gap lower bound for 
the special orthogonal group~$\SO(\NN)$ had a large inverse-polynomial dependence on~$t$~\cite{odonnell2023explicit}, 
while the best previous spectral gap for the symmetric group~$S_\NN$ 
was $\Omega(t^{-1})$~\cite{Brodsky2008simple}.\footnote{%
    Here we ignore polynomial dependence in $\nn$ as it can be eliminated by taking powers.
}
Our spectral gap bounds for the special unitary and orthogonal groups
are thus the strongest in terms of~$t$ dependence,
and they coincides with the best known spectral gap for the symmetric group.

Some results on unitary design bypasses spectral gap analysis.
Aiming to minimize non-Clifford resources, 
\cite{Haferkamp_2022} analyzed 
alternating ``Clifford+$K$'' circuits and
the diamond distance of the associated mixed unitary channel 
to the Haar random channel directly.
Compared with the brickwall circuits,
our construction is conceptually closer to~\cite{Haferkamp_2022}.
However, their result is only applicable in the regime when $t=\bigO{\sqrt{\nn}}$.

\subsection{Implications}

\paragraph{Circuit complexity.}
By known reductions~\cite{Brandao2021models}, our result directly implies a lower bound for robust quantum circuit complexity 
of a product of $k$ random Pauli rotations. 
Specifically, let $U$ be a product of $k\ll 2^\nn$ random Pauli rotations,
which can be implemented by $\mathcal{O}(k\nn)$ gates. 
Then with high probability over the choice of $U$, 
any unitary $V$ satisfying $\norm{U-V}\leq 0.01$ 
must have quantum circuit complexity (the minimum number of gates to implement $V$) 
$\tilde{\Omega}(\sqrt{k\nn})$. 
Note that a robust square root circuit complexity lower bound 
was also established in~\cite{haferkamp2023moments};
however, the family of quantum circuits considered there used a non-universal gate set,
and therefore does not form an approximate unitary design.
A major open question is whether it is possible to construct distributions on~$\SU(2^\nn)$
using $\poly(\nn)$ size quantum circuits, 
such that the spectral gap is at least $1/\poly(\nn)$ and independent of $t$.
Such a result would imply a robust linear growth of quantum circuit complexity.

\paragraph{Seed length.}
Our unitary design requires sampling from a continuous interval~$(-\pi,\pi)$;
however, for given $t$, we can instead sample uniformly from a discrete set $\{m\pi/t:m\in\mathbb{Z}\cap [-t,t-1]\}$ (see~\cref{sec:discreteangles}).
Therefore, our distribution for $\varepsilon$-approximate unitary $t$-design 
is samplable using only $\bigO{t(\nn t+\log 1/\eps) (\nn +\log t)}$ random bits. 
Furthermore, instead of sampling each random Pauli rotation independently and uniformly at random, 
we can sample them in a pseudorandom way using a technique of~\cite{odonnell2023explicit}
which is a generalization of the derandomized graph squaring~\cite{Rozenman2005derandomized}. 
We can thus reduce the seed length to only $\mathcal{O}(\nn t+\log1/\varepsilon)$ 
by applying~\cite[Theorem 6.21]{odonnell2023explicit}. 
While this has the same scaling as the main result of~\cite{odonnell2023explicit}, 
our construction has the advantage of having explicit constants, 
as we do not rely on the implicit spectral gap of~\cite{bourgain2012spectral}.

\paragraph{Orthogonal designs and more.}

Our approach to unitary designs can be adapted to the special orthogonal groups $\SO(\NN)$ with parallel arguments.
The results are found in~\cref{sec:orthogonaldesigns}.
The seed length can be similarly reduced to $\mathcal{O}(\nn t+\log1/\varepsilon)$ with explicit constants.
This has been used to construct pseudorandom generators for halfspaces~\cite{odonnell2023explicit}.
Finally, we discuss quantum state designs in~\cref{sec:statedesign}, 
where we obtain better bounds than what would be obtained by directly applying our unitary design.

The analysis of the orthogonal groups is so similar to that of the unitary groups that
one might desire to have unified statements for all simple finite dimensional Lie groups.
However, as the representation theory of Lie groups is tackled in a case-by-case fashion in detail,
we find it best to analyze them separately.
Beyond the unitary and orthogonal groups, 
there is a family of symplectic groups,
which might have applications in classical Hamiltonian dynamics and quantum optics
as one often encounters symplectic spaces in these subjects.

\subsection{Overview of the argument}

We start by rewriting the tensor product in a different form, 
$\left(e^{\ii\frac\theta 2 P} \right)^{\otimes t} \otimes \left( e^{-\ii\frac\theta 2 \bar P} \right)^{\otimes t}
=
e^{\ii \theta \tau_*(P/2)}$, 
where
\begin{equation}
    \tau_*(P/2) = 
    \frac 1 2 \sum_{j=1}^t 
    \left(\one_\NN^{\otimes (j-1)} \otimes P\otimes \one_\NN^{\otimes (2t-j)}-\one_\NN^{\otimes (t+j-1)} \otimes \bar P\otimes \one_\NN^{\otimes (t-j)}\right).
\end{equation}
Note that for every $P\in\pauli_\nn$, the eigenvalues of $\tau_*(P/2)$ are exactly the integers in $[-t,t]$.
Thus, the averaging over $\theta$ gives $\E_{\theta\sim(-\pi,\pi)} e^{\ii \theta \tau_*(P/2)} = K_P$, 
where $K_P$ denotes the orthogonal projector onto the kernel of~$\tau_*(P/2)$.
Our goal is now reduced to analyzing the spectrum of $\E_{P\sim\pauli_\nn}K_P$. 
We calculate the norm of this exactly for the special case of~$\nn=1$ in~\cref{sec:su2}.
In general cases, we first block-diagonalize $K_P$ 
using the observation that $P \mapsto \tau_*(P)$ is a Lie algebra representation.
In each irrep, we upper bound the kernel projector by a quadratic approximation:
\begin{equation}\label{eq:opineq}
    K(H) \preceq \one - H^2 / \norm{H}^2
\end{equation}
where $K(H)$ is the kernel projector for a Hermitian operator~$H$,
which holds for any nonzero~$H$.
This inequality is useful because the kernel projector sum is then bounded by
a sum of squares of represented operators.
A nice property of Pauli operators is that this sum of squares of represented operators
is a scalar multiple of the identity for any irrep.
We then bound the scalar by $t,\NN$.

We use the representation theory of Lie algebras, 
but our exposition is elementary for the core bound in~\cref{thm:main1};
we assume no prior knowledge beyond the representation theory of~$\sutwo$ for the main bound.

\paragraph{Note added.}{We recently became aware of independent related work of A.~Bouland, C.~Chen, J.~Docter, P.~Hayden, and M.~Xu,
achieving similar results via a different construction~\cite{BCDHX}.}

\section{Lie algebras and probability distributions}

We begin with an observation that 
any unitary design can be regarded 
as a distribution on the linear space of a Lie algebra.
This will allow us to analyze spectral properties of a unitary design
by looking at certain hermitian operators in irreducible representations of~$\su(\NN = 2^\nn)$.
We will find the latter more convenient 
since our unitary design will have the most succinct description
as a distribution on the Lie algebra, rather than on the Lie group.

Often a Lie algebra is described by very concrete data,
called structure constants, $f^a_{bc}$, that enter in
the commutation relations as~$[J_b, J_c ] = \ii \sum_a f^a_{bc} J_a$
where $J_a$ are said to span the Lie algebra.
While this is mostly correct
and causes no trouble in practice,
the appearance of the imaginary unit~$\ii$ might bring some confusion.
So, we would like to clarify the complex and real coefficients.
The tangent space at the origin of a Lie group, taken as a real manifold, is a real Lie algebra~$\su(\NN;\RR)$.
This is a linear space over real numbers of traceless \emph{anti}hermitian matrices where a Lie bracket is defined by matrix commutator;
after all, the commutator of two hermitian operators is antihermitian, 
which is \emph{not} in the $\RR$-linear space of hermitian operators.
However, since a representation space is taken to be a complex vector space,
there is no reason not to allow complex coefficients in the span of antihermitian operators.
This extension of the coefficient field is formally called the complexification of the Lie algebra:
$\CC \otimes_\RR \su(\NN; \RR)$.
This complexified space consists of all $\CC$-linear combinations of traceless antihermitian operators,
which is the $\CC$-linear space of \emph{all} traceless matrices.
Hence, the complexification is perhaps better denoted 
as~$\mathfrak{sl}(\NN;\CC) = \CC \otimes_\RR \su(\NN;\RR)$,
the Lie algebra of special linear group.
In this paper, we take a liberal convention that 
\begin{itemize}
    \item when $\su(\NN)$ appears in the context of representation, we mean its complexificiation~$\mathfrak{sl}(\NN;\CC)$, and 
    \item when we discuss a probability distribution on~$\su(\NN)$, we mean a distribution on the real vector space of hermitian, rather than antihermitian, operators, with insertion of the imaginary unit~$\ii$, whenever needed, understood.
\end{itemize}

Suppose we have an $M$-dimensional representation~$\rho: \SU(\NN) \to \mathsf{U}(M)$ of~$\SU(\NN)$ for some~$M \ge 1$,
which may be reducible.
The representation map~$\rho$ is a Lie group homomorphism, 
and we have a corresponding commutative diagram~\cite[\S8.3]{FultonHarris} by the exponential map:
\begin{equation}
    \xymatrix{
        \su(\NN) \ar[r]^{\rho_*} \ar[d]^\exp & \mathfrak{u}(M) \ar[d]^\exp \\
        \SU(\NN) \ar[r]^\rho & \mathsf{U}(M)
    }\label{eq:LieAlgGroup}
\end{equation}
where $\rho_*$ is the induced, natural, Lie algebra homomorphism (a representation).
In the context of unitary designs, we are interested in the tensor representation~$\tau: \SU(\NN) \ni U \mapsto (U \otimes \bar U)^{\otimes t}$ so $M = \NN^{2t}$, whose induced Lie algebra representation $\tau_*$ is given by
\begin{equation}
    \tau_*\left( \frac P 2 \right) = \frac 1 2 \sum_{j=1}^t 
    (\one_\NN \otimes \one_\NN)^{\otimes (j-1)} 
    \otimes 
    ( P \otimes \one_\NN - \one_\NN \otimes \bar  P)
    \otimes (\one_\NN \otimes \one_\NN)^{\otimes(t-j)} 
    \label{eq:LieAlgRhoStar}
\end{equation}
for all traceless $\NN \times \NN$ matrix~$P$.

We define
\begin{equation}
    \pauli_{\nn} = \{\one_2, \sigma^x, \sigma^y, \sigma^z\}^{\otimes \nn} \setminus \{\one_{2^\nn}\},
\end{equation}
the set of all nonidentity tensor products of Pauli matrices
\begin{equation}
    \sigma^x = \begin{pmatrix} 0 & 1 \\ 1 & 0 \end{pmatrix},\quad
    \sigma^y = \begin{pmatrix} 0 & -\ii \\ \ii & 0 \end{pmatrix},\quad
    \sigma^z = \begin{pmatrix} 1 & 0 \\ 0 & -1 \end{pmatrix}.
\end{equation}
There are $\NN^2 -1 = 4^\nn - 1$ elements,
each of which is called a Pauli operator.
The factor of half in~\cref{eq:LieAlgRhoStar} is immaterial in that equation since $\tau_*$ is $\CC$-linear,
but we will keep it because

\begin{lemma}\label{thm:EigenvaluesOfRepresentedPaulisAreIntegers}
For any Pauli operator~$P \in \pauli_\nn$,
all the eigenvalues of~$\tau_*(P/2)$ are integers in $[-t,t]$,
and any integer in that range appears as an eigenvalue of~$\tau_*(P/2)$.
\end{lemma}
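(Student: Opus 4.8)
The plan is to diagonalize $\tau_*(P/2)$ simultaneously by exploiting that, up to the overall factor $\tfrac12$, it is a sum of commuting Hermitian involutions, one pair for each of the $t$ blocks.

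\textbf{Step 1 (decompose into block-local terms).} For $j\in\{1,\dots,t\}$ set
\begin{align*}
Q_j &= (\one_\NN\otimes\one_\NN)^{\otimes(j-1)}\otimes(P\otimes\one_\NN)\otimes(\one_\NN\otimes\one_\NN)^{\otimes(t-j)},\\
R_j &= (\one_\NN\otimes\one_\NN)^{\otimes(j-1)}\otimes(\one_\NN\otimes\bar P)\otimes(\one_\NN\otimes\one_\NN)^{\otimes(t-j)},
\end{align*}
so that $\tau_*(P/2)=\tfrac12\sum_{j=1}^t(Q_j-R_j)$ by distributing the tensor factors in \cref{eq:LieAlgRhoStar}.

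\textbf{Step 2 (the $2t$ operators pairwise commute and square to the identity).} A nonidentity tensor-product Pauli $P$ satisfies $P=P^\dagger$ and $P^2=\one_\NN$, and $\bar P=\pm P$ because $\bar\sigma^x=\sigma^x$, $\bar\sigma^y=-\sigma^y$, $\bar\sigma^z=\sigma^z$; hence $\bar P$ is again a Hermitian involution. Thus each $Q_j,R_j$ is Hermitian with square $\one_{\NN^{2t}}$. Operators carried by distinct blocks commute trivially, while within block $j$ the operator $Q_j$ acts only on the first $\CC^\NN$ tensor factor and $R_j$ only on the second, so $[Q_j,R_j]=0$. Therefore $\{Q_1,\dots,Q_t,R_1,\dots,R_t\}$ is a commuting family of involutions, simultaneously diagonalizable with joint eigenvalues ranging over $\{+1,-1\}^{2t}$.

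\textbf{Step 3 (read off the spectrum, including surjectivity).} On a joint eigenvector with $Q_j$-eigenvalue $q_j$ and $R_j$-eigenvalue $r_j$, the eigenvalue of $\tau_*(P/2)$ is $\tfrac12\sum_{j=1}^t(q_j-r_j)$; since $\tfrac12(q_j-r_j)\in\{-1,0,1\}$, this lies in $\ZZ\cap[-t,t]$. For the converse, observe that a nonidentity Pauli is traceless, so $P$ — and hence $\bar P=\pm P$ — has both $+1$ and $-1$ among its eigenvalues; as $P\otimes\one_\NN$ and $\one_\NN\otimes\bar P$ act on independent factors of block $j$, all four local sign pairs occur, and tensoring over the $t$ blocks shows that every pattern in $\{+1,-1\}^{2t}$ is actually attained. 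Given $m\in\ZZ\cap[-t,t]$, write $m=\sum_j s_j$ with $s_j\in\{-1,0,1\}$ (put $|m|$ of the $s_j$ equal to $\operatorname{sgn}(m)$ and the rest $0$) and choose $(q_j,r_j)\in\{(1,-1),(-1,1),(1,1)\}$ realizing $\tfrac12(q_j-r_j)=s_j$; the corresponding joint eigenvector is an eigenvector of $\tau_*(P/2)$ with eigenvalue $m$.

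\textbf{Anticipated obstacle.} There is no substantive difficulty here; the argument is purely the arithmetic of $\{-1,0,1\}$ once the structure is set up. The only points that require a little care are the bookkeeping in Step 2 — checking that $\bar P$ is a Hermitian involution and that the $Q_j$ and $R_j$ commute both across and within blocks — and, for the "every integer occurs" direction, the elementary but essential use of $P\in\pauli_\nn$ being nonidentity and therefore traceless, which is what guarantees that all four sign combinations on each block are populated.
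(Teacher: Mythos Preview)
Your proof is correct and follows essentially the same route as the paper: both observe that the $2t$ summands in~\cref{eq:LieAlgRhoStar} are commuting Hermitian involutions, simultaneously diagonalize, and read off the spectrum as $\tfrac12\sum_j(q_j-r_j)\in\ZZ\cap[-t,t]$. The paper is more terse --- it reduces to the special case $P=\sigma^z\otimes\one_{2^{\nn-1}}$ (anticipating~\cref{thm:EqualSpectrum}) and leaves surjectivity implicit --- whereas you work directly with arbitrary~$P$ and spell out the surjectivity argument explicitly; the substance is the same.
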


\begin{proof}
    All the summands of~\cref{eq:LieAlgRhoStar} are commuting with each other,
    so they are simultaneously diagonalizable, which amounts to setting $P = \sigma^z \otimes \one_{2^{\nn -1}}$,
    that is a diagonal matrix with $\pm 1$ on the diagonal. The lemma follows.
    Alternatively, we can think of~$\tau_*$ as 
    $
        \tau_* = (\one \oplus \ad)^{\otimes t} = \bigoplus_{k=0}^t \binom{t}{k} \ad^{\otimes k}
    $
    where $\one$ denotes the trivial representation and $\ad$ is the adjoint representation. 
    This is because the tensor product of the defining representation of~$\su(\NN)$ and its dual 
    is a direct sum of the trivial and the adjoint, each of which is irreducible.%
    \footnote{%
        On the trivial representation the representation map is zero,
        and on the adjoint we have~$\ad(P/2)  = \half (P \otimes \one - \one \otimes \bar P) |_{\su(\NN)}$ 
        where the restriction is on the linear space of all (vectorized) traceless $\NN$-by-$\NN$ matrices.
        Almost always, the adjoint representation map is explained by~$\ad(P/2)(X) = \half [P,X]$ for all~$X \in \su(\NN)$.
    }
    The represented operator~$\ad(P/2)$ has eigenvalues~$\pm 1$,
    and therefore $\ad^{\otimes t}$ has integer eigenvalues in~$[-t,t]$.
\end{proof}

We note that all Pauli operators~$P$ are equivalent to one another in any representation:

\begin{lemma}\label{thm:EqualSpectrum}
    For any representation~$\rho_*$ of~$\su(\NN)$ that is possibly reducible,
    and any nonidentity hermitian Pauli operator~$P$,
    the eigenvalue spectrum of the represented operator~$\rho_*(P)$ is independent of~$P$.
    In particular,
    the eigenvalue spectrum of~$\rho_*(P)$ is inversion symmetric about the origin; 
    that is, $\rho_*(P)$ and $-\rho_*(P)$ have the same spectrum.
\end{lemma}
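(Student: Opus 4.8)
The plan is to deduce the whole statement from the equivariance of a representation under the adjoint action, combined with the elementary fact that all nonidentity hermitian Pauli operators (and each such $P$ together with $-P$) are conjugate to one another inside $\SU(\NN)$.

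First I would record the equivariance. Since $\SU(\NN)$ is simply connected, $\rho_*$ integrates to a representation $\rho$ of $\SU(\NN)$ fitting into the diagram~\eqref{eq:LieAlgGroup}; differentiating $\rho\bigl(W\exp(sX)W^{-1}\bigr) = \rho(W)\,\rho(\exp(sX))\,\rho(W)^{-1}$ at $s=0$ gives $\rho_*(WXW^{-1}) = \rho(W)\,\rho_*(X)\,\rho(W)^{-1}$ for all $W\in\SU(\NN)$ and $X\in\su(\NN)$, and this extends $\CC$-linearly to the complexification, so in particular $\rho_*(WPW^{\dagger}) = \rho(W)\,\rho_*(P)\,\rho(W)^{-1}$ for a hermitian Pauli $P$ and $W\in\SU(\NN)$. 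Consequently $\rho_*(P)$ and $\rho_*(Q)$ are similar, hence isospectral, whenever $Q = WPW^{\dagger}$ for some $W\in\SU(\NN)$. I would also note that conjugation ignores an overall phase and that any $U\in\mathsf U(\NN)$ can be rescaled by an $\NN$-th root of $1/\det U$ to lie in $\SU(\NN)$; so it suffices to exhibit $W\in\mathsf U(\NN)$ realizing the required conjugation.

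Next I would handle the two conjugations. For the inversion symmetry: because $P\neq\one$, some tensor factor of $P$ is a nonidentity single-qubit Pauli, and choosing the single-qubit Pauli at that site that anticommutes with it yields $R\in\pauli_\nn$ with $RP=-PR$; then $R$ is hermitian, $R^2=\one$, and $RPR^{\dagger}=-P$, which (after fixing the determinant by a phase) gives $W\in\SU(\NN)$ with $WPW^{\dagger}=-P$, so $-\rho_*(P)=\rho_*(-P)$ is similar to $\rho_*(P)$. The same $R$ also gives $\Tr P=\Tr(RPR^{\dagger})=-\Tr P$, hence $\Tr P=0$. For $P$-independence: every nonidentity hermitian Pauli satisfies $P=P^{\dagger}$, $P^2=\one$, $\Tr P=0$, so its spectrum is $+1$ and $-1$, each with multiplicity $\NN/2$; two hermitian matrices with equal spectra are unitarily equivalent, so any nonidentity hermitian Paulis $P,Q$ satisfy $Q=WPW^{\dagger}$ for some $W\in\mathsf U(\NN)$, hence (after the phase fix) for some $W\in\SU(\NN)$. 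Combining with the equivariance, $\rho_*(P)$ and $\rho_*(Q)$ have the same spectrum, and that spectrum is inversion-symmetric about the origin.

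There is no substantial obstacle here; the only points requiring a little care are the passage between $\mathsf U(\NN)$ and $\SU(\NN)$ (harmless, as noted) and writing down the differentiated form of the equivariance on the complexified algebra. One could alternatively replace the ``same spectrum $\Rightarrow$ unitarily equivalent'' step by transitivity of the Clifford group on nonidentity Paulis modulo signs, but the spectral argument is shorter and self-contained.
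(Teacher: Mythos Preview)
Your proposal is correct and follows essentially the same approach as the paper: show that any two nonidentity hermitian Paulis are conjugate in $\SU(\NN)$, transfer this through the representation via the differentiated equivariance $\rho_*(WPW^\dagger)=\rho(W)\rho_*(P)\rho(W)^{-1}$, and handle the inversion symmetry by picking an anticommuting Pauli. You are slightly more explicit than the paper on two points (integrating $\rho_*$ to a group representation via simple connectedness of $\SU(\NN)$, and the harmless passage from $\mathsf U(\NN)$ to $\SU(\NN)$ by a scalar phase), but the core argument is identical.
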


\begin{proof}
    Any two nonidentity Pauli operator~$P,Q$ on $\nn$ qubits are congruent: 
    $P = U Q U^\dag$ for some~$U \in \SU(\NN)$.
    Exponentiating with $\theta \in \RR$ we have $e^{\ii \theta P} = U e^{\ii \theta Q} U^\dag$,
    and thus $\rho(e^{\ii \theta P} ) = \rho(U) \rho(e^{\ii \theta Q}) \rho(U)^\dag$.
    By~\cref{eq:LieAlgGroup} this translates to $e^{\ii \theta \rho_*(P)} = \rho(U) e^{\ii \theta \rho_*(Q)} \rho(U)^\dag$.
    Differentiating with respect to~$\theta$, we finally have $\rho_*(P) = \rho(U) \rho_*(Q) \rho(U)^\dag$.
    The last claim is because $P$ and $-P$ are congruent by some anticommuting Pauli operator.
\end{proof}

Now, we can consider probability distributions on~$\su(\NN)$ and their induced distributions on~$\SU(\NN)$.
For example, to assess a unitary design we have to analyze the distribution on~$\mathsf{U}(\NN^{2t})$ 
for various values of~$t$ induced by the tensor representation~$\tau$.
For a probability distribution~$\mu$ on the top left of the diagram~\cref{eq:LieAlgGroup},
we have corresponding distributions on all three other entries.
For any distribution~$\mu$ on~$\su(\NN)$, 
we denote an average with respect to~$\mu$ by~$\int_{\su(\NN)} \cdots \mu(X)\rd X$
where $X$ denotes any hermitian operator.\footnote{Another prevalent notation is $\int \cdots \rd \mu(X)$.}
In other words, $X \mapsto \mu(X)$ is the probability density ``function.''
For any distribution~$\mu$ on~$\su(\NN)$ and any integer~$t \ge 1$
we consider a linear operator on~$(\CC^\NN)^{\otimes 2t}$
\begin{equation}
    \calC_{\mu,t} = \int_{\su(\NN)} \exp(\ii X)^{\otimes t} \otimes \exp(-\ii \bar X)^{\otimes t} \mu(X) \rd X. \label{eq:calC}
\end{equation}
An obvious lemma will be useful:
\begin{lemma}\label{thm:subrep}
    If $\phi: \SU(\NN) \to \Aut(V)$ for $V \subseteq (\CC^\NN)^{\otimes 2t}$ is a subrepresentation 
    of~$\tau: U \mapsto (U \otimes \bar U)^{\otimes t}$,
    then
    \begin{equation}
        \calC_{\mu,t}|_V = \int_{\su(\NN)} \exp(\ii \phi_*(X)) \mu(X) \rd X. 
    \end{equation}
\end{lemma}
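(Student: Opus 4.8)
The plan is to unwind the definition of $\calC_{\mu,t}$ in \cref{eq:calC} and then push the restriction to $V$ through both the exponential map and the integral; this is purely a bookkeeping lemma. First I would observe that the integrand in \cref{eq:calC} is nothing but the tensor representation evaluated at $\exp(\ii X)$: since $\overline{\exp(\ii X)} = \exp(-\ii\bar X)$ for hermitian $X$, we have $\exp(\ii X)^{\otimes t}\otimes\exp(-\ii\bar X)^{\otimes t} = \tau(\exp(\ii X))$ once we fix the ordering of the $2t$ tensor factors (the ``grouped'' ordering of \cref{eq:calC} and the ``interleaved'' ordering of \cref{eq:LieAlgRhoStar} differ by a single $U$-independent permutation of factors, which is an intertwiner and may be absorbed once and for all). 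Hence $\calC_{\mu,t} = \int_{\su(\NN)} \tau(\exp(\ii X))\,\mu(X)\rd X$.

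Next I would invoke the commutative diagram \cref{eq:LieAlgGroup} for the representation $\tau$. With the paper's convention that a distribution on $\su(\NN)$ lives on hermitian operators with an $\ii$ understood, $\ii X$ is antihermitian, $\exp(\ii X)\in\SU(\NN)$, and the diagram gives $\tau(\exp(\ii X)) = \exp(\tau_*(\ii X)) = \exp(\ii\,\tau_*(X))$, the last step by $\CC$-linearity of $\tau_*$. Now, because $\phi$ on $V$ is a subrepresentation of $\tau$, the subspace $V$ is invariant under $\tau(U)$ for every $U$, hence under each one-parameter subgroup $\exp(\ii\theta X)$, hence under its infinitesimal generator $\tau_*(X)$; and by definition of the induced subrepresentation $\tau_*(X)|_V = \phi_*(X)$. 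Since $V$ is invariant under the operator $\tau_*(X)$, restriction commutes termwise with the exponential series, so $\exp(\ii\,\tau_*(X))|_V = \exp\!\big(\ii\,\tau_*(X)|_V\big) = \exp(\ii\,\phi_*(X))$ (equivalently, apply \cref{eq:LieAlgGroup} directly to $\phi$).

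Finally I would use that restriction to the fixed finite-dimensional subspace $V$ is a bounded linear map, so it commutes with the integral over $\su(\NN)$: $\calC_{\mu,t}|_V = \int_{\su(\NN)} \tau(\exp(\ii X))|_V\,\mu(X)\rd X = \int_{\su(\NN)} \exp(\ii\,\phi_*(X))\,\mu(X)\rd X$, which is the claim. I do not expect any genuine obstacle here; the only two points that each deserve a sentence of care are the identification of the integrand with $\tau\circ\exp$ (including the harmless permutation of tensor factors) and the fact that passing to a $\tau_*(X)$-invariant subspace commutes both with $\exp(\ii\,\cdot\,)$ and with integration against $\mu$.
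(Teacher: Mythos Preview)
Your proposal is correct and follows essentially the same route as the paper, which proves the lemma in two sentences by invoking the commutative diagram~\cref{eq:LieAlgGroup} directly for the Lie group representation~$\phi$ (noting that the subrepresentation hypothesis is in fact irrelevant). Your additional remarks on the tensor-factor ordering and on restriction commuting with $\exp$ and with the integral are all fine and simply make explicit what the paper leaves implicit.
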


\begin{proof}
    $\phi$ is a Lie group representation, so the claim follows from the commutative diagram~\cref{eq:LieAlgGroup}.
    The assumption that $\phi$ is a subrepresentation of~$\rho$ is irrelevant.
\end{proof}

The Haar probability distribution on~$\SU(\NN)$, which we denote as~$\rd U$,
does give a distribution on~$\su(\NN)$ using the fact that the exponential map is one-to-one
on the open ball of radius~$\pi$ at the origin in the Schatten $\infty$-norm and is almost onto from that restricted domain,
but this is not very enlightening.
However, relevant averages can be succinctly described in terms of subrepresentations.

\begin{proposition}\label{thm:HaarProjectionTrivialRep}
    For any finite dimensional unitary representation~$\rho$ of a compact Lie group~$G$, 
    the integral~$\int_{G} \rho(U) \rd U$ with respect to the Haar measure
    is the orthogonal projector onto the trivial subrepresentation subspace of~$\rho$.
    In particular, for any integer $t \ge 1$ the Haar average
    \begin{equation}
    \calH_t = \int_{\SU(\NN)} (U \otimes \bar U)^{\otimes t} \rd U   
    \end{equation}
    is the orthogonal projector
    onto the trivial subrepresentation subspace of $\tau: U \mapsto (U \otimes \bar U)^{\otimes t}$
    within~$(\CC^{\NN})^{\otimes 2t}$.
\end{proposition}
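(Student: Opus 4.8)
The plan is to write $\Pi := \int_G \rho(U)\,\rd U$ and check directly that it is an orthogonal projector whose range is the invariant subspace $V^G := \{\, v : \rho(U) v = v \text{ for all } U \in G \,\}$, which is by definition the trivial subrepresentation subspace. First I would exploit the bi-invariance of the (normalized) Haar measure on a compact group. For any fixed $W \in G$, left-invariance gives
\[
\rho(W)\,\Pi = \int_G \rho(WU)\,\rd U = \int_G \rho(U)\,\rd U = \Pi,
\]
and right-invariance gives $\Pi\,\rho(W) = \Pi$ in the same way. Integrating the second identity over $W$ yields $\Pi^2 = \int_G \Pi\,\rho(W)\,\rd W = \int_G \Pi\,\rd W = \Pi$, so $\Pi$ is idempotent. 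Since $\rho$ is unitary, $\rho(U)^{\dagger} = \rho(U^{-1})$, and because compact groups are unimodular the Haar measure is invariant under $U \mapsto U^{-1}$; hence $\Pi^{\dagger} = \int_G \rho(U^{-1})\,\rd U = \Pi$. Thus $\Pi$ is a self-adjoint idempotent, i.e.\ an orthogonal projector.

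Next I would identify its range. The absorption identity $\rho(W)\Pi = \Pi$ shows every vector of $\im \Pi$ is fixed by every $\rho(W)$, so $\im \Pi \subseteq V^G$. Conversely, if $v \in V^G$ then $\Pi v = \int_G \rho(U) v\,\rd U = \int_G v\,\rd U = v$, so $v \in \im \Pi$; therefore $\im \Pi = V^G$. Since $V^G$ is exactly the isotypic component of the trivial representation inside $\rho$ — that is, the trivial subrepresentation subspace — this proves the general statement. The ``in particular'' claim then follows by taking $G = \SU(\NN)$, which is compact, and $\rho = \tau : U \mapsto (U \otimes \bar U)^{\otimes t}$, a finite-dimensional unitary representation on $(\CC^\NN)^{\otimes 2t}$.

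I do not expect any genuine obstacle here; the argument is entirely formal. The only points that deserve care are the invariance properties of the Haar measure that are invoked (left-, right-, and inversion-invariance, the last of which uses unimodularity of compact groups, together with normalized total mass one), and the routine observation that ``trivial subrepresentation subspace'' is literally the fixed-point set $V^G$, so that no appeal to complete reducibility is needed. One could equally well cite this as a standard consequence of the Peter--Weyl circle of results, but the self-contained computation above is short enough to include in full.
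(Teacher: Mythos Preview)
Your proof is correct and follows essentially the same approach as the paper: both verify idempotence via translation invariance of the Haar measure, self-adjointness via inversion invariance, and identify the range with the fixed-point subspace $V^G$ directly. The only cosmetic difference is that you derive $\Pi^2=\Pi$ by integrating the right-invariance identity $\Pi\rho(W)=\Pi$, whereas the paper integrates the left-invariance identity $\rho(V)\Pi=\Pi$; these are interchangeable.
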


Note that in~\cref{thm:main1} we denoted the Haar random mixed unitary channel by~$\calH_t$,
but here we overload the notation to mean its linearized map.
The representation~$\rho$ does not have to be finite dimensional,
but we do not discuss any infinite dimensional representations in this paper.

\begin{proof}
    Let $\calH=\int_G \rho(U) \rd U$.
    Since the Haar measure is left invariant, 
    the represented unitary~$\rho(V)$ for any~$V \in G$ acts by the identity on the image of~$\calH$.
    It follows that~$\calH^2 = \calH$.
    Since $\rho$ is a unitary representation, $\calH^\dag = \int_G \rho(U^{-1}) \rd U$.
    Since $U \mapsto U^{-1}$ is a measure-preserving homeomorphism of~$G$ onto itself, 
    $\calH^\dag = \calH$. 
    Let $\mathcal V$ be the representation space.
    The trivial representation subspace is $\mathcal T = \{ v \in \mathcal V : \rho(U) v = v, \ \forall U \in G\}$.
    If $v = \calH w$ for some $w \in \mathcal V$,
    then $\rho(U) v = \rho(U) \calH w = \calH w = v$, and hence $v \in \mathcal T$.
    So, the image of $\calH$ is contained in the trivial representation.
    If $v \in \mathcal T$, then $\calH v = v$, showing that $v$ is in the image of~$\calH$.
\end{proof}

\section{Random Pauli rotations}

Now we consider more concrete distributions on~$\su(\NN)$ where $\NN = 2^\nn$ for an integer~$\nn \ge 1$.

\begin{definition}
For any~$P \in \pauli_\nn$
we define a distribution, called a \emph{random Pauli rotation by~$P$},
as the uniform probability measure on~$\{ \ii \theta P /2  \in \su(\NN;\RR) \,|\, \theta \in (-\pi,\pi) \subset \RR \}$.
A \emph{random Pauli rotation} with respect to a discrete probability distribution~$\{(P,\Pr[P]) | P \in \pauli_\nn \}$ 
on~$\pauli_\nn$
is the probabilistic mixture~$\sum_P \Pr[P] \mu_P$ of random Pauli rotations~$\mu_P$ by~$P$. 
\end{definition}

We will only use the uniform distribution over~$\pauli_\nn$,
but it may be helpful to proceed with a general distribution on~$\pauli_\nn$.

\begin{lemma}\label{thm:ThetaAverage}
For a random Pauli rotation~$\mu = \sum_P \Pr[P] \mu_P$,
the average operator~$\calC_{\mu,t}$ in~\cref{eq:calC} restricted to a subrepresentation~$\phi_* : \su(\NN) \to \Aut(V)$ of~$\su(\NN)$ 
within the tensor representation~$\tau : U \mapsto (U \otimes \bar U)^{\otimes t}$,
simplifies as
\begin{equation}
    \calC_{\mu,t}|_V = \int_{\su(\NN)} \exp(\ii \phi_*(X)) \mu(X) \rd X 
    = \sum_P \Pr[P] K(\phi_*(P/2))
\end{equation}
where $K(H)$ for any hermitian operator~$H$ is the orthogonal projector onto~$\ker H$, the eigenspace of eigenvalue zero.
\end{lemma}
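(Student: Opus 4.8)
The plan is to read the first equality straight off \cref{thm:subrep}, and then to evaluate the integral by splitting it according to the mixture. Writing $\mu = \sum_P \Pr[P]\,\mu_P$ and using linearity of the integral together with the definition of $\mu_P$ as the law of $\ii\theta P/2$ with $\theta$ uniform on $(-\pi,\pi)$, I get
\begin{equation}
    \int_{\su(\NN)}\exp\bigl(\ii\phi_*(X)\bigr)\,\mu(X)\,\rd X
    = \sum_P \Pr[P]\,\frac{1}{2\pi}\int_{-\pi}^{\pi}\exp\bigl(\ii\theta\,\phi_*(P/2)\bigr)\,\rd\theta.
\end{equation}
So the whole statement reduces to the single-Pauli claim that $\tfrac{1}{2\pi}\int_{-\pi}^{\pi}\exp(\ii\theta\,\phi_*(P/2))\,\rd\theta = K(\phi_*(P/2))$ for each fixed $P \in \pauli_\nn$.

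Next I would pin down the spectrum of $\phi_*(P/2)$. Since $\phi$ is a subrepresentation of $\tau$, the space $V$ is invariant under every $\tau(U)$ and hence, via the commutative diagram~\cref{eq:LieAlgGroup}, under every $\tau_*(X)$; in particular $\phi_*(P/2)$ is just $\tau_*(P/2)$ restricted to the invariant subspace $V$. As $\tau_*(P/2)$ is hermitian, hence diagonalizable, its restriction to an invariant subspace is again diagonalizable with spectrum a subset of the original spectrum, which by \cref{thm:EigenvaluesOfRepresentedPaulisAreIntegers} consists of integers in $[-t,t]$. Therefore $\phi_*(P/2)$ is hermitian with integer eigenvalues in $[-t,t]$.

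Finally I would decompose $\phi_*(P/2) = \sum_{n\in\ZZ\cap[-t,t]} n\,\Pi_n$ into orthogonal spectral projectors (some possibly zero) and use $\tfrac{1}{2\pi}\int_{-\pi}^{\pi} e^{\ii\theta n}\,\rd\theta = \mathbf{1}[n=0]$ for integer $n$ to conclude $\tfrac{1}{2\pi}\int_{-\pi}^{\pi}\exp(\ii\theta\,\phi_*(P/2))\,\rd\theta = \Pi_0 = K(\phi_*(P/2))$; summing over $P$ with weights $\Pr[P]$ then finishes the proof. There is no real obstacle here — the single point that needs attention is precisely that the $\theta$-average annihilates every nonzero eigenvalue, which is exactly the consequence of the spectrum being \emph{integral} (and is why the factor of $1/2$ in $P/2$, together with the interval length $2\pi$, matters; a half-integer spectrum would instead force the interval $(-2\pi,2\pi)$ appearing in~\cref{eq:univeral_spectral_gap}).
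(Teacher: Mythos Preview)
Your proposal is correct and follows essentially the same approach as the paper's own proof: invoke \cref{thm:subrep} for the first equality, reduce by linearity to a single $P$, use \cref{thm:EigenvaluesOfRepresentedPaulisAreIntegers} together with the fact that $\phi_*$ is a restriction of $\tau_*$ to conclude the spectrum of $\phi_*(P/2)$ is integral, and then observe that the $\theta$-average over $(-\pi,\pi)$ kills all nonzero integer eigenvalues. Your write-up is in fact a bit more explicit than the paper's (spelling out the spectral decomposition and the orthogonality integral), but the logic is identical.
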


\begin{proof}
The first equality is noted in~\cref{thm:subrep}.
For the second equality,
it suffices to evaluate $\calC_{\mu_P,t}|_V$ for a random Pauli rotation~$\mu_P$ by~$P$.
We have observed in~\cref{thm:EigenvaluesOfRepresentedPaulisAreIntegers}
that all the eigenvalues of~$\tau_*(P/2)$ are integers.
A subrepresentation of~$\tau_*$ is nothing but a block-diagonal piece of~$\tau_*$ 
after a unitary basis change on~$(\CC^\NN)^{\otimes 2t}$,
so the eigenvalues of~$\phi_*(P/2)$ can only be a subset of those of~$\tau_*(P/2)$.
Hence, the average of~$e^{\ii \theta \phi_*(P/2)}$ over~$\theta$ eliminates all eigenspaces of nonzero eigenvalues.
\end{proof}

\begin{remark}
    A Pauli operator~$P$, a tensor product of hermitian Pauli matrices, is a traceless unitary of eigenvalues~$\pm 1$.
    Observe that $\ii P$ is a member of~$\SU(\NN)$ and also of~$\su(\NN;\RR)$ where~$\NN$ is a power of~$2$.
    For example, $\ii \sigma^z \in \SU(2) \cap \su(2;\RR)$
    and $\ii \sigma^z \otimes \sigma^z \in \SU(4) \cap \su(4;\RR)$.
    If $\rho$ is a Lie group representation map, and $\rho_*$ is the derived Lie algebra representation map,
    then we may consider $\rho(\ii P)$ and $\rho_*(\ii P)$
    both of which are some matrices of the same dimension.
    Generally, $\rho(\ii P) \neq \rho_*(\ii P)$.
    However, since $P^2 = \one$, we instead have $\exp(\ii \pi P / 2) = \cos(\pi/2)\one + \ii \sin(\pi/2) P = \ii P$
    and therefore     by~\cref{eq:LieAlgGroup} we have
    \begin{align}
        \exp(\ii \pi \rho_*(P)/2) = \exp(\rho_*(\ii \pi P/2)) = \rho (\exp(\ii\pi P/2)) = \rho(\ii P).
    \end{align}
\end{remark}

\begin{proposition}\label{thm:KernelSum}
    For any random Pauli rotation with respect to~$\{ (P, \Pr[P]) \}$ at order~$t$, we have
    \begin{equation}
        \norm{\calC_{\mu,t} - \calH_t} = \max_{\phi}\norm*{\sum_P \Pr[P] K(\phi_*(P/2))}
    \end{equation}
    where $\phi$ ranges over all irreducible nontrivial subrepresentations of
    the tensor representation~$\tau: U \mapsto (U \otimes \bar U)^{\otimes t}$.
\end{proposition}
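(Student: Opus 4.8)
The plan is to diagonalize both operators block-by-block along an irreducible decomposition of the tensor representation, so that the operator norm of the difference collapses to a maximum over blocks, and then to identify those blocks with the irreducible nontrivial subrepresentations.

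First I would record that $\calC_{\mu,t}$ is itself an average of represented group elements. Since $X \mapsto \exp(\ii X)$ maps the traceless hermitian operators into $\SU(\NN)$ and pushes $\mu$ forward to a probability measure $\tilde\mu$ on $\SU(\NN)$, the definition \cref{eq:calC} reads $\calC_{\mu,t} = \int_{\SU(\NN)} \tau(g)\,\rd\tilde\mu(g)$ with $\tau : U \mapsto (U\otimes\bar U)^{\otimes t}$, using $\overline{\exp(\ii X)} = \exp(-\ii\bar X)$. Hence $\calC_{\mu,t}$ leaves invariant every subrepresentation of $\tau$, and the same is true of $\calH_t = \int_{\SU(\NN)} \tau(g)\,\rd g$ by \cref{thm:HaarProjectionTrivialRep}. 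By complete reducibility of finite-dimensional representations of $\SU(\NN)$, I would fix a decomposition $(\CC^\NN)^{\otimes 2t} = \bigoplus_i V_i$ into irreducible subrepresentations $\phi^{(i)}$ of $\tau$, arranged so that each copy of the trivial representation is its own summand. Both $\calC_{\mu,t}$ and $\calH_t$ are then block-diagonal with respect to this decomposition, and since the largest singular value of a block-diagonal operator is the maximum of the largest singular values of its blocks, $\norm{\calC_{\mu,t} - \calH_t} = \max_i \norm{\calC_{\mu,t}|_{V_i} - \calH_t|_{V_i}}$.

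Next I would evaluate the blocks. On a trivial summand $V_i$, \cref{thm:HaarProjectionTrivialRep} gives $\calH_t|_{V_i} = \one_{V_i}$, while $\phi^{(i)}_*(P/2) = 0$ for all $P$ forces $K(\phi^{(i)}_*(P/2)) = \one_{V_i}$ and hence $\calC_{\mu,t}|_{V_i} = \one_{V_i}$ by \cref{thm:ThetaAverage} (equivalently \cref{thm:subrep}); these blocks contribute $0$. On a nontrivial irreducible summand $V_i$, the intersection of $V_i$ with the trivial subrepresentation subspace is a $\tau$-invariant subspace of $V_i$, hence $\{0\}$ or $V_i$, and it cannot be $V_i$ since $\phi^{(i)}$ is not trivial; combined with the block-diagonality of $\calH_t$ and \cref{thm:HaarProjectionTrivialRep} this yields $\calH_t|_{V_i} = 0$, whereas \cref{thm:ThetaAverage} gives $\calC_{\mu,t}|_{V_i} = \sum_P \Pr[P]\, K(\phi^{(i)}_*(P/2))$. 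Therefore $\norm{\calC_{\mu,t} - \calH_t} = \max_{i:\ \phi^{(i)}\ \mathrm{nontrivial}} \norm{\sum_P \Pr[P]\, K(\phi^{(i)}_*(P/2))}$.

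Finally I would upgrade the maximum over the particular summands in one fixed decomposition to the maximum over \emph{all} irreducible nontrivial subrepresentations $\phi$ of $\tau$ asserted in the statement. The point is that $\norm{\sum_P \Pr[P]\, K(\phi_*(P/2))}$ depends only on the isomorphism class of $\phi$: if $S$ is a unitary intertwiner identifying $\phi$ with $\phi'$, then $\phi'_*(P/2) = S\,\phi_*(P/2)\,S^{\dag}$, so $K(\phi'_*(P/2)) = S\,K(\phi_*(P/2))\,S^{\dag}$ and the norm of the weighted sum is unchanged. Since every irreducible nontrivial subrepresentation of $\tau$ is isomorphic to one of the nontrivial $V_i$, the two maxima coincide. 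The only step requiring any care is this last piece of bookkeeping — checking that enlarging the index set from one finite decomposition to the whole (a priori vast) collection of subrepresentations does not change the maximum — together with the verification that $\calH_t$ genuinely vanishes on each nontrivial block; neither poses a real obstacle.
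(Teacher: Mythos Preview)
Your proof is correct and is exactly the argument the paper has in mind: the paper's own proof is the single line ``Immediate from~\cref{thm:ThetaAverage} and~\cref{thm:HaarProjectionTrivialRep},'' and what you have written is a careful unpacking of that immediacy via block-diagonalization over an irreducible decomposition of~$\tau$. Your extra bookkeeping (that the block norm depends only on the isomorphism class, so the maximum over one decomposition agrees with the maximum over all irreducible nontrivial subrepresentations) is a nice point to make explicit but introduces nothing new.
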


\begin{proof}
Immediate from~\cref{thm:ThetaAverage} and~\cref{thm:HaarProjectionTrivialRep}.
\end{proof}

It is known~\cite[Lemma~3.7]{Harrow2009random} that the spectral gap, $1-\norm{\calC_{\mu,t} - \calH_t}$,
is nonzero positive
if~$\{ (P, \Pr[P]) \}$ induces a ``densely generating'' distribution on~$\SU(\NN)$.

The motivation for us to consider the quantum circuit of random Pauli rotations 
is its simple implementation:
\begin{proposition}
    Suppose that for an $\nn$-qubit system,
    $\mathrm{CNOT}$ can be applied only across a set of unordered pairs of qubits.
    This defines an undirected simple graph (``connectivity graph'') over qubits, which we assume is connected.
    For any $P\in \pauli_{\nn}$ and $\theta \in \RR$, a unitary $e^{\ii \frac{\theta}{2} P}$ 
    can be implemented using (1) one $1$-qubit Pauli $X$ rotation $e^{\ii \frac{\theta}{2} \sigma^x}$, 
    (2) at most $2\nn$ $1$-qubit Hadamard and Phase gates, 
    and (3) at most $2\nn-2$ $\mathrm{CNOT}$ and $\mathrm{SWAP}$ gates.
\end{proposition}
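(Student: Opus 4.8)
The plan is to reduce an arbitrary Pauli rotation to a single-qubit $X$-rotation by conjugating with Clifford gates, following the standard ``CNOT ladder'' construction but being careful about the restricted connectivity. First I would write $P = c\, \sigma^{p_1} \otimes \sigma^{p_2} \otimes \cdots \otimes \sigma^{p_\nn}$ with each $p_j \in \{0,x,y,z\}$ (where $\sigma^0 = \one_2$) and a global phase $c \in \{\pm 1, \pm \ii\}$ that can be ignored since it does not affect $e^{\ii\frac\theta2 P}$ up to an irrelevant overall phase, or absorbed since $P$ is hermitian so $c$ is real. Let $S = \{ j : p_j \neq 0\}$ be the support; if $S = \emptyset$ then $P = \one$, excluded, so $|S| \ge 1$. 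For each $j \in S$ with $p_j = y$, apply the $1$-qubit gate that conjugates $\sigma^y$ to $\sigma^z$ (a Phase gate followed by Hadamard, or a single $\sqrt{X}$-type rotation), and for each $j \in S$ with $p_j = x$, apply a Hadamard; this uses at most $2|S| \le 2\nn$ single-qubit Hadamard and Phase gates and transforms $P$ (by conjugation) to $c\bigotimes_{j \in S}\sigma^z$, a product of $\sigma^z$'s on the qubits in $S$.

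Next I would collapse the $Z$-string to a single qubit using a CNOT ladder. The key identity is $\mathrm{CNOT}_{a\to b}\,(\sigma^z_a \otimes \sigma^z_b)\,\mathrm{CNOT}_{a\to b} = \sigma^z_a \otimes \one_b$, i.e.\ a CNOT ``merges'' two $Z$'s onto the control. To merge all of $S$ into one designated qubit $r \in S$ I need a spanning tree of $S$ inside the connectivity graph; since the connectivity graph is connected on all $\nn$ qubits, there is a subtree $T$ of it spanning $S$ (take the union of tree-paths between the elements of $S$ in a global spanning tree) with at most $\nn - 1$ edges. Root $T$ at $r$, and process the non-root vertices of $T$ in an order from leaves inward (reverse BFS/DFS order): for each such vertex $v$ with parent $u$, apply $\mathrm{CNOT}_{v \to u}$, which removes the $\sigma^z$ from $v$ and leaves the parity accumulated toward the root. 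One subtlety: the CNOT identity above merges the control's $Z$ onto itself only if both endpoints currently carry a $Z$; processing leaves-first guarantees that when we handle edge $(v,u)$, vertex $v$ carries exactly one $\sigma^z$ (all its children already cleared) and $u$ carries a $\sigma^z$ iff $u \in S$ — if $u \notin S$ we instead need $\mathrm{CNOT}_{u\to v}$ which copies $v$'s $Z$ up, i.e.\ $\mathrm{CNOT}_{u\to v}(\one_u \otimes \sigma^z_v)\mathrm{CNOT}_{u\to v} = \sigma^z_u \otimes \sigma^z_v$; either way one CNOT (or, if the connectivity only has the edge in the wrong orientation, one CNOT plus $\mathrm{SWAP}$-conjugation, or simply use the Hadamard-conjugation identity $\mathrm{CNOT}_{a\to b} = (H\otimes H)\,\mathrm{CNOT}_{b\to a}\,(H \otimes H)$, folding those Hadamards into the count) suffices per edge. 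After at most $\nn - 1$ such CNOTs all $Z$'s are concentrated on qubit $r$, so the conjugated operator is $c\,\sigma^z_r$.

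Then $e^{\ii\frac\theta2 P}$ equals, up to global phase, $V^\dag\, e^{\ii \frac{\theta}{2}\sigma^z_r}\, V$ where $V$ is the product of Clifford gates above; and $e^{\ii\frac\theta2 \sigma^z_r}$ is itself a Hadamard-conjugate of $e^{\ii\frac\theta2\sigma^x_r}$, absorbing two more single-qubit gates, which gives us exactly one $1$-qubit $X$-rotation. Finally I would tally: uncomputing $V$ doubles the single-qubit-basis-change count to at most $2\nn$ (the earlier count was $\le 2|S|$, and symmetrically for $V^\dag$; one re-bundles so the total is $\le 2\nn$) and the CNOT ladder and its inverse give at most $2(\nn-1) = 2\nn - 2$ CNOT/SWAP gates. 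I expect the main obstacle to be purely bookkeeping: handling the restricted connectivity cleanly so that the CNOT-ladder really uses a spanning subtree of $S$ with $\le \nn-1$ edges and orienting each CNOT correctly (or paying for a SWAP/Hadamard flip) without blowing past the stated constants — there is no conceptual difficulty, only the need to state the spanning-tree argument precisely and verify the gate counts are exactly as claimed.
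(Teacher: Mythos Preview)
Your approach is essentially the paper's: conjugate by single-qubit Cliffords to a single-letter Pauli string, then use a spanning tree with CNOT/SWAP to collapse the support to one qubit. Two bookkeeping points, however, make your count miss the stated bounds.

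First, converting to the $Z$-basis costs too much. With only Hadamard and Phase gates, $\sigma^y \to \sigma^z$ requires \emph{two} gates (e.g.\ $HS$), so your basis change uses up to $2|S|$ gates for $V$ and another $2|S|$ for $V^\dag$, plus two more Hadamards to turn the final $Z$-rotation into the required $X$-rotation; this can be as large as $4\nn+2$, and there is no ``rebundling'' that brings it down to $2\nn$. The paper instead converts to an $X$-string: $\sigma^y \to \sigma^x$ via a single Phase gate and $\sigma^z \to \sigma^x$ via a single Hadamard, so at most one gate per qubit each way and $2\nn$ total, with no extra conversion at the end.

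Second, your handling of the case ``parent $u$ carries no Pauli'' does not clear the child. The identity $\mathrm{CNOT}_{u\to v}(\one_u\otimes \sigma^z_v)\mathrm{CNOT}_{u\to v}=\sigma^z_u\otimes\sigma^z_v$ that you quote leaves a $\sigma^z$ on~$v$, so after processing the tree the child is not empty. This is not an orientation issue and cannot be fixed by Hadamard-conjugating the CNOT; you genuinely need a SWAP on that edge (which is exactly what the paper does: CNOT when both endpoints carry the Pauli, SWAP when only the child does). Once you switch to the $X$-basis and use SWAP in the empty-parent case, the argument and the exact constants go through as stated.
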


\begin{proof}
    It suffices to find a sequence of gates that maps $e^{\ii \frac{\theta}{2} P}$ to $e^{\ii \frac{\theta}{2} \sigma^x}\otimes \one_2^{\otimes(\nn-1)}$ by conjugation. 
    We first apply Hadamard and Phase gates by conjugation to obtain $e^{\ii \frac{\theta}{2} Q}$ where
    $Q = Q_1\otimes \cdots \otimes Q_\nn$ is a tensor product of~$\sigma^x$'s and~$\one_2$'s.
    In the connectivity graph, we assign each node a binary value corresponding to the support of $Q$,
    {\em i.e.}, $v_i = 1$ if and only if $Q_i=\sigma^x$. 
    Every connected graph has a spanning tree. 
    For each edge $(v_\mathrm{parent}, v_\mathrm{child})$ in the spanning tree such that all the children of $v_\mathrm{child}$ are zeros, if $v_\mathrm{parent}=v_\mathrm{child}=1$, apply a CNOT gate by conjugation on the corresponding two qubits. If $v_\mathrm{parent}=0$ and $v_\mathrm{child}=1$, apply a SWAP gate by conjugation on the corresponding two qubits. Both operations will result in $v_\mathrm{parent}=1$, $v_\mathrm{child}=0$, {\em i.e.}, all the children of $v_\mathrm{parent}$ are zeros. This procedure terminates when the only nonzero node is the root, which corresponds to $e^{\ii \frac{\theta}{2} \sigma^x}\otimes \one_2^{\otimes (n-1)}$. The total number of CNOT and SWAP gates applied is at most twice the number of edges in the spanning tree, which is $2(\nn-1)$.
\end{proof}

\begin{figure}[t]
    \begin{center}
    \begin{equation*}
        e^{\ii \frac{\theta}{2} P} \quad=\quad\begin{quantikz}[slice style=blue] 
        &\slice[style=gray]{}&\targ{} \slice[style=gray]{}&\slice[style=gray]{}&\slice[style=gray]{}&\slice[style=gray]{}&\slice[style=gray]{}&\slice[style=gray]{}&\targ{}\slice[style=gray]{}&&\\ 
        &\gate{H}&\ctrl{-1}&\targ&&&&&\targ{}&\ctrl{-1}&\gate{H}& \\
        &&\targ&&&&&&&\targ{}&&\\
        &\gate{H}&\ctrl{-1}&\ctrl{-2}&\ctrl{1}&\gate{e^{\mathrm{i}\frac{\theta}{2} X}}&\ctrl{1}&\ctrl{-2}&\ctrl{-1}&\gate{H}&\\
        &&\ctrl{1}&\ctrl{2}&\targ{}&&\targ{}&\ctrl{2}&\ctrl{1}&&\\
        &&\targ&&&&&&&\targ{}&&\\
        &&\ctrl{1}&\targ{}&&&&\targ{}&\ctrl{1}&&\\
        &\gate{H}&\targ&&&&&&&\targ{}&\gate{H}&
        \end{quantikz}
    \end{equation*}
        
    \end{center}
    
    \caption{Implementation of $e^{\ii \frac{\theta}{2} P}$ ($P\in \pauli_{\nn}$) by an $\bigO{\log \nn}$ depth circuit. The example corresponds to the Pauli string \texttt{XZXZXXXZ}. Gates between two dashed lines are implemented in parallel.}
    \label{fig:paulirotation}
\end{figure}
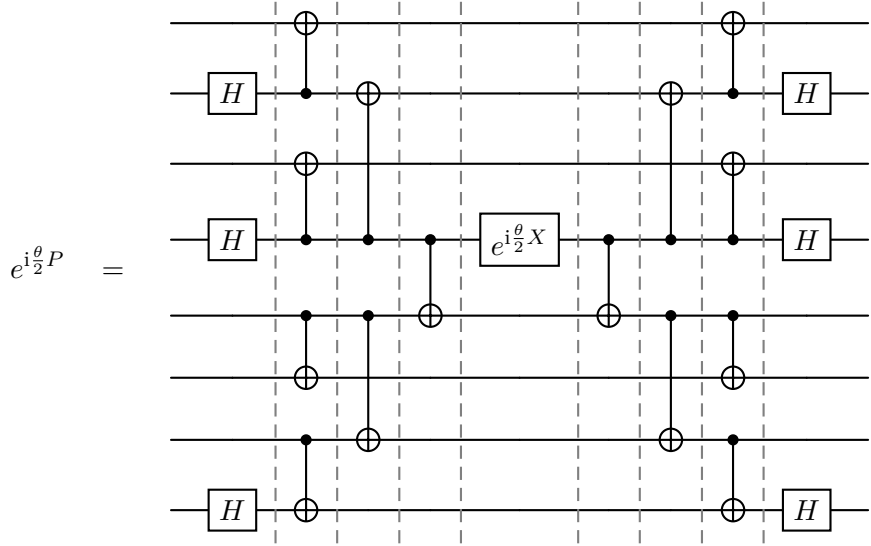

\begin{corollary}\label{thm:singlePauliRotationComplexity}
    With all-to-all connectivity, for any $P\in \pauli_{\nn}$ and $\theta \in \RR$, 
    the unitary~$e^{\ii \frac{\theta}{2} P}$ can be implemented using $\bigO{\nn}$ $1$-qubit and $\mathrm{CNOT}$ gates in circuit depth $\bigO{\log \nn}$. See \cref{fig:paulirotation} for an example.
\end{corollary}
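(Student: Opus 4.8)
The plan is to upgrade the construction of the preceding proposition by choosing the spanning tree to be \emph{balanced}. First I would reduce, exactly as in that proof, to the case $P = Q = Q_1 \otimes \cdots \otimes Q_\nn$ with each $Q_i \in \{\one_2, \sigma^x\}$ by conjugating with single-qubit Hadamard and Phase gates; since these act on distinct qubits they form a circuit of depth $\bigO{1}$ using $\bigO{\nn}$ gates, and undoing them at the end costs the same. Writing $S = \{ i : Q_i = \sigma^x \}$ and $m = \abs{S}$, we have $m \ge 1$ (because $P \neq \one$, and single-qubit Cliffords cannot turn a nonidentity Pauli into the identity), and $Q = \bigotimes_{i \in S} \sigma^x_i$; it remains to implement $e^{\ii \theta Q / 2}$ in depth $\bigO{\log \nn}$ with $\bigO{\nn}$ gates.

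The key observation is that a $\mathrm{CNOT}$ whose control $c$ and target $t$ both lie in $S$ conjugates $Q$ to a Pauli of the same shape supported on one fewer qubit: under $\mathrm{CNOT}_{c \to t}$ we have $\sigma^x_c \mapsto \sigma^x_c \sigma^x_t$ and $\sigma^x_t \mapsto \sigma^x_t$, hence $\sigma^x_c \sigma^x_t \mapsto \sigma^x_c$, while every other $\sigma^x$ in $Q$ is untouched. With all-to-all connectivity I would organize the $m$ qubits of $S$ into a balanced binary tree of depth $\lceil \log_2 m \rceil \le \lceil \log_2 \nn \rceil$ and apply one such $\mathrm{CNOT}$ per tree edge, sweeping from the leaves toward the root. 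Since the edges at a fixed level of the tree touch pairwise disjoint qubits, each level is realized by a single layer of $\mathrm{CNOT}$s; after $\lceil \log_2 m \rceil$ layers and $m - 1 \le \nn - 1$ gates the operator has been conjugated to $\sigma^x$ on the root qubit alone. Applying the single-qubit rotation $e^{\ii \theta \sigma^x / 2}$ there and then running the $\mathrm{CNOT}$ layers in reverse order conjugates back to $e^{\ii \theta Q / 2}$, and reinstating the initial basis change finishes the circuit. Because all-to-all connectivity makes every tree edge directly available, no $\mathrm{SWAP}$ gates are needed, so the totals are $\bigO{\nn}$ single-qubit gates, at most $2\nn$ $\mathrm{CNOT}$s, and depth $2\lceil \log_2 \nn \rceil + \bigO{1} = \bigO{\log \nn}$; see \cref{fig:paulirotation} for the resulting circuit.

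There is no real obstacle here beyond bookkeeping: one must check that each layer of $\mathrm{CNOT}$s strips exactly one $\sigma^x$ per merged pair without generating $\sigma^y$ or $\sigma^z$ factors, and that the reversed layers undo the forward ones — but this is precisely the elementary stabilizer-conjugation computation already used in the preceding proposition, now merely scheduled in a logarithmic-depth tree pattern rather than sequentially along an arbitrary spanning tree.
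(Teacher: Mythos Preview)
Your proposal is correct and matches the paper's approach: the paper does not write out an explicit proof for this corollary but simply points to \cref{fig:paulirotation}, which depicts precisely the balanced-binary-tree $\mathrm{CNOT}$ schedule you describe. Your write-up supplies the bookkeeping (the conjugation rule $\sigma^x_c\sigma^x_t \mapsto \sigma^x_c$ under $\mathrm{CNOT}_{c\to t}$, the disjointness of edges within a level, and the absence of $\mathrm{SWAP}$s under all-to-all connectivity) that the paper leaves implicit in the figure.
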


\section{The special case of \texorpdfstring{$\sutwo$}{su(2)}}\label{sec:su2}

There are only three Pauli operators $\sigma^x, \sigma^y, \sigma^z$ (up to real scalars) in~$\sutwo$ 
so a random Pauli rotation is specified by $\Pr[\sigma^x], \Pr[\sigma^y], \Pr[\sigma^z]$.
The goal is clear in~\cref{thm:KernelSum}.
With an $\sutwo$-irrep~$\phi_*$ in mind, we just write $J_{x,y,z}$ to mean $\phi_*(\sigma^{x,y,z}/2)$.
Note the factor of~$2$ in the denominator, 
which gives $[J_a,J_b] = \ii J_c$ where $(a,b,c)$ is a cyclic permutation of~$(x,y,z)$.
We have to calculate the spectral norm of
\begin{equation}
    \Pr[\sigma^x] K(J_x) + \Pr[\sigma^y] K(J_y) + \Pr[\sigma^z] K(J_z)
\end{equation}
for all irreps that appear in the tensor representation~$\tau : \SU(2) \ni U \mapsto (U \otimes \bar U)^{\otimes t}$.
Since the dual of the defining irrep of~$\sutwo$ is equivalent to itself,
the representation~$\tau$ is simply the $2t$-fold tensor product of the defining irrep of~$\sutwo$.
It is a standard fact that every irrep that appears in~$\tau$ is odd dimensional 
because $2t$ is an even number,
and every irrep of dimension~$2\ell+1$ appears in~$\tau$ where~$\ell \le t$.

\begin{lemma}\label{thm:sutwoKOverlap}
    Each of~$K(J_x), K(J_y), K(J_z)$ has rank~$1$ on any nontrivial $\sutwo$-irrep of odd dimension~$2\ell+1$
    for an integer $\ell \ge 1$.
    There exist normalized vectors $\ket{x} \in \ker J_x, \ket{y} \in \ker J_y, \ket{z} \in \ker J_z$
    such that
    \begin{equation}
        \braket{x|y} = \braket{y|z} = \braket{z|x} = 
        \begin{cases}
            \frac{(-1)^{\ell/2}}{2^\ell}\binom{\ell}{\ell/2} & \text{if $\ell$ is even},\\
            0 &\text{otherwise}.
        \end{cases}
    \end{equation}
\end{lemma}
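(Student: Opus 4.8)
The plan is to work with the standard angular momentum basis $\ket{\ell,m}$, $m = -\ell, \dots, \ell$, for the spin-$\ell$ irrep of $\sutwo$, in which $J_z$ is diagonal with eigenvalue $m$. Since $\ell$ is an integer, $J_z$ has a one-dimensional kernel spanned by $\ket{z} = \ket{\ell,0}$; by \cref{thm:EqualSpectrum} (applied to the $\sutwo$-irrep, or just by the fact that $\sigma^x,\sigma^y,\sigma^z$ are all congruent), $J_x$ and $J_y$ have the same spectrum as $J_z$, so each has a one-dimensional kernel as well, giving the rank-$1$ claim. It then remains to locate $\ket{x}$ and $\ket{y}$ explicitly enough to compute the three inner products. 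The first step is to diagonalize $J_x$: conjugating by the rotation $R = e^{-\ii\frac\pi2 J_y}$ that sends the $z$-axis to the $x$-axis, we have $J_x = R J_z R^\dag$, so $\ket{x} = R\ket{\ell,0}$. Likewise one can take $\ket{y}$ to be an analogous rotation of $\ket{\ell,0}$, or handle it by a symmetry argument (see below).

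The main computation is therefore $\braket{z|x} = \braket{\ell,0|R|\ell,0}$, which is precisely the Wigner $d$-matrix element $d^{\ell}_{0,0}(\pi/2)$. This is a classical quantity: $d^{\ell}_{0,0}(\beta) = P_\ell(\cos\beta)$, the Legendre polynomial, so $\braket{z|x} = P_\ell(0)$, which vanishes for odd $\ell$ and equals $\frac{(-1)^{\ell/2}}{2^\ell}\binom{\ell}{\ell/2}$ for even $\ell$ — exactly the claimed value. If one prefers to avoid invoking Wigner $d$-functions as a black box, one can instead expand $\ket{\ell,0}$ in the $J_x$-eigenbasis directly: writing $\ket{x}$ as the (normalized) unique null vector of $J_x = \frac12(J_+ + J_-)$ and solving the three-term recurrence for its coefficients in the $\ket{\ell,m}$ basis gives coefficients proportional to binomial-type expressions, and $\braket{z|x}$ is read off as the $m=0$ coefficient; the generating-function identity $\sum_m \binom{2\ell}{\ell+m} x^{\ell+m} = (1+x)^{2\ell}$ evaluated appropriately recovers $P_\ell(0)$.

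For the remaining two inner products, the cleanest route is symmetry. The cyclic permutation $(x,y,z) \to (y,z,x)$ of the axes is implemented by conjugation by a unitary $V \in \SU(2)$ (a rotation by $2\pi/3$ about the $(1,1,1)$ direction), so $\rho(V)$ intertwines the triple $(J_x,J_y,J_z)$ with $(J_y,J_z,J_x)$ and hence maps the triple $(\ket{x},\ket{y},\ket{z})$ to $(\ket{y},\ket{z},\ket{x})$ up to phases; choosing the phases consistently forces $\braket{z|x} = \braket{x|y} = \braket{y|z}$. Thus all three equal $P_\ell(0)$, completing the proof.

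The one genuine subtlety — the step I expect to be the main obstacle — is the \emph{phase bookkeeping}: $\ket{x},\ket{y},\ket{z}$ are only defined up to unit scalars, and the lemma asserts that they can be chosen so that all three overlaps are simultaneously equal to the stated real number (in particular with a consistent sign). One must check that the phase freedom is not obstructed, i.e. that the product $\braket{x|y}\braket{y|z}\braket{z|x}$ (which is phase-independent) is a nonnegative real that is the cube of the claimed common value, and that no $\ZZ_3$-type anomaly prevents a simultaneous choice. This is guaranteed because the cyclic symmetry $V$ satisfies $V^3 = \pm\one$ and the induced action on the relevant one-dimensional kernel spaces has trivial "holonomy" around the $3$-cycle; I would verify this by noting that $P_\ell(0)$ is real and that one explicit choice (via the rotations $R$ above) already realizes the common value, so no anomaly can occur. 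The rest is routine: the rank-$1$ statement is immediate from \cref{thm:EqualSpectrum}, and the numerical value is just $P_\ell(0)$.
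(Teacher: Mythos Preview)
Your approach is essentially correct and, for the numerical value, cleaner than either of the paper's two arguments. The paper's first method uses the same cyclic-permutation unitary~$V$ (rotation by~$2\pi/3$ about~$(1,1,1)$) to obtain equality of the three overlaps, then computes $\braket{z|\phi(V)|z}$ by realizing the irrep as~$\Sym^{2\ell}(\CC^2)$ and evaluating a binomial sum; the second method solves the recurrences $(J_+ \pm J_-)\ket{\cdot} = 0$ for the coefficients of~$\ket x,\ket y$ in the $J_z$-eigenbasis, normalizes via the identity $\sum_i \binom{2i}{i}\binom{2n-2i}{n-i}=4^n$, and determines the sign of the phase-invariant product $\braket{x|y}\braket{y|z}\braket{z|x}$ by a separate monotonicity argument. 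Your recognition that the overlap is the Wigner element $d^\ell_{00}(\pi/2)=P_\ell(0)$ short-circuits both computations.

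One point to tighten: you compute $\braket{z|x}=P_\ell(0)$ with $\ket{x}=R\ket{z}$, $R=e^{-\ii\pi J_y/2}$, but then switch to the cyclic~$V$ for the equality argument. These are two \emph{different} choices of~$\ket{x}$, related by a phase, and your phase-bookkeeping paragraph never actually closes the loop between them. The clean fix is to use~$V$ for the computation as well. Since the outer Euler rotations act trivially on~$\ket{\ell,0}$, one has $\braket{\ell,0|\rho(g)|\ell,0}=P_\ell(\hat z\cdot g\hat z)$ for \emph{every} rotation~$g$; in particular this matrix element is automatically real. Taking $\ket x=\phi(V)\ket z$, $\ket y=\phi(V)^2\ket z$, and using $\phi(V)^3=\phi(-\one)=\one$ in integer-spin irreps, all three overlaps equal $\braket{z|\phi(V)|z}=P_\ell(\hat z\cdot\hat x)=P_\ell(0)$, with no residual sign ambiguity to resolve.
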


In view of~$\so(3) \cong \su(2)$, an odd dimensional irrep is often called an ``integer spin'' representation,
and an even dimensional irrep a ``half-integer spin'' representation.
For even dimensional irreps, it is well known that~$K(J_x) = K(J_y) = K(J_z) = 0$.

\begin{proof}
    Let $\phi_*$ be an $\sutwo$-irreducible representation map acting on~$V \cong \CC^{2\ell+1}$.
    It is a standard fact that $J_z$ has eigenvalues~$\ell,\ell-1,\ldots,-\ell+1, -\ell$,
    each with multiplicity~$1$. Hence, the kernel is one-dimensional.

    (\emph{First method by symmetric powers})
    Note that $U = \exp(-\ii \frac \pi {3\sqrt{3}} (\sigma^x + \sigma^y + \sigma^z)) \in \SU(2)$ 
    acts by conjugation as $\sigma^x \mapsto \sigma^y \mapsto \sigma^z \mapsto \sigma^x$.
    For a normalized vector~$\ket z \in \ker J_z \subset V$,
    the vector~$\ket x = \phi(U)\ket z$ spans~$\ker J_x$ 
    and $\ket y = \phi(U^2)\ket z$ spans~$\ker J_y$.
    So, the three inner products in the claim are the same.
    It remains to calculate~$\braket{z|x} = \bra z \phi(U) \ket z$.

    A concrete expression for~$\phi$ is obtained by considering 
    the $2\ell$-fold symmetric power of the defining representation of~$\sutwo$~\cite[(11.8)]{FultonHarris}. 
    Let $\ket 0$ and~$\ket 1$ be a basis of~$\CC^2$ such that 
    $\sigma^z \ket 0 = +\ket 0$ and $\sigma^z \ket 1 = - \ket 1$.
    Then,
    \begin{equation}
    U = \frac{e^{-\ii \pi/4}}{\sqrt 2} \begin{pmatrix} 1 & -\ii \\ 1 & \ii \end{pmatrix}.
    \end{equation}
    Written in terms of vectors in~$(\CC^2)^{\otimes 2\ell}$,
    a set of basis vectors of~$V$ can be chosen to be
    \begin{equation} 
        \binom{2\ell}{k}^{-1/2} \sum_{w \in \{0,1\}^{2\ell}: \abs w = k} \ket w
    \end{equation}
    for $k = 0,1,2,\ldots,2\ell$, 
    where $\abs w$ is the number of $1$'s in the bitstring~$w$ of length~$2\ell$.
    These are eigenvectors of~$J_z$ with eigenvalues~$\ell-k$. 
    So,
    \begin{equation}
        \ket z = \binom{2\ell}{\ell}^{-1/2} \sum_{w: \abs w = \ell} \ket w.
    \end{equation}
    Applying $\phi(U) = U^{\otimes 2\ell} \big|_V$, we will obtain~$\ket x$.
    With~$\ket + = 2^{-1/2}(\ket 0 + \ket 1)$ and~$\ket - = 2^{-1/2}(\ket 0 - \ket 1)$,
    we have $U\ket 0 = e^{-\ii \pi/4} \ket +$ and $U \ket 1 = - e^{\ii \pi/4} \ket -$.
    Hence,
    \begin{equation}
        \braket{z|x} = \bra z U^{\otimes 2\ell} \ket z 
        =
        \sum_{w \in \{0,1\}^{2\ell}: \abs w = \ell} \bra w (-1)^\ell \ket{+}^{\otimes \ell}\!\ket{-}^{\otimes \ell}
    \end{equation}
    where the second equality is because
    both~$\ket z$ and~$\bra z$ are invariant under permutations of tensor factors.
    For a bitstring~$w$ of length~$2\ell$ with~$\abs w = \ell$, 
    let $m$ be the number of $1$'s in the last $\ell$ bits.
    Then, $\braket{w|+^\ell -^\ell} = 2^{-\ell} (-1)^m$. 
    There are $\binom{\ell}{\ell-m} \binom{\ell}{m}$ such bitstrings,
    and $m$ ranges from~$0$ to~$\ell$.
    So,
    \begin{equation}
        \braket{z|x} = \frac{(-1)^\ell}{2^\ell} \sum_{m=0}^\ell (-1)^m \binom{\ell}{\ell - m}\binom{\ell}{m} .
    \end{equation}
    The sum is the coefficient of~$h^{\ell}$ in 
    a polynomial~$(1+h)^{\ell}(1-h)^{\ell} = (1-h^2)^\ell$ in a variable~$h$.
    There is no $h^\ell$ term if~$\ell$ is odd, implying that the sum is zero.
    If $\ell$ is even, then the coefficient is~$(-1)^{\ell/2} \binom{\ell}{\ell/2}$.
    This completes the proof.
        
    (\emph{Second method using raising and lowering operators})
    Let $\ket \ell$ be an eigenstate of~$J_z$ with eigenvalue~$\ell$: $J_z \ket \ell = \ell \ket \ell$.
    Define $J^+ = J_x + \ii J_y$ and $J^- = J_x - \ii J_y$,
    and inductively $J^- \ket k = a_{-k} \ket{k-1}$ for $k=\ell, \ell-1,\ldots, -\ell +1$
    where
    \begin{equation}
        a_k = \sqrt{\ell(\ell+1) - k (k+1)} = a_{-k-1} .
    \end{equation}
    Here, $\ket 0 = \ket z$.
    It is straightforward to check that $J^+ \ket{k} = a_k \ket{k+1}$;
    the vector~$\ket{\ell+1}$ is never defined, but~$a_\ell = 0$.
    
    Since $J_y \ket y = 0$, we have $J^+ \ket y = J^- \ket y$.
    This implies that $\ket{\ell -1}$ cannot be a nonzero component of~$\ket y$,
    which implies, in turn, that $\ket{\ell - 2j -1}$ for any integer~$j$ cannot be a nonzero component of~$\ket y$.
    Hence, $\ket y$ is in the span of~$\ket{\ell}, \ket{\ell-2}, \ldots, \ket{-\ell +2}, \ket{-\ell}$.
    In particular, if $\ell$ is odd, $\braket{z|y} = \braket{0|y} = 0$.

    Suppose $\ell = 2p$, an even integer, and put $\ket y = \sum_{k=-p}^p c_k \ket{2k}$.
    Then, the equation $J^+ \ket y = J^- \ket y$ implies that
    \begin{equation}
        c_k a_{2k} = c_{k+1} a_{-2k-2} = c_{k+1} a_{2k+1} . \label{eq:c-and-a}
    \end{equation}
    One can verify that $c_k = c_{-k}$ and
    \begin{align}
        \frac{a_{2k}^2}{a_{2k+1}^2} &= \frac{2p(2p+1) - 2k(2k+1)}{2p(2p+1) - (2k+1)(2k+2)} = \frac{(p-k)(2p+2k+1)}{(p+k+1)(2p-2k-1)} \\
        &= \frac{(p-k)^2 (2p+2k+2)(2p+2k+1)}{(2p-2k)(2p-2k-1)(p+k+1)^2} 
        = 
        \frac{\binom{2p-2k-2}{p-k-1}\binom{2p+2k+2}{p+k+1}}{\binom{2p-2k}{p-k}\binom{2p+2k}{p+k}}. \nonumber
    \end{align}
    Therefore, 
    \begin{equation}
        \frac{\abs{c_{k}}^2}{\abs{c_0}^2} = \frac{\binom{2p-2k}{p-k}\binom{2p+2k}{p+k} }{\binom{2p}{p}^2}. 
    \end{equation}
    Since $\braket{y|y} = 1$, we must have 
    \begin{equation}
        1 = \sum_{k = -p}^p \abs{c_k}^2 = \frac{\abs{c_0}^2}{\binom{2p}{p}^2} \sum_{k = -p}^p \binom{2p-2k}{p-k}\binom{2p+2k}{p+k} = \frac{\abs{c_0}^2}{\binom{2p}{p}^2} 4^{2p},
    \end{equation}
    where the last equality follows from a combinatorial identity\footnote{This can be proved by, for example, a formula $(1-4h)^{-1/2} = \sum_{n=0}^\infty \binom{2n}{n} h^n$ and its square, where $h$ is a variable.}~\cite{Sved_1984}
    \begin{align}
        \sum_{i=0}^n \binom{2i}{i}\binom{2n-2i}{n-i} = 4^{n}. 
    \end{align}
    This shows that $\abs{\braket{y|z}} = \abs{c_0} = \frac{1}{4^p}\binom{2p}{p}$ if~$\ell = 2p$.    
    
    The complex phase~$\alpha$ of~$\braket{y|z} = \alpha \frac{1}{4^p}\binom{2p}{p}$ is not fixed by the normalization,
    but~$\braket{x|y}\!\braket{y|z}\!\braket{z|x}$ is well defined regardless of~$\alpha$.
    To evaluate this product of the three inner products, we may use any normalized vectors in the kernels.
    A vector $\ket x \in \ker J_x$ can be computed by solving~$J^+ \ket x = - J^- \ket x$.
    By completely parallel calculation, we find a solution
    $\ket x = \sum_{k=-p}^p (-1)^k c_k \ket{2k}$.
    Then,
    \begin{equation}
        \beta = \braket{x|y}\!\braket{y|z}\!\braket{z|x} = \braket{x|y}\!\braket{y|0}\!\braket{0|x} = \abs{c_0}^2 \sum_{k=-p}^p (-1)^k \abs{c_k}^2 .
    \end{equation}
    This is a real number, which means that we may take $\alpha = \pm 1 = \beta / \abs{\beta}$. 
    We know that $\abs{\beta} = \abs{c_0}^3 < 1$, so $\sum_{k=-p}^p (-1)^k \abs{c_k}^2 = \pm \abs{c_0}$.

    From~\cref{eq:c-and-a} we know that $\abs{c_0}<\abs{c_1}<\cdots<\abs{c_p}$. 
    Suppose $p$ is odd and $\beta > 0$.
    Then, we must have $\sum_{k=-p}^p (-1)^k \abs{c_k}^2 = \abs{c_0}$,
    which gives a contradiction:
    $0 = (- \abs{c_0} - \abs{c_1}^2) + (\abs{c_0}^2 - \abs{c_1}^2) + 2(\abs{c_2}^2 - \abs{c_3}^2) + \cdots + 2(\abs{c_{p-1}}^2 - \abs{c_p}^2) < 0$.
    Therefore, $\beta < 0 $ if $p$ is odd.
    Similarly, suppose $p$ is even and $\beta < 0$.
    Then, we must have $\sum_{k=-p}^p (-1)^k \abs{c_k}^2 = - \abs{c_0}$,
    which gives a contradiction:
    $ 0 = (\abs{c_0} + \abs{c_0}^2) + 2(- \abs{c_1}^2 + \abs{c_2}^2) + \cdots + 2(-\abs{c_{p-1}}^2 + \abs{c_p}^2) > 0$.
    Therefore, $\beta > 0$ if $p$ is even.
    This completes the proof.
\end{proof}

\begin{corollary}\label{thm:sutwoSpectralGap}
    For a random Pauli rotation~$\mu$ with respect to~$\{(\sigma^x,\frac 1 3),(\sigma^y,\frac 1 3),(\sigma^z,\frac 1 3)\}$
    we have 
    \begin{equation}
        \norm*{\calC_{\mu,t} - \calH_t} = \frac 1 {12} \cdot
        \begin{cases}
            4 & (t=1)\\
            6 & (t=2,3)\\
            7 & (t \ge 4)
        \end{cases} .
    \end{equation}
\end{corollary}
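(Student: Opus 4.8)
The plan is to combine \cref{thm:KernelSum} with \cref{thm:sutwoKOverlap} and reduce the whole computation to the largest eigenvalue of a single $3\times 3$ Gram matrix, one for each relevant irrep. By \cref{thm:KernelSum} applied to the uniform distribution on $\{\sigma^x,\sigma^y,\sigma^z\}$, we have $\norm{\calC_{\mu,t}-\calH_t}=\max_{\phi}\tfrac13\norm{K(J_x)+K(J_y)+K(J_z)}$, where $\phi$ ranges over the nontrivial irreps of $\sutwo$ occurring in $\tau:U\mapsto(U\otimes\bar U)^{\otimes t}$ and $J_a=\phi_*(\sigma^a/2)$. As recalled just before \cref{thm:sutwoKOverlap}, these are exactly the odd-dimensional irreps of dimension $2\ell+1$ for $1\le\ell\le t$, so the maximum is over $\ell\in\{1,\dots,t\}$. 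Fixing such an $\ell$, \cref{thm:sutwoKOverlap} says that each of $K(J_x),K(J_y),K(J_z)$ is a rank-one projector $\ket a\!\bra a$ onto a normalized vector, and that these vectors can be chosen so that all three pairwise inner products equal the real number $c_\ell$, where $c_\ell=\frac{(-1)^{\ell/2}}{2^\ell}\binom{\ell}{\ell/2}$ for even $\ell$ and $c_\ell=0$ for odd $\ell$.

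Next I would invoke the elementary identity $\norm{\ket x\!\bra x+\ket y\!\bra y+\ket z\!\bra z}=\lambda_{\max}(A^\dagger A)$, where $A$ is the matrix whose columns are the three kernel vectors: indeed $AA^\dagger=\sum_a\ket a\!\bra a$ and $A^\dagger A$ have the same nonzero eigenvalues, and $A^\dagger A$ is precisely the $3\times 3$ Gram matrix $G$. With the chosen representatives, $G$ has $1$ on the diagonal and $c_\ell$ in every off-diagonal entry, so its eigenvalues are $1+2c_\ell$ (eigenvector $(1,1,1)$) and $1-c_\ell$ (multiplicity two). Hence $\tfrac13\norm{K(J_x)+K(J_y)+K(J_z)}=\tfrac13\left(1+\max(2c_\ell,\,-c_\ell)\right)$, and the problem is reduced to maximizing $m_\ell:=\max(2c_\ell,\,-c_\ell)$ over $1\le\ell\le t$.

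Finally I would carry out this discrete optimization. Write $\ell=2p$, so $c_{2p}=(-1)^p\binom{2p}{p}/4^p$, and note that $\binom{2p}{p}/4^p$ is strictly decreasing in $p\ge 1$ since consecutive terms have ratio $\frac{2p+1}{2p+2}<1$; also $m_\ell=0$ for odd $\ell$. When $p$ is even, $c_{2p}>0$ and $m_{2p}=2c_{2p}$ is largest at $p=2$, i.e.\ $\ell=4$, giving $m_4=2\cdot\frac38=\frac34$; when $p$ is odd, $c_{2p}<0$ and $m_{2p}=-c_{2p}=\binom{2p}{p}/4^p$ is largest at $p=1$, i.e.\ $\ell=2$, giving $m_2=\frac12$. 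Therefore $\max_{1\le\ell\le t}m_\ell$ equals $0$ when $t=1$ (only $\ell=1$ is allowed), equals $\frac12$ when $t\in\{2,3\}$ (the value $\frac34$ at $\ell=4$ is not yet available, and $\frac12>0$), and equals $\frac34$ when $t\ge 4$ (since $\frac34>\frac12>0$ and all larger $\ell$ give strictly smaller $m_\ell$). Consequently $\norm{\calC_{\mu,t}-\calH_t}=\tfrac13(1+\max_\ell m_\ell)$ equals $\tfrac13=\tfrac{4}{12}$, $\tfrac12=\tfrac{6}{12}$, and $\tfrac{7}{12}$ in the three cases, as claimed. The only delicate point is this last monotonicity bookkeeping, where one must track the sign $(-1)^{\ell/2}$ and keep straight which of $1+2c_\ell$ and $1-c_\ell$ is the larger eigenvalue of $G$.
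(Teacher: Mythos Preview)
Your proof is correct and follows essentially the same approach as the paper: both reduce via \cref{thm:KernelSum} and \cref{thm:sutwoKOverlap} to finding the eigenvalues $1+2f$ and $1-f$ (with $f=c_\ell$) of $K(J_x)+K(J_y)+K(J_z)$ and then maximizing over $1\le\ell\le t$. The only minor difference is that you read off these eigenvalues from the $3\times 3$ Gram matrix $A^\dagger A$, whereas the paper recovers them from $\Tr M$, $\Tr M^2$, $\Tr M^3$.
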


\begin{proof}
    We calculated $f^2 = \Tr(K(J_x)K(J_y))$, etc, in~\cref{thm:sutwoKOverlap},
    where $(-1)^{\ell/2} f = \binom{\ell}{\ell/2} / 2^\ell$ is a decreasing function in even integer~$\ell$.
    Hence by~\cref{thm:KernelSum}, there are only three cases to check: 
    $f=0$ if $\ell$ is odd, $f = -1/2$ if $\ell = 2$, and $f = 3/8$ if $\ell = 4$.

    Let $M = K(J_x) + K(J_y) + K(J_z)$ and $a,b,c \in [0,3] \subset \RR$ be the eigenvalues of~$M$.
    Then $\Tr(M) = a+b+c = 3$, $\Tr(M^2) = a^2 + b^2 + c^2 = 3 + 6 f^2$, 
    and $\Tr(M^3) = a^3 + b^3 + c^3 = 3 + 18 f^2 + 6 f^3$.
    Calculation gives $a = b = 1-f$ and $c = 1+2f$ up to permutations,
    so $\norm M = \max(1-f,1+2f)$.
    This norm reaches the maximum~$7/4$ when $\ell = 4$ and $f = 3/8$. 
\end{proof}

\section{A spectral gap bound by quadratic Casimir invariants}\label{sec:casimir}

\begin{theorem}\label{thm:SpectralGapByQuadraticCasimir}
    Let $\mu$ be the random Pauli rotation with respect to the uniform distribution on~$\pauli_\nn$.
    Then, with $\NN = 2^\nn$ and for any integer $t \ge 1$,
    \begin{equation}
        \norm{\calC_{\mu,t} - \calH_t} \le 1 - \frac{1}{4t} \frac{\NN^2}{\NN^2-1} - \frac{1}{\NN^2 - 1}. \label{eq:maingapbound}
    \end{equation}
    If $t \le \NN/2$, then
    \begin{equation}
        \norm{\calC_{\mu,t} - \calH_t} \le 1 - \frac{1}{2t} \frac{\NN(\NN-t+1)}{\NN^2 - 1}.\label{eq:smalltgapbound}
    \end{equation}
\end{theorem}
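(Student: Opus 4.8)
The plan is to feed the reduction of \cref{thm:KernelSum} into the quadratic kernel bound~\eqref{eq:opineq} and then recognize the resulting Pauli sum of squares as a quadratic Casimir operator. By \cref{thm:KernelSum} it suffices to bound $\norm{\frac{1}{\NN^2-1}\sum_{P\in\pauli_\nn}K(\phi_*(P/2))}$ for each nontrivial irreducible subrepresentation~$\phi$ of $\tau\colon U\mapsto(U\otimes\bar U)^{\otimes t}$. Fix such a~$\phi$. By \cref{thm:EigenvaluesOfRepresentedPaulisAreIntegers} and \cref{thm:EqualSpectrum}, every $\phi_*(P/2)$ has the same, inversion-symmetric, integer spectrum contained in $[-t,t]$, so $\norm{\phi_*(P/2)}$ equals one and the same integer $m=m(\phi)\in\{1,\dots,t\}$ for all~$P$. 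Applying~\eqref{eq:opineq} termwise,
\begin{equation}
\frac{1}{\NN^2-1}\sum_{P}K\!\left(\phi_*(P/2)\right)\preceq\one-\frac{1}{m^2(\NN^2-1)}\sum_{P}\phi_*(P/2)^2 .
\end{equation}
Since the Pauli operators form, up to the scalar $\sqrt\NN$, a self-dual basis of $\su(\NN)$ for the trace form $\langle X,Y\rangle=\Tr(XY)$, the operator $\tfrac1\NN\sum_{P}\phi_*(P)^2=\tfrac4\NN\sum_{P}\phi_*(P/2)^2$ is the quadratic Casimir of $\su(\NN)$ in $\phi$, hence equals $c_\phi\,\one$ for a positive scalar $c_\phi$ by Schur's lemma. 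Therefore $\norm{\calC_{\mu,t}-\calH_t}\le 1-\min_\phi \NN c_\phi/\bigl(4\,m(\phi)^2(\NN^2-1)\bigr)$, and the two inequalities of the theorem reduce to the scalar estimates $\NN c_\phi/m^2\ge \NN^2/t+4$ in general and $\NN c_\phi/m^2\ge 2\NN(\NN-t+1)/t$ when $t\le\NN/2$; dividing by $4(\NN^2-1)$ then yields exactly~\eqref{eq:maingapbound} and~\eqref{eq:smalltgapbound}.

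To prove these I would compute $c_\phi$ from the highest weight $\lambda=\sum_i c_i\epsilon_i$ (integers $c_1\ge\cdots\ge c_\NN$, $\sum_i c_i=0$): the standard formula gives $c_\phi=\langle\lambda,\lambda+2\rho\rangle=\sum_i c_i^2-2\sum_i i\,c_i$ with $\rho=\sum_i(\tfrac{\NN+1}{2}-i)\epsilon_i$, while the top eigenvalue of the diagonal Pauli $\sigma^z\otimes\one^{\otimes(\nn-1)}$ under $\phi_*$ works out to $2\sum_{i\le\NN/2}c_i$, i.e.\ $m=\sum_{i\le\NN/2}c_i$. For the all-$t$ bound, minimize the convex quadratic $c_\phi$ over dominant~$\lambda$ with $\sum_{i\le\NN/2}c_i=m$ fixed; a short KKT check shows all ordering constraints bind, so the minimizer is the balanced weight $c_i=2m/\NN$ for $i\le\NN/2$ and $c_i=-2m/\NN$ otherwise, with value $\NN m+4m^2/\NN$. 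Hence $c_\phi\ge \NN m+4m^2/\NN$ and $\NN c_\phi/m^2\ge \NN^2/m+4\ge\NN^2/t+4$ since $m\le t$, which is~\eqref{eq:maingapbound}.

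For $t\le\NN/2$ one has $m\le\NN/2$ as well, and since each nonzero $c_i$ satisfies $|c_i|\ge1$ while the positive entries sum to $m$ and the negative entries sum to $-m$, the weight~$\lambda$ has $p\le m$ positive entries (located at indices $\le\NN/2$) and $q\le m$ negative entries (located at indices $>\NN/2$). Splitting $c_\phi=\sum_i c_i(c_i-2i)$ into its positive- and negative-entry parts, bounding $\sum c_i^2\ge m^2/p$ and $m^2/q$ by Cauchy--Schwarz, and controlling $\sum i\,c_i$ on each part by Chebyshev's sum inequality, I would obtain $c_\phi\ge m^2(1/p+1/q)+m(2\NN-p-q)$. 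This expression is decreasing in $p$ and in $q$, hence is minimized at $p=q=m$, where it equals $2m(\NN-m+1)$; so $\NN c_\phi/m^2\ge 2\NN(\NN-m+1)/m\ge 2\NN(\NN-t+1)/t$, giving~\eqref{eq:smalltgapbound}.

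The routine parts are the projector inequality~\eqref{eq:opineq}, the centrality of the Pauli sum of squares, and the two convex minimizations. The delicate step is the refined estimate: one must argue carefully that for an irrep occurring in~$\tau$ with $t\le\NN/2$ the highest-weight constraints really do reduce to $p,q\le m\le t$ with the sign pattern separated by the index~$\NN/2$, and that nothing is lost when the Cauchy--Schwarz/Chebyshev bounds are applied on the two sides separately. Both scalar estimates can be sanity-checked against the $\sutwo$ values of \cref{thm:sutwoSpectralGap} and against the extremal weight $\lambda=(1^t,0^{\NN-2t},(-1)^t)$, which saturates~\eqref{eq:smalltgapbound}.
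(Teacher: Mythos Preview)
Your argument is correct and follows the same overall architecture as the paper: reduce via \cref{thm:KernelSum} to nontrivial irreps, apply the kernel inequality~\eqref{eq:opineq}, recognize $\sum_P\phi_*(P/2)^2$ as the quadratic Casimir, and then bound the Casimir eigenvalue from below. The differences are in how the two scalar estimates are obtained.

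For~\eqref{eq:maingapbound} the paper proceeds more elementarily (\cref{thm:QuadraticCasimirBound}, lower bound): it fixes a top-eigenvalue vector for $J_{Z_1}$ and exhibits $4^{\nn-1}$ copies of $\sutwo$ through~$Z_1$, each contributing~$\ell$ to $\bra\psi\sum_PJ_P^2\ket\psi$, plus the $\ell^2$ from~$J_{Z_1}^2$ itself. This uses nothing beyond $\sutwo$ representation theory. Your route---compute $c_\phi=\langle\lambda,\lambda+2\rho\rangle$ and minimize over real dominant $\lambda$ with $\sum_{i\le\NN/2}c_i=m$ via KKT---lands on the same value $\NN m+4m^2/\NN$ but requires highest-weight machinery. (A small imprecision: not \emph{all} ordering constraints bind at the balanced weight, only those within each half; the middle constraint $c_{\NN/2}\ge c_{\NN/2+1}$ is strict. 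The KKT multipliers $\gamma_i=i(\NN/2-i)$ on the first half and $\gamma_{\NN/2+k}=k(\NN/2-k)$ on the second are nonnegative, so your conclusion stands.) The paper's approach buys a self-contained proof of the headline bound; yours buys uniformity with the second estimate.

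For~\eqref{eq:smalltgapbound} you actually do more than the paper. The paper (\cref{thm:improvedlowerbound}) writes down the Casimir formula~\eqref{eq:casimircalculation}, restricts to highest weights with $\sum_i|\mu_i|\le 2t$, and simply asserts that the minimizer is $\mu_*=(1^t,0^{\NN-2t},(-1)^t)$. Your Cauchy--Schwarz/Chebyshev argument supplies that missing verification. The ``delicate'' point you flag---that positives sit in the first half and negatives in the second with $p,q\le m$---is exactly the place where one must invoke the constraint $\sum_i|\mu_i|\le 2t$ (so $S^+\le t\le\NN/2$, forcing $p,q\le S^+\le\NN/2$ and hence $m=S^+$); once that is said your bound $c_\phi\ge m^2/p+m^2/q+m(2\NN-p-q)$ and its monotone minimization at $p=q=m$ go through cleanly.
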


Comparing with~\cref{thm:sutwoSpectralGap}, we see that with $\NN=2$ 
the inequality in~\cref{eq:maingapbound} is saturated if and only if $t \in \{1,2,4\}$.

\begin{proof}
    It follows from~\cref{thm:KernelSum} that
    \begin{equation}
        \norm{\calC_{\mu,t} - \calH_t} = \max_J \norm*{\E_{P \in \pauli_\nn} K(J_P)}
    \end{equation}
    where the maximum is over all nontrivial $\su(\NN)$-irreps that appear 
    in the tensor representation~$\tau : U \mapsto (U \otimes \bar U)^{\otimes t}$.
    By~\cref{eq:LieAlgRhoStar}, the norm~$\ell$ of a represented operator~$J_P$ in a nontrivial $\su(\NN)$-irrep
    is a nonzero integer and is at most~$t$.
    We use an operator inequality
    \begin{equation}\label{eq:kernelupperbound}
        K(H) \preceq \one - H^2 / \norm{H}^2
    \end{equation}
    which holds for any nonzero hermitian operator~$H$,
    where $K(H)$ is the orthogonal projector onto~$\ker H$.
    Averaging over~$\pauli_\nn$, we have that
    \begin{equation}\label{eq:avgkernelprojbound}
        \E_{P \in \pauli_\nn} K(J_P) 
        \preceq 
        \one - \E_{P \in \pauli_\nn} \frac{J_P^2}{\norm{J_P}^2}. 
    \end{equation}
    It is well known (by a more general argument) that
    the last term is a scalar multiple of the identity, called a Casimir operator or invariant,
    where the scalar depends only on the irrep.
    We give an elementary calculation to this end in~\cref{thm:quadraticSumIsInvariant} below.
    By the lower bound in~\cref{thm:QuadraticCasimirBound} below,
    \begin{equation}
        \one - \E_{P \in \pauli_\nn} \frac{J_P^2}{\norm{J_P}^2}
        \preceq \one \parens*{
            1 - \frac{1}{4\ell} \frac{\NN^2}{\NN^2 - 1} - \frac{1}{\NN^2 -1}
        }
    \end{equation}
    where $\ell = \norm{J_P}$ is independent of~$P$.
    Since $\ell \le t$, we complete the proof of~\cref{eq:maingapbound}.
    For~\cref{eq:smalltgapbound}, we use~\cref{thm:improvedlowerbound} further below.
\end{proof}

\begin{lemma}\label{thm:quadraticSumIsInvariant}
For any $\su(\NN)$-irrep $\phi_*$, 
the quadratic sum $\sum_{P \in \pauli_\nn} \phi_*(P/2)^2$
is a scalar multiple of the identity.
\end{lemma}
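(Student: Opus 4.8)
The plan is to show that the operator $C := \sum_{P \in \pauli_\nn} \phi_*(P/2)^2$ commutes with $\phi_*(X)$ for every $X \in \su(\NN)$; then, since $\phi_*$ is irreducible, Schur's lemma forces $C$ to be a scalar. The main input is \cref{thm:EqualSpectrum} and its proof: for every $U \in \SU(\NN)$ and every $P \in \pauli_\nn$, we have $\phi_*(U P U^\dagger /2) = \rho(U)\, \phi_*(P/2)\, \rho(U)^\dagger$, where $\rho$ is the Lie group representation integrating $\phi_*$. Conjugation by any fixed $U$ permutes $\pauli_\nn$ among itself up to signs — i.e. $\{ U P U^\dagger : P \in \pauli_\nn \} = \{ \pm P : P \in \pauli_\nn\}$, because Paulis are, up to phase, a group under multiplication and conjugation is an automorphism preserving the (projective) Pauli group. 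Squaring kills the sign, so the multiset $\{ (U P U^\dagger/2)^2 \}_{P}$ equals $\{ (P/2)^2 \}_P$.

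Carrying this out: first I would fix $U \in \SU(\NN)$ and write
\begin{equation}
    \rho(U)\, C\, \rho(U)^\dagger = \sum_{P \in \pauli_\nn} \rho(U) \phi_*(P/2)^2 \rho(U)^\dagger = \sum_{P \in \pauli_\nn} \bigl(\rho(U)\phi_*(P/2)\rho(U)^\dagger\bigr)^2 = \sum_{P \in \pauli_\nn} \phi_*(UPU^\dagger/2)^2.
\end{equation}
Then I would invoke the fact that $P \mapsto UPU^\dagger$ is, up to an overall sign depending on $P$, a bijection of $\pauli_\nn$ onto itself (since conjugation by a unitary sends a Hermitian traceless Pauli to a $\pm 1$-eigenvalue traceless unitary that squares to $\one$, and the only such operators that are tensor products of single-qubit Paulis — more robustly, one uses that $UPU^\dagger$ lies in the real span of $\pauli_\nn$ and has the right normalization; the cleanest route is that the Heisenberg–Weyl group is normal in its own normalizer and $\SU(\NN)$ acts projectively). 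The sign drops out upon squaring, so the last sum equals $\sum_{P} \phi_*(P/2)^2 = C$. Hence $\rho(U) C \rho(U)^\dagger = C$ for all $U \in \SU(\NN)$, i.e. $C$ commutes with the whole image of $\rho$, and therefore with $\phi_*(\su(\NN))$. Irreducibility and Schur's lemma then give $C = c\,\one$ for a scalar $c$ depending only on the irrep.

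The step I expect to be the main obstacle is justifying cleanly that conjugation by $U \in \SU(\NN)$ permutes $\{\pm P : P \in \pauli_\nn\}$ — stated naively it is false that $UPU^\dagger$ is again a tensor product of single-qubit Paulis (e.g. a generic $U$ does not normalize the Pauli group). The resolution is that we do not actually need $UPU^\dagger \in \pauli_\nn$; we only need the represented-operator identity from \cref{thm:EqualSpectrum}, which already tells us $\phi_*(P/2)$ and $\phi_*(Q/2)$ are conjugate \emph{inside the representation} for any two Paulis $P,Q$. A slicker argument therefore avoids the permutation claim entirely: by \cref{thm:EqualSpectrum} there is, for each pair $P, Q \in \pauli_\nn$, a unitary $W_{PQ} = \rho(U_{PQ})$ with $\phi_*(P/2) = W_{PQ}\,\phi_*(Q/2)\,W_{PQ}^\dagger$, so all the summands $\phi_*(P/2)^2$ are unitarily conjugate; but to conclude they \emph{sum} to a scalar one still needs the averaging/invariance, so in the write-up I would present the conjugation-by-$U$ argument above, citing the Heisenberg–Weyl normalizer structure (the Clifford group acts on Paulis by signed permutation, and a general $U$ need not, but it suffices to average — and in fact one can restrict $U$ in the argument to range only over the Clifford group, which is enough to generate irreducibility-testing conjugations, or simply note $C$ is Clifford-invariant and Paulis already act irreducibly on each irrep by Schur within the Clifford-twirl). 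The honest minimal argument: $C$ commutes with every $\phi_*(P/2)$ (since conjugating the sum by a single Pauli signed-permutes the terms), and the $\phi_*(P/2)$, together with the image of $\phi_*$, need not generate — so ultimately one does want the full $\SU(\NN)$-conjugation, which is legitimate precisely because $\phi_*(UPU^\dagger/2)$ makes sense as an element of the represented Lie algebra even when $UPU^\dagger \notin \pauli_\nn$, and one decomposes $UPU^\dagger$ in the Pauli basis and uses bilinearity plus the fact that distinct Paulis are orthogonal and the cross terms cancel by a parity/trace argument. I would isolate this linear-algebra fact as the crux of the proof.
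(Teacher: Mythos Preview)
Your proposal never lands on a complete argument, and two of your fallback claims are actually wrong. First, ``$C$ commutes with every $\phi_*(P/2)$ since conjugating the sum by a single Pauli signed-permutes the terms'' is a non sequitur: conjugation by the \emph{group} element $\rho(\ii Q)$ shows $[C,\rho(\ii Q)]=0$, not $[C,\phi_*(Q/2)]=0$. Since $\rho(\ii Q)=\exp(\ii\pi\,\phi_*(Q)/2)$ is a single point on the one-parameter subgroup, commuting with it does not force commuting with its generator; indeed the image of the Pauli group under a generic $\SU(\NN)$-irrep need not be irreducible (already for the spin-$1$ irrep of $\SU(2)$ the three operators $\rho(\ii\sigma^{x,y,z})$ are simultaneously diagonalizable), so Schur cannot be invoked from that alone. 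Second, your assertion that ``the $\phi_*(P/2)$ \ldots\ need not generate'' is backwards: $\pauli_\nn$ is a linear basis of $\su(\NN)$, so $\{\phi_*(P/2)\}_P$ spans $\phi_*(\su(\NN))$ and commuting with all of them is exactly what Schur requires. The gap is that you have not established that commutation.

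The idea buried in your last sentence is the correct repair, but the reason is not a ``parity/trace'' cancellation. Writing $UPU^\dagger=\sum_Q c_{PQ}\,Q$, one has $\sum_P c_{PQ}c_{PQ'}=\delta_{QQ'}$ simply because $\{P\}$ is orthogonal for the trace form and $X\mapsto UXU^\dagger$ is an isometry, so the change-of-basis matrix $(c_{PQ})$ is orthogonal; this makes $\sum_P \phi_*(UPU^\dagger/2)^2=\sum_P \phi_*(P/2)^2$ immediate and is nothing more than the basis-independence of the quadratic Casimir. The paper instead works directly at the Lie-algebra level: for each $Q$ it pairs the Paulis anticommuting with $Q$ into $\{P,R\}$ with $[J_P,J_Q]=\ii J_R$ (an $\sutwo$ triple), checks $[J_Q,J_P^2+J_R^2]=0$ in three lines, and then invokes Schur because $\pauli_\nn$ spans $\su(\NN)$. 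Either route is short once stated correctly; yours, as written, is not yet a proof.
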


\begin{proof}
    Abbreviate $\phi_*(P/2)$ by~$J_P$.
    We show by direct calculation that $\sum_P J_P^2$ 
    commutes with~$J_Q$ for all~$Q \in \pauli_\nn$.
    Since the commutator obeys the Leibniz rule, we have 
    $[J_Q, J_P^2] = J_P [J_Q, J_P] + [J_Q, J_P] J_P$.
    This may be nonzero only if $PQ = -QP = \pm \ii R$ for some~$R \in \pauli_\nn$.
    For an anticommuting pair~$P,Q$, the Pauli operator~$R$ also anticommutes with each of~$P,Q$.
    So, the subset of all Pauli operators that anticommute with~$Q$ is partitioned into unordered pairs $\{P, R\}$
    where $[J_P, J_Q] = \ii J_R$.
    That is, for each pair~$\{P,R\}$, the three elements~$J_P, J_Q, J_R$ span~$\sutwo$.
    Then,
    \begin{align}
        [J_Q, J_P^2 + J_R^2]
        &=
        J_P [J_Q, J_P] + [J_Q, J_P] J_P + J_R [J_Q, J_R] + [J_Q, J_R] J_R \nonumber\\
        &=
        J_P (-\ii J_R) + (-\ii J_R) J_P + J_R (\ii J_P) + (\ii J_P) J_R \label{eq:sutwoCasimir}\\
        &= 0. \nonumber
    \end{align}
    Since we are working with an irrep, Schur's lemma implies that $A = \sum_P J_P^2$ is proportional to the identity.
\end{proof}

\begin{lemma}\label{thm:QuadraticCasimirBound}
Let $J_P = \phi_*(P/2)$ be the represented operator in an $\su(\NN)$-irrep~$\phi_*$ for any Pauli operator~$P \in \pauli_\nn$
where $\NN = 2^\nn \ge 2$,
and let $\ell = \norm{J_P}$ be the Schatten $\infty$-norm, which is independent of~$P$.
Then,
\begin{equation}
    \parens*{\frac{\NN^2 \ell }{4}  + \ell^2}\one 
    \preceq 
    \sum_{P \in \pauli_\nn} J_P^2 
    \preceq 
    \parens*{ \frac{\NN (\NN-1) \ell }{2}  + (\NN-1)\ell^2 }\one .
\end{equation}
For any $\nn \in \ZZ_{>0}$ and $\ell \in \half \ZZ_{>0}$, there is an $\su(2^\nn)$-irrep that saturates the upper bound.
For any $\nn,k \in \ZZ_{>0}$, there is an $\su(2^\nn)$-irrep with $\ell = 2^{\nn-2} k$ that saturates the lower bound.
The saturating irreps are unique up to isomorphisms.
\end{lemma}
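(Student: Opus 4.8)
The plan starts from \cref{thm:quadraticSumIsInvariant}: $\sum_{P\in\pauli_\nn}J_P^2 = c\,\one$ for a nonnegative scalar $c$ depending only on the irrep, so the whole task is to sandwich $c$ in terms of $\ell=\norm{J_P}$ (the same for all $P$ by \cref{thm:EqualSpectrum}, and a positive half-integer since $\exp(2\pi\ii J_P)=\phi(\exp(\ii\pi P))=\phi(-\one)=\pm\one$). Fix once and for all some $Q_0\in\pauli_\nn$. Exactly $\NN^2/2$ of the nonidentity Paulis anticommute with $Q_0$, and as in the proof of \cref{thm:quadraticSumIsInvariant} these split into $\NN^2/4$ unordered pairs $\{P,R\}$ with $[J_P,J_{Q_0}]=\pm\ii J_R$, so that $J_P,J_{Q_0},J_R$ span a copy of $\sutwo$; the remaining $\NN^2/2-2$ nonidentity Paulis commute with $Q_0$.

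For the \emph{lower bound} I would choose a unit vector $v$ with $J_{Q_0}v=\ell v$ (possible because the spectrum of $J_{Q_0}$ is symmetric about $0$ with maximum $\ell$). Since $\sum_P J_P^2=c\,\one$, pairing against $v$ gives $c=\sum_P\norm{J_Pv}^2$. Here $\norm{J_{Q_0}v}^2=\ell^2$; and for a pair $\{P,R\}$ as above $v$ carries the top weight $\ell$ of the corresponding $\sutwo$, hence is a highest-weight vector generating a spin-$\ell$ subrepresentation, on which the Casimir $J_P^2+J_{Q_0}^2+J_R^2$ equals $\ell(\ell+1)$, so $\norm{J_Pv}^2+\norm{J_Rv}^2=\ell(\ell+1)-\ell^2=\ell$. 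Summing over the $\NN^2/4$ pairs and discarding the nonnegative contributions of the Paulis commuting with $Q_0$ yields $c\ge\ell^2+\tfrac{\NN^2}{4}\ell$. Equality forces $J_Pv=0$ for every nonidentity $P\neq Q_0$ commuting with $Q_0$, and unwinding this one finds the saturating irreps are precisely the Cartan powers of $\Lambda^{\NN/2}(\CC^\NN)$, for which $\ell=\tfrac{\NN}{4}k=2^{\nn-2}k$.

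For the \emph{upper bound} I would pass to a concrete model. After a $\det$-twist (trivial on $\su(\NN)$, changing neither $c$ nor $\ell$) the irrep is the Schur--Weyl summand $V_\lambda\subseteq(\CC^\NN)^{\otimes m}$ for a partition $\lambda$ with at most $\NN-1$ parts, $m=\abs{\lambda}$. On $(\CC^\NN)^{\otimes m}$ one has $J_P=\tfrac12\sum_{j=1}^m P^{(j)}$, so, using $\sum_{\text{all }\NN^2\text{ Paulis}}P\otimes P=\NN\cdot\mathrm{SWAP}$ and that $\sum_{j<k}\mathrm{SWAP}_{jk}$ (the image of the class sum of transpositions) acts on the $V_\lambda$-isotypic part by the content sum $\kappa(\lambda)=\sum_{b\in\lambda}(\mathrm{col}(b)-\mathrm{row}(b))$, one obtains
\[
  c=\frac{m(\NN^2-m)}{4}+\frac{\NN}{2}\,\kappa(\lambda),\qquad
  \ell=\tfrac12\bigl(\lambda_1+\cdots+\lambda_{\NN/2}-\lambda_{\NN/2+1}-\cdots-\lambda_\NN\bigr),
\]
the second identity because $\mathrm{diag}(1^{\NN/2},(-1)^{\NN/2})$ is a dominant coweight. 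Now $\lambda\mapsto c(\lambda)$ is a convex quadratic (Hessian $\tfrac{\NN}{2}I-\tfrac12 J\succeq 0$), while $\{\lambda:\lambda_1\ge\cdots\ge\lambda_\NN=0,\ \ell(\lambda)=\ell\}$ is a bounded polytope whose vertices are exactly the flat shapes $(a^p,0^{\NN-p})$, $1\le p\le\NN-1$; a convex function is maximized on a polytope at a vertex, so it suffices to evaluate $c$ there, where $\kappa=\tfrac12 ap(a-p)$, $m=ap$, $c=\tfrac14 ap(\NN+a)(\NN-p)$, and the constraint pins $a=2\ell/\min(p,\NN-p)$. A one-variable optimization shows the maximum occurs at $p=1$ (equivalently $p=\NN-1$) and equals $(\NN-1)\bigl(\tfrac{\NN\ell}{2}+\ell^2\bigr)$, attained by $\lambda=(2\ell,0^{\NN-1})$, i.e.\ $\Sym^{2\ell}(\CC^\NN)$ (and its dual). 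The saturation and uniqueness claims then follow by substituting $\lambda=(2\ell,0,\dots,0)$ and $\lambda=(k^{\NN/2})$ into the formula for $c$ and checking these are, up to adding full columns, the only shapes meeting the respective bound.

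The step I expect to be most delicate is the upper bound: justifying the reduction to the Schur--Weyl model and the content-sum evaluation of $c$ (classical, but requiring the right citation and careful tracking of the $\det$-twist), and then the discrete-geometry portion --- enumerating the vertices of the sliced polytope, checking convexity of $c$ on it, performing the final single-variable maximization, and pinning down the extremizers precisely enough to get the uniqueness statement. The lower bound, by contrast, is short once one notices the trick of testing the identity $\sum_P J_P^2=c\,\one$ against a top-weight vector of $J_{Q_0}$ and using the $\sutwo$ triples from \cref{thm:quadraticSumIsInvariant}.
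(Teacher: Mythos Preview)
Your lower-bound argument is the same as the paper's: test $c\,\one=\sum_P J_P^2$ on a top eigenvector of a fixed $J_{Q_0}$, use the $\NN^2/4$ $\sutwo$-triples through $Q_0$ to extract exactly $\ell$ from each anticommuting pair via the $\sutwo$ Casimir, and discard the nonnegative contributions from the Paulis commuting with $Q_0$.

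Your upper-bound argument is correct but takes a genuinely different route from the paper's proof of this lemma. The paper stays parallel to the lower bound: it evaluates $c$ on a \emph{highest-weight} vector $\ket\psi$, uses the raising-operator identity $\bra\psi(J_X^2+J_Y^2)\ket\psi=\bra\psi J_Z\ket\psi\le\ell$ for each positive-root pair $(X,Y)$, and then exhibits an explicit partition of $\pauli_\nn$ into $\NN(\NN-1)/2$ such pairs plus the $\NN-1$ diagonal Paulis (each bounded by $\ell^2$). This is elementary --- it needs only that highest-weight vectors exist and are annihilated by positive roots --- and the saturation conditions on the highest weight can be read off directly. You instead compute $c$ \emph{exactly} for every irrep via the SWAP identity and the content-sum action of the transposition class sum (this is essentially the formula the paper later records as~\cref{eq:casimircalculation}, derived there from the textbook Casimir formula), and then maximize a convex quadratic over the polytope $\{\lambda_1\ge\cdots\ge\lambda_\NN=0,\ \ell(\lambda)=\ell\}$, whose vertices you correctly identify as the rectangular shapes $(a^p,0^{\NN-p})$. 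Your approach is heavier in prerequisites (Schur--Weyl, the content action of Jucys--Murphy type) but more systematic: it delivers the exact Casimir value on every irrep as a byproduct, and the strict convexity on the $\lambda_\NN=0$ slice together with the one-variable vertex check pin down the extremizers without needing the Pauli-partition trick. Both approaches land on the same saturating irreps, $\Sym^{2\ell}(\CC^\NN)$ (and its dual) for the upper bound and the Cartan powers of $\Lambda^{\NN/2}(\CC^\NN)$ for the lower.
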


In this lemma it is not required that $\phi_*$ is a subrepresentation of 
a tensor representation~$\tau: U \mapsto ( U \otimes \bar U)^{\otimes t}$.

The lower bound proof can be understood using just the representation theory of~$\sutwo$.

\begin{proof}[Proof of the lower bound]
    The norm $\ell = \norm{J_P}$ is independent of~$P$ by~\cref{thm:EqualSpectrum}.
    \Cref{thm:quadraticSumIsInvariant} says that $A = \sum_P J_P^2$ is a scalar multiple of the identity.
    We have to estimate \emph{the} eigenvalue of~$A$.
    It suffices to examine the action of~$A$ on any vector.

    Let $Z_1 = \sigma^z \otimes \one_{2}^{\otimes (\nn -1)} \in \pauli_\nn$.
    Let $\ket \psi$ be any vector such that $J_{Z_1} \ket \psi = \ell \ket \psi$,
    where $\ell$ is the greatest eigenvalue.
    There are $4^{\nn -1}$ unordered pairs $\{\sigma^x \otimes W, \sigma^y \otimes W\}$ 
    where $W \in \{\one_2, \sigma^x, \sigma^y, \sigma^z\}^{\otimes (\nn-1)}$
    such that $\CC \{Z_1, \sigma^x \otimes W, \sigma^y \otimes W \} \cong \sutwo$
    as Lie algebras.
    We know if $X,Y,Z$ are a triple generating~$\sutwo$ such that $[X,Y] = \ii Z$ 
    (and cyclic permutations thereof),
    then
    \begin{equation}
        \rho_*(X)^2 + \rho_*(Y)^2 + \rho_*(Z)^2 = \ell_\rho(\ell_\rho+1) \one
    \end{equation}
    for any irrep~$\rho_*$ where $\ell_\rho$ is the greatest eigenvalue of~$\rho_*(Z)$.
    The linear span of all vectors obtained by acting with $J_{Z_1}, J_{\sigma^x \otimes W}, J_{\sigma^y \otimes W}$ on~$\ket \psi$
    is an $\sutwo$-irrep because $J_{Z_1}$ assumes the greatest eigenvalue~$\ell$ on~$\ket \psi$, 
    and hence
    \begin{align}
        (J_{Z_1}^2 +J_{\sigma^x \otimes W}^2 + J_{\sigma^y \otimes W}^2)\ket \psi &= (\ell^2 + \ell)\ket \psi,  \nonumber\\
        (J_{\sigma^x \otimes W}^2 + J_{\sigma^y \otimes W}^2)\ket \psi &= \ell \ket \psi.
    \end{align}
    Therefore,
    \begin{equation}
        \bra \psi A \ket \psi 
        \ge 
        \bra \psi J_{Z_1}^2 \ket \psi
        +
        \sum_{W} \bra \psi (J_{\sigma^x \otimes W}^2 + J_{\sigma^y \otimes W}^2) \ket \psi 
        = \ell^2 + 4^{\nn -1} \ell .
    \end{equation}
    This proves the lower bound.
\end{proof}

The remainder of the proof uses highest weights.

\begin{proof}[Proof of the rest of the claims in~\cref{thm:QuadraticCasimirBound}]
    To prove the upper bound we take $\ket \psi$ to be a highest weight vector.
    By definition, this means that $\ket \psi$ is annihilated by all positive roots of~$\su(\NN)$,
    which span the $\CC$-linear space of all strictly upper triangular $\NN$-by-$\NN$ matrices.
    This means that
    \begin{equation}
        (J_X + \ii J_Y) \ket \psi = 0
    \end{equation}
    for any $X + \ii Y \in \su(\NN)$ that is upper triangular in a standard basis for Pauli operators.
    Then,
    \begin{equation}
        0 = \bra \psi ( J_X - \ii J_Y) (J_X +\ii J_Y) \ket \psi = \bra \psi (J_X^2 + J_Y^2 - J_Z) \ket \psi \label{eq:JxJyJz}
    \end{equation}
    where $\ii Z = [X,Y]$. It follows that
    \begin{equation}
        \bra \psi J_X^2 + J_Y^2 \ket \psi = \bra \psi J_Z \ket \psi \le \ell . \label{eq:JzLeEll}
    \end{equation}
    To use this we partition~$\pauli_n$ as follows.
    For a bitstring~$z\in \{0,1\}^{k-1}$ of length~$k-1$,
    define $Z(z) = \bigotimes_{j=1}^{k-1} (\sigma^z)^{z_j} \in \pauli_{k-1}$.
    For a $\ZZ_4$-string~$w \in \{0,1,2,3\}^{\nn - k}$,
    define $W(w) = \bigotimes_{j=1}^{\nn - k} \sigma^{w_j}$
    where $\sigma^0 = \one_2$, $\sigma^1 = \sigma^x$, $\sigma^2=\sigma^y$, and $\sigma^3 = \sigma^z$.
    Then, a triple
    \begin{equation}
            Z(z) \otimes \sigma^x \otimes W(w), 
            \quad Z(z) \otimes \sigma^y \otimes W(w), \quad
            \one_2^{\otimes (k-1)} \otimes \sigma^z \otimes \one_2^{\otimes (\nn -k)}
    \end{equation}
    forms an $\sutwo$ subalgebra.
    So we have identified $\sum_{k=1}^{\nn} 2^{k-1} 4^{\nn - k} = 4^\nn(2^{-1} - 2^{-\nn-1})$
    subalgebras.
    The pairs $\{Z(z) \otimes \sigma^x \otimes W(w),  Z(z) \otimes \sigma^y \otimes W(w)\}$ 
    account for $4^\nn - 2^\nn$ elements of~$\pauli_\nn$,
    and the remaining $2^\nn-1$ operators of~$\pauli_\nn$ are tensor products of~$\one_2$ and~$\sigma^z$.
    Therefore,
    \begin{equation}
        \sum_{P \in \pauli_\nn} \bra \psi J_P^2 \ket \psi \le 4^\nn(2^{-1} - 2^{-\nn-1})\ell + (2^\nn -1) \ell^2 .
    \end{equation}

    The tightness of the bounds can be shown by considering 
    specific highest weights and using the partition above.
    The upper bound is saturated if and only if 
    \cref{eq:JzLeEll} is saturated for all nonidentity tensor product~$Z$ of~$\one_2$ and $\sigma^z$.
    So, we need to show that such a linear functional on the Cartan subalgebra
    is a valid point on the weight lattice.
    This is easy: if $L_j$ denotes the dual vector of the diagonal matrix 
    with a sole $1$ at the $j$-th diagonal, $2\ell L_1$
    is the desired weight.
    
    To prove that the lower bound can be saturated,
    we take a highest weight in which only $J_{Z_1}$, where $Z_1 = \sigma^z \otimes \one_2^{\otimes (\nn-1)}$, 
    takes the greatest eigenvalue~$\ell$, but $J_Z$ for any other diagonal~$Z$ assumes zero.
    This amounts to the weight~$k \sum_{j=1}^{\NN/2} L_j $, giving $\ell = k\NN/4$ for some positive integer~$k$.
    Then, on the highest weight vector~$\ket \psi$,
    the generator~$Z_1$ gives~$\ell^2$, 
    and the $4^{\nn -1}$ pairs $\{ \sigma^x \otimes W, \sigma^y \otimes W\}$ gives~$4^{\nn -1} \ell$,
    but all other $2^\nn - 2$ diagonal generators give zero by the choice of the highest weight.
    Finally, \cref{eq:JxJyJz} implies that the $4^\nn - 2^\nn - 2\cdot 4^{\nn -1}$ 
    generators $Z(z) \otimes \sigma^x \otimes W(w),  Z(z) \otimes \sigma^y \otimes W(w)$
    give zero. 
    
    In those saturating conditions,
    we are forced to choose a unique highest weight given~$\ell$,
    which in turn determines the irrep (up to equivalence).
\end{proof}

Next, we present an alternative and slightly tighter bound, 
which assumes familiarity with highest weights for Lie algebra representations (e.g.~\cite[\S 14-15]{FultonHarris}). 
Note, again, that our proof of the lower bound in \cref{thm:QuadraticCasimirBound} only uses the representation theory of $\su(2)$.

    A finite-dimensional irreducible representation of $\su(\NN)$ is labeled by its highest weight $\sum_j \mu_j L_j$, 
    which is labeled by a sequence of $\NN$ integers $\mu=(\mu_1,\mu_2,\dots,\mu_\NN)$ 
    where $\mu_i \ge \mu_{i+1}$ modulo integer multiples of~$(1,1,\ldots,1)$. 
    We choose a representative~$\mu$ such that $\sum_{i=1}^N \mu_i=0$, 
    which is possible for subrepresentations of~$\tau : U \mapsto (U \otimes \bar U)^{\otimes t}$,
    and henceforth the Killing form~$\langle \cdot, \cdot \rangle$ on the dual of Cartan subalgebra 
    is given by~$\langle \mu, \mu' \rangle = \sum_i \mu_i \mu'_i$.
    (The normalization here is different from that in~\cite[\S15]{FultonHarris}.)
    
    Let $H=\half\sigma^z\otimes \one_2^{\otimes \nn-1}=\frac{1}{2}\diag(1,1,\dots,1,-1,-1,\dots,-1)$ 
    be an element of the Cartan subalgebra. We would like to determine $\ell=\norm{\phi_*(H)}$. 
    Since the set of weights are in the convex hull of the Weyl group orbit of $\mu$, 
    the maximum eigenvalue of $\phi_*(H)$ is given by 
    $\ell=\max_{w\in \mathrm{Weyl}} (w\cdot\mu)(H) = \mu(H) = 
    \half \left(\mu_1+\cdots +\mu_{\NN/2} - \mu_{\NN/2 + 1} - \cdots - \mu_{\NN}\right)$.

    Next, we invoke a formula for the quadratic Casimir operator (e.g.~\cite[(25.14)]{FultonHarris}): 
    For any finite-dimensional irrep $\phi_*$ with highest weight $\mu$,
    and for any basis $\{X_j\}$ of $\su(\NN)$ that is orthonormal with respect to the Killing form,
    \begin{equation}
        \sum_{j}\phi_*(X_j)^2=\left(\langle\mu,\mu\rangle+\langle\mu, \delta\rangle\right)\one,
    \end{equation}
    where $\delta$ is the sum of all positive roots, which can be written in a vector form as 
    $\delta = (\NN-1,\NN-3,\dots,-(\NN-1))$.
    Therefore, $\frac{4}{\NN}\sum_{P \in \pauli_\nn} J_P^2 = \left(\langle\mu,\mu\rangle+\langle\mu,\delta\rangle\right)\one$, 
    where the factor $4/\NN$ comes from renormalizing $J_P$ to an orthonormal basis
    ($\Tr((P/2)^2) = \NN/4$).
    Now, our quantity of interest is exactly determined by the highest weight of a given irrep:
    \begin{equation}\label{eq:casimircalculation}
        \frac{1}{\ell^2}\sum_{P \in \pauli_\nn} J_P^2
        =
        \frac{\NN\left(\langle\mu,\mu\rangle+\langle\mu,\delta\rangle\right)}{\left(\mu_1+\cdots +\mu_{\NN/2}-\mu_{\NN/2 + 1}-\cdots -\mu_{\NN}\right)^2}\one.
    \end{equation}

\begin{lemma}\label{thm:improvedlowerbound}
    Assume $t \le \NN/2$. 
    Let $\phi_*$ be an irreducible $\su(\NN)$-subrepresentation of~$\tau: U \mapsto (U \otimes \bar U)^{\otimes t}$.
    Let $\ell = \norm{\phi_*(P/2)}$ be the Schatten $\infty$-norm, which is independent of~$P \in \pauli_\nn$.
    \begin{equation}
        \frac{1}{\ell^2}\sum_{P \in \pauli_\nn} J_P^2 \succeq \frac{\NN(\NN-t+1)}{2t}\one.
    \end{equation}
    There exists an irreducible subrepresentation~$\phi_*$ of~$\tau$ achieving the equality.
\end{lemma}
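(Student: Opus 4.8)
The plan is to evaluate $\frac{1}{\ell^2}\sum_P J_P^2$ through the highest-weight formula \cref{eq:casimircalculation}, convert the two hypotheses (that $\phi_*$ sits inside $\tau$, and that $t\le\NN/2$) into combinatorial constraints on the highest weight $\mu$, and then reduce the resulting Casimir eigenvalue to an elementary inequality for integer partitions that follows from Chebyshev's sum inequality.

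Write $\mu=(\mu_1\ge\cdots\ge\mu_\NN)$ for the highest weight of $\phi_*$ normalized by $\sum_i\mu_i=0$, and recall from the derivation of \cref{eq:casimircalculation} that $\ell=\half(\mu_1+\cdots+\mu_{\NN/2}-\mu_{\NN/2+1}-\cdots-\mu_\NN)$ and $\delta_i=\NN+1-2i$. Since $\phi_*$ is an irreducible subrepresentation of $\tau_*=(\one\oplus\ad)^{\otimes t}=\bigoplus_k\binom tk\ad^{\otimes k}$, the weight $\mu$ is a weight of $\ad^{\otimes k}$ for some $k\le t$, hence a sum of $k$ weights of $\ad$. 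The weights of $\ad$ are the roots $L_i-L_j$ and the zero weight, each with positive‑part sum at most $1$; since $x\mapsto\sum_i\max(x_i,0)$ is subadditive, the positive‑part sum $s:=\sum_{i:\mu_i>0}\mu_i$ satisfies $s\le k\le t$. By dominance and $\sum_i\mu_i=0$, the positive entries of $\mu$ occupy positions $1,\dots,r$ and the negative entries positions $\NN-r''+1,\dots,\NN$, with $r,r''\le s$ since the nonzero entries are integers of absolute value at least $1$. As $r,r''\le s\le t\le\NN/2$, the first $\NN/2$ positions carry exactly the $r$ positive entries followed by zeros, so $\ell=\sum_{i\le\NN/2}\mu_i=s$; this is the only place the hypothesis $t\le\NN/2$ enters. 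Using $\delta_{\NN+1-i}=-\delta_i$, one then rewrites $\langle\mu,\mu\rangle+\langle\mu,\delta\rangle=g(\lambda^+)+g(\lambda^-)$, where $\lambda^+=(\mu_1,\dots,\mu_r)$ and $\lambda^-=(-\mu_\NN,-\mu_{\NN-1},\dots,-\mu_{\NN-r''+1})$ are partitions of $s$, and $g(\lambda):=\sum_i\lambda_i(\lambda_i+\NN+1-2i)$.

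The core step is the claim that $g(\lambda)\ge s(\NN+1-s)$ for every partition $\lambda$ of $s$, with equality exactly when $\lambda=(1^s)$. Since $g(\lambda)=\|\lambda\|^2+(\NN+1)s-2\sum_i i\lambda_i$, this amounts to $\sum_i\lambda_i(\lambda_i+s-2i)\ge0$. The finite sequences $(\lambda_i)_i$ and $(\lambda_i+s-2i)_i$ are both weakly decreasing (the latter drops by at least $2$ at each step), so Chebyshev's sum inequality gives $r\sum_i\lambda_i(\lambda_i+s-2i)\ge\bigl(\sum_i\lambda_i\bigr)\bigl(\sum_i(\lambda_i+s-2i)\bigr)=s(r+1)(s-r)\ge0$, using $r\le s$. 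Feeding the claim into the identity of the previous paragraph gives $\langle\mu,\mu\rangle+\langle\mu,\delta\rangle\ge2s(\NN+1-s)$, whence by \cref{eq:casimircalculation} and $\ell=s$,
\begin{equation*}
\frac{1}{\ell^2}\sum_{P\in\pauli_\nn}J_P^2=\frac{\NN\bigl(\langle\mu,\mu\rangle+\langle\mu,\delta\rangle\bigr)}{4s^2}\one\succeq\frac{\NN(\NN+1-s)}{2s}\one\succeq\frac{\NN(\NN-t+1)}{2t}\one,
\end{equation*}
the last inequality because $s\mapsto(\NN+1-s)/s$ is decreasing and $s\le t$.

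For the equality statement, take $\mu=\sum_{i=1}^t(L_i-L_{\NN+1-i})=(1^t,0^{\NN-2t},(-1)^t)$, which is dominant because $t\le\NN/2$; the irreducible $\su(\NN)$-representation $W_\mu$ with this highest weight occurs inside $\wedge^t(\CC^\NN)\otimes\wedge^t(\overline{\CC^\NN})\subseteq(\CC^\NN)^{\otimes t}\otimes(\overline{\CC^\NN})^{\otimes t}\cong(\CC^\NN\otimes\overline{\CC^\NN})^{\otimes t}$ (it is generated by the tensor product of the two highest‑weight vectors), so it is an irreducible subrepresentation of $\tau$. For it, $s=t$ and $\lambda^+=\lambda^-=(1^t)$, so Chebyshev, the partition claim, and the monotonicity step are all tight; a direct computation gives $\langle\mu,\mu\rangle+\langle\mu,\delta\rangle=2t(\NN-t+1)$ and $\ell=t$, so equality holds throughout. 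I expect the main difficulty to lie in the combinatorial bookkeeping — pinning down $\ell=s$ (which is where $t\le\NN/2$ is needed) and identifying $g(\lambda)\ge s(\NN+1-s)$ as the right lemma, provable by Chebyshev — rather than in any isolated estimate.
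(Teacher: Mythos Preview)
Your argument is correct and follows the paper's approach: minimize the Casimir ratio \cref{eq:casimircalculation} over the highest weights appearing in~$\tau$, landing on the same minimizer $\mu_*=(1^t,0^{\NN-2t},(-1)^t)$. The paper merely asserts this minimizer (citing an external reference for the constraint $\sum_i|\mu_i|\le 2t$ and calling the rest an ``elementary calculation''), whereas you derive the constraint $s\le t$ from the weight structure of $\ad^{\otimes k}$, pin down $\ell=s$ explicitly via $t\le\NN/2$, and actually carry out the minimization via Chebyshev's sum inequality, so your proof is more self-contained.
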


Roughly speaking, in the small-$t$ regime where $t\ll\NN$, 
this gives a factor of 2 improvement relative to \cref{thm:QuadraticCasimirBound}.
Note that this is consistent with the tightness stated in \cref{thm:QuadraticCasimirBound},
because the irreps that saturate the lower bound of \cref{thm:QuadraticCasimirBound} are not subrepresentations of $\tau$ when $t\ll\NN$.
If $t = \NN/2$, then this lemma gives the same bound as~\cref{thm:QuadraticCasimirBound}.

The strategy in the proof below applies for $t > \NN/2$
and gives an alternative proof of the lower bound of \cref{thm:QuadraticCasimirBound},
but we will omit such calculation.

\begin{proof}
    It only remains to perform an elementary calculation: 
    minimize the right-hand side of \cref{eq:casimircalculation} 
    over the highest weights that correspond to irreducible subrepresentations of~$\tau$.

    The decomposition of $\tau$ into irreps is well understood (e.g.~\cite[Theorem 4]{Roy2009}).
    An irrep with highest weight $\mu$ is a subrepresentation of $\tau$ 
    if and only if $\sum_i \mu_i = 0$ and $\sum_i \abs{\mu_i} \leq 2t$.
    When $t\leq\NN/2$, the minimizer is given by
    \begin{equation}
    \mu_*=\left(1,1,\dots,1,0,0,\dots,0,-1,-1,\dots,-1\right)
    \end{equation}
    with the first $t$ entries being 1 and last $t$ entries being $-1$. 
    Calculating the right-hand side of \cref{eq:casimircalculation} with respect to $\mu_*$ gives the stated bound.
\end{proof}

\begin{remark}
    If~\cref{eq:avgkernelprojbound} is saturated, then~$t \in \{1,2,4\}$.
    If $\ell \notin \{1,2,4\}$,
    then we know by comparing~\cref{thm:sutwoSpectralGap} with~\cref{eq:maingapbound} that
    $K(J_x) + K(J_y) + K(J_z) + (J_x^2 + J_y^2 + J_z^2) / \ell^2$ has norm strictly smaller than~$3$
    in any nontrivial irrep of~$\su(\NN)$ with $\norm{J_x} = \ell$
    for any triple $J_x, J_y, J_z$ that form an $\sutwo$ subalgebra.
\end{remark}

\begin{remark}
    It may be significant underestimation of the spectral gap 
    using $K(H) \preceq \one -  H^2 / \norm{H}^2$.
    Take the case of~$\su(2)$ and consider a random Pauli rotation by~$\{(\sigma^x, \half), (\sigma^y, \half)\}$.
    In the $(2\ell+1)$-dimensional $\sutwo$-irrep, 
    since $J_x^2 + J_y^2 + J_z^2 = (\ell^2 + \ell)\one$ and $J_z^2 \preceq \ell^2 \one$,
    we see $(J_x^2 + J_y^2)/\ell^2 \succeq \one / \ell$,
    which is best possible 
    since $\bra \psi (J_x^2 + J_y^2) \ket \psi /\ell^2 = 1/\ell$ 
    if $J_z \ket \psi = \ell \ket \psi$.
    This gives a lower bound on the spectral gap~$\Omega(1/t)$. 
    However, by the exact calculation in~\cref{thm:sutwoKOverlap} above,
    we know that the spectral gap of this design is independent of~$t$.
\end{remark}

\begin{proof}[Proof of~\cref{thm:main1}]
    The spectral gap bound is proved in~\cref{thm:SpectralGapByQuadraticCasimir}.
    Note that a random Pauli rotation is defined by a uniform probability distribution on~$\pauli_\nn \times (-\pi,\pi)$ 
    while the second statement of~\cref{thm:main1} takes a uniform distribution on~$\pauli_\nn \times (-2\pi,2\pi)$.
    
    The discrepancy of a factor of~$2$ here is completely dismissable for the first statement of~\cref{thm:main1}
    since we only need a range of~$\theta$ in~$\exp(\ii \theta \phi_*(P/2))$ 
    such that the average over~$\theta$ of 
    a represented operator~$\phi_*(P/2)$ for any Pauli operator~$P \in \pauli_\nn$
    is the projection onto the kernel of~$\phi_*(P/2)$;
    for irreps~$\phi$ that appear in~$\tau : U \mapsto (U \otimes \bar U)^{\otimes t}$,
    the eigenvalues of~$\phi_*(P/2)$ are integers.
    
    However, if we consider all finite dimensional unitary representation~$\rho$ of~$\SU(\NN)$,
    we are no more guaranteed that $\rho_*(P/2)$ has integer eigenvalues.
    Fortunately, every eigenvalue of~$\rho_*(P/2)$ is half an integer for any unitary representation~$\rho$
    because every finite dimensional unitary irrep of~$\SU(\NN)$
    is a subrepresentation of~$U \mapsto U^{\otimes m}$ 
    for some integer~$m \ge 0$~\cite[\S15.3]{FultonHarris}.
    
    Note that for an irrep~$\rho_*$ where $2\norm{\rho_*(P/2)}$ is an odd integer,
    the kernel of~$\rho_*(P/2)$ is zero, implying that averaging over $\theta$ eliminates this irrep.
\end{proof}

\begin{proof}[Proof of~\cref{thm:main2}]
    Let $\calD = \calC_t^k - \calH_t$ be the difference of the channels.
    With~$k \geq (4\log 2)\nn t^2 + 4t \log \frac 1 \eps$,
    we have $\norm{\calD \otimes \mathcal I}_{2\to 2} = \norm{\calD}_{2 \to 2} \le (1 - \frac 1 {4t})^k \le \eps 2^{-\nn t}$
    by~\cref{thm:main1},
    where $\mathcal I$ means the identity channel on any auxiliary system.
    Since the diamond norm is obtained by taking an equal dimensional auxiliary system,
    $\norm{\calD}_\diamond = \norm{\calD \otimes \mathcal I_{2^{\nn t}}}_{1 \to 1} \le 2^{\nn t} \norm{\calD \otimes \mathcal I_{2^{\nn t}}}_{2 \to 2} \le \eps$,
    where the first equality is by~\cite[Theorem~11.1]{KSV}.
    (The use of $2 \to 2$ norm for this purpose has appeared in~\cite{Low2010thesis,Brand_o_2016}.)
    
    Similarly, if~$k \geq (4\log 8)\nn t^2 + 4t \log \frac 1 \eps$,
    we have $\norm*{\calC^k_t - \calH_t}_\diamond \le \eps 4^{-\nn t}$ 
    which implies $(1-\eps)\calH_t\preceq  \calC_t^k\preceq(1+\eps)\calH_t$ 
    by~\cite[Lemma 3]{Brand_o_2016}. 
    
    The gate complexity follows from~\cref{thm:singlePauliRotationComplexity}.
\end{proof}

\section{Orthogonal designs}\label{sec:orthogonaldesigns}

Our approach can be adapted for special orthogonal groups.
This section uses arguments parallel to those in the analysis for~$\SU(\NN)$,
so we will be rather brief.

\subsection{Skew-symmetric Pauli operators}

We consider the special orthogonal group~$\SO(\NN)$ in a fashion similar to our random Pauli rotations.
We directly use the inclusion $\SO(\NN) = \SU(\NN) \cap \RR^{\NN \times \NN} \subset \SU(\NN)$ for $\NN = 2^\nn$.

\newcommand{\skewpauli}{{\mathbf{Y}}}

Define a set~$\skewpauli_\nn$ of Pauli operators with entries in~$\ii \RR$:
\begin{equation}
    \skewpauli_\nn = \{ P \in \pauli_\nn \,|\, \text{An odd number of $\sigma^y$ tensor factors appear in~$P$.} \}
\end{equation}

We first verify that the $\RR$-linear span of~$\ii\skewpauli_\nn$ is precisely the real Lie algebra~$\so(\NN = 2^\nn)$
consisting of all antisymmetric real matrices.
It is clear that $\ii \skewpauli_\nn$ is $\RR$-linearly independent and consists of skew-symmetric real matrices.
We can check that $\abs{\skewpauli_\nn} = \NN (\NN-1)/2$ by solving a recursion equation as follows.
Let $e(\nn)$ be the number of all Pauli operators $\{ \one, \sigma^x, \sigma^y, \sigma^z \}^{\otimes \nn}$ 
that contain an even number of tensor factors~$\sigma^y$.
The identity operator~$\one_2^{\otimes \nn}$ contributes~$1$ to~$e(\nn)$.
Consider a subset of Pauli ``strings'' whose first ``letter'' is $\sigma^y$, 
and another set of Pauli strings whose first letter is one of~$\one,\sigma^x,\sigma^z$.
It is then clear that $\abs{\skewpauli_{\nn+1}} = 3 \abs{\skewpauli_\nn} + e(\nn)$ 
and $e(\nn+1) = 3 e(\nn) + \abs{\skewpauli_\nn}$ with initial conditions~$e(1) = 3$ and $\abs{\skewpauli_1} = 1$.
The claim~$\abs{\skewpauli_\nn} = 2^{\nn-1} (2^\nn -1)$ follows by induction in~$\nn$.

\begin{theorem}\label{thm:RandomSkewSymmetricPauliRotation}
    Suppose $\NN = 2^\nn > 4$.
    For any integer~$t \ge 1$, we have
    \begin{equation}
    \norm*{
        \E_{\theta \sim (-2\pi,2\pi), \, P \in \skewpauli_\nn} (e^{\ii \theta P / 2})^{\otimes t} - \E_{O \sim \SO(\NN)} O^{\otimes t}
    }
    \le
    1 - \frac{1}{2t}\frac{\NN - 2}{\NN - 1} - \frac{2}{\NN(\NN-1)} .
    \end{equation}
\end{theorem}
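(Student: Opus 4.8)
The plan is to transcribe the argument of \cref{sec:casimir} almost verbatim, replacing the set $\pauli_\nn$ of Pauli operators by the set $\skewpauli_\nn$ of skew-symmetric ones and the representation $U \mapsto (U\otimes\bar U)^{\otimes t}$ by the tensor representation $\tau : O \mapsto O^{\otimes t}$ of $\SO(\NN)$, whose induced Lie-algebra representation satisfies $\tau_*(P/2) = \tfrac12\sum_{j=1}^t \one_\NN^{\otimes(j-1)}\otimes P \otimes\one_\NN^{\otimes(t-j)}$. First I would note that, since the eigenvalues of $P\in\skewpauli_\nn$ are $\pm1$, the eigenvalues of $\tau_*(P/2)$ are half-integers in $[-t/2,t/2]$; this is the sole reason the interval $(-2\pi,2\pi)$ appears instead of $(-\pi,\pi)$, because the average of $e^{\ii\theta\lambda}$ over $\theta\in(-2\pi,2\pi)$ vanishes precisely for nonzero $\lambda\in\tfrac12\ZZ$. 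Thus $\E_{\theta\sim(-2\pi,2\pi)}\exp(\ii\theta\tau_*(P/2))=K(\tau_*(P/2))$, and since $P\mapsto\tau_*(\ii P)$ is a representation of $\so(\NN)$, the reasoning behind \cref{thm:ThetaAverage,thm:HaarProjectionTrivialRep,thm:KernelSum} yields
\[
  \norm*{\E_{\theta\sim(-2\pi,2\pi),\,P\in\skewpauli_\nn}(e^{\ii\theta P/2})^{\otimes t} - \E_{O\sim\SO(\NN)}O^{\otimes t}}
  = \max_\phi\norm*{\E_{P\in\skewpauli_\nn}K(J_P)},
\]
where $\phi$ runs over the nontrivial $\so(\NN)$-irreps occurring in $\tau$, $J_P:=\phi_*(P/2)$, and $\ell:=\norm{J_P}$ is a nonzero element of $\tfrac12\ZZ$ with $\ell\le t/2$.

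Two ingredients then finish the job. The first is the analogue of \cref{thm:EqualSpectrum}: for $\NN>4$, all elements of $\skewpauli_\nn$ are $\SO(\NN)$-conjugate, so $\ell$ does not depend on $P$. This is exactly where the hypothesis $\NN>4$ is used: for $\NN=4$ the ``anticommutation graph'' on $\skewpauli_2$ breaks into two triangles (mirroring $\so(4)\cong\so(3)\oplus\so(3)$), so the six skew-symmetric Paulis split into two non-conjugate classes; for $\NN>4$ this graph is connected, and conjugacy follows by chaining rotations $\exp(\ii\theta P)\in\SO(\NN)$ (which lies in $\SO(\NN)$ because $\ii P$ is real antisymmetric) together with sign flips by $\ii R$, $R\in\skewpauli_\nn$, which also lie in $\SO(\NN)$. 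The second is the analogue of \cref{thm:quadraticSumIsInvariant}: $\sum_{P\in\skewpauli_\nn}J_P^2=c_\phi\one$ for a scalar $c_\phi$, proved by the same Leibniz-rule-and-Schur argument. The only new point is that if $P,Q\in\skewpauli_\nn$ anticommute then the Pauli $R$ with $PQ=\pm\ii R$ again lies in $\skewpauli_\nn$ --- because $[\ii P,\ii Q]$ lies in $\so(\NN;\RR)$, which is $\RR$-spanned by $\ii\skewpauli_\nn$ --- so the skew Paulis anticommuting with a fixed $Q$ partition into pairs $\{P,R\}$ with $J_P,J_Q,J_R$ spanning an $\sutwo$, and the cancellation in \cref{eq:sutwoCasimir} goes through unchanged.

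With these, I would estimate $c_\phi$ exactly as in the lower-bound proof of \cref{thm:QuadraticCasimirBound}: act on $\ket\psi$ with $J_{Z_1}\ket\psi=\ell\ket\psi$ at the top eigenvalue, where $Z_1=\sigma^y\otimes\one_2^{\otimes(\nn-1)}\in\skewpauli_\nn$. A Pauli $A_1\otimes\cdots\otimes A_\nn$ lies in $\skewpauli_\nn$ and anticommutes with $Z_1$ iff $A_1\in\{\sigma^x,\sigma^z\}$ and $A_2\otimes\cdots\otimes A_\nn$ contains an odd number of $\sigma^y$ factors, which gives $2\abs{\skewpauli_{\nn-1}}=\NN(\NN-2)/4$ such operators; they form $\NN(\NN-2)/8$ pairs, each pair $\{A,R\}$ contributing $\bra\psi(J_A^2+J_R^2)\ket\psi=\ell$ from the $\sutwo$ Casimir, the remaining commuting terms contribute nonnegatively, and $J_{Z_1}$ contributes $\ell^2$. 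Hence $c_\phi\ge\ell^2+\tfrac{\NN(\NN-2)}{8}\ell$, so $\tfrac1{\ell^2}\sum_P J_P^2\succeq(1+\tfrac{\NN(\NN-2)}{8\ell})\one\succeq(1+\tfrac{\NN(\NN-2)}{4t})\one$ using $\ell\le t/2$. Applying $K(H)\preceq\one-H^2/\norm{H}^2$ (\cref{eq:kernelupperbound}) to each $J_P$, averaging, and using $\abs{\skewpauli_\nn}=\NN(\NN-1)/2$ gives
\[
  \E_{P\in\skewpauli_\nn}K(J_P)\preceq\one-\frac{2}{\NN(\NN-1)}\Bigl(1+\frac{\NN(\NN-2)}{4t}\Bigr)\one
  =\Bigl(1-\frac{2}{\NN(\NN-1)}-\frac{1}{2t}\,\frac{\NN-2}{\NN-1}\Bigr)\one,
\]
which is the claimed bound. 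I expect the main obstacle to be the conjugacy statement for $\skewpauli_\nn$: it genuinely fails for $\NN\le4$, and for $\NN>4$ it rests on the connectivity of the anticommutation graph on the skew-symmetric Paulis, which needs a short combinatorial verification; everything else is bookkeeping governed by the counts $\abs{\skewpauli_\nn}=\NN(\NN-1)/2$, $\#\{P\in\skewpauli_\nn:PZ_1=-Z_1P\}=\NN(\NN-2)/4$, and the bound $\ell\le t/2$.
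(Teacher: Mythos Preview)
Your proposal is correct and follows essentially the same four-step architecture as the paper: reduce to nontrivial irreps, average over~$\theta$ to get kernel projectors, bound $K(J_P)\preceq\one-J_P^2/\ell^2$, and lower-bound the quadratic Casimir by counting $\sutwo$ triples through $Z_1=\sigma^y\otimes\one^{\otimes(\nn-1)}$; the count $\abs{\skewpauli_{\nn-1}}=\NN(\NN-2)/8$ and the final arithmetic agree exactly.

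The one substantive difference is how you establish that all $P\in\skewpauli_\nn$ are $\SO(\NN)$-conjugate (your analogue of \cref{thm:EqualSpectrum}). You argue via connectivity of the anticommutation graph on~$\skewpauli_\nn$ together with $\sutwo$ rotations $e^{i\theta R}\in\SO(\NN)$ and sign flips $iR\in\SO(\NN)$; this is a valid route, but you leave the connectivity for $\nn\ge3$ as an unproved ``short combinatorial verification.'' The paper instead proves conjugacy constructively: it observes that $\det(M\otimes\one_2)=(\det M)^2$, so any $O\in\mathsf O(\NN/2)$ yields $O\otimes\one_2\in\SO(\NN)$, and then uses two-qubit Clifford conjugations (CNOT and Hadamard, which lie in~$\mathsf O(4)$) to reduce any $P\in\skewpauli_\nn$ to $\sigma^y\otimes\one_2^{\otimes(\nn-1)}$ by successively replacing $\sigma^y\otimes\sigma^y$ by $\sigma^x\otimes\sigma^x$, $\sigma^z$ by $\sigma^x$, and $\sigma^y\otimes\sigma^x$ by $\sigma^y\otimes\one_2$. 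The hypothesis $\nn\ge3$ enters there to guarantee an extra $\one_2$ factor is available when a two-qubit Clifford is applied. This gives the conjugacy in a fully explicit way and avoids the graph-connectivity lemma; conversely, your approach makes the role of $\so(4)\cong\so(3)\oplus\so(3)$ at $\NN=4$ more transparent.
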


The small orthogonal groups, $\SO(2)$ and $\SO(4)$, 
are excluded for simplicity of the proof as they are not simple Lie groups.
Note that every finite dimensional unitary irrep\footnote{%
    A representation of the Lie group~$\SO(\NN)$ induces a representation of its Lie algebra~$\so(\NN)$, 
    but not every representation of~$\so(\NN)$ arises in this way,
    and the seed for the remaining $\so(\NN)$-irreps is spin representations.
    This phenomenon does not happen for~$\SU(\NN)$.
    It does not seem to make sense to consider spin representations in orthogonal designs
    since the Haar average over~$\SO(\NN)$ of a spin representation is ill-defined.
}
of~$\SO(\NN)$ for $\NN > 4$ appears as a subrepresentation of the tensor representation~$O \mapsto O^{\otimes t}$ 
for some integer~$t \ge 1$.
This is explained in~\cite[\S19]{FultonHarris}.
Therefore, \cref{thm:RandomSkewSymmetricPauliRotation} implies that
for any irrep of~$\SO(\NN = 2^\nn > 4)$ the spectral gap is at least $1 / \dim \SO(\NN)$.

A representation of~$\SU(\NN)$ gives a representation of~$\SO(\NN)$,
but a representation of~$\SO(\NN)$ does not in general give a representation of~$\SU(\NN)$.
So, this theorem cannot be thought of as a corollary of~\cref{thm:main1}.
The proof below is however almost identical to that of~\cref{thm:main1},
mainly because we use only the common aspects of the representation theories of~$\SO(\NN)$ and $\SU(\NN)$.
Thus, we will assume a reader's familiarity with the proof of~\cref{thm:main1},
or rather the proofs of~\cref{thm:SpectralGapByQuadraticCasimir,thm:QuadraticCasimirBound}, and omit some detail.

\begin{proof}
    (Step~0: to irreps)
    The Haar average (the second term in the norm) 
    is the projector onto the trivial subrepresentation (\cref{thm:HaarProjectionTrivialRep}).
    Hence, we consider an irreducible nontrivial $\so(\NN)$-subrepresentation~$\rho_*$ of~$O \mapsto O^{\otimes t}$.
    
    (Step~1: random angles give kernel projectors.)
    It is clear that $\rho_*(P/2)$ has half-integer eigenspectrum for any~$P$,
    and therefore the average over~$\theta \in (-2\pi,2\pi)$
    eliminates all nonzero eigenvalues: $\E_{\theta\sim(-2\pi,2\pi)} e^{\ii \theta \rho_*(P / 2)} = K(\rho_*(P/2))$
    is the projector onto the kernel of~$\rho_*(P/2)$.\footnote{
        We implicitly allowed complex coefficients for~$\so(\NN)$, i.e., the complexification.
        Exponentiated matrices are all real.
    }
    
    (Step~2: identical eigenspectra for all represented operators)
    The norm of~$\rho_*(P/2)$ is independent of~$P \in \skewpauli_\nn$:
    this is an analog of~\cref{thm:EqualSpectrum} for~$\so(\NN)$,
    and the proof is similar.
    Note that for any matrix~$M$, we have $\det (M \otimes \one_2) = \det (M \oplus M) = (\det M)^2$.
    So, if $O \in \mathsf O(\NN/2)$, then $O \otimes \one_2 \in \SO(\NN)$.
    The Clifford unitaries, $\mathrm{CNOT}$ and Hadamard, are in~$O(4)$.
    If there is a tensor factor~$\sigma^y \otimes \sigma^y$
    in some $P \in \skewpauli_\nn$,
    then we must have $\nn \ge 3$ since $P$ must contain an odd number of~$\sigma^y$'s.
    Hence, using $\mathrm{CNOT}$ and Hadamard Clifford unitary acting on those two $2 \times 2$ tensor factors,
    we can turn~$P$ by $\SO(\NN)$ conjugation into a Pauli that where $\sigma^y \otimes \sigma^y$ is replaced
    by~$\sigma^x \otimes \sigma^x$ while not changing any other tensor factor of~$P$. 
    Inductively, we turn all pairs of~$\sigma^y$ tensor factors into pairs of~$\sigma^x$.
    By the same argument, we can turn any $\sigma^z$ tensor factor into $\sigma^x$.
    Now, the conjugation of $\sigma^y \otimes \one_2$ by~$\mathrm{CNOT}$ is $\sigma^y \otimes \sigma^x$.
    Therefore, any $P \in \skewpauli_\nn$ with $\nn \geq 3$ is congruent to $\sigma^y \otimes \one_2^{\otimes (\nn -1)}$
    by some element of~$\SO(\NN)$.
    Hence for any $P \in \skewpauli_\nn$ where $\nn \ge 3$, there exists $O \in \SO(\NN)$ such that 
    $\rho(O)\rho_*(P)\rho(O)^{-1} = \rho_*(\sigma^y \otimes \one_2^{\otimes (\nn-1)})$.
    Put $\ell = \norm{\rho_*(P/2)} \le t/2$ for any~$P \in \skewpauli_\nn$.

    (Step~3: to quadratic Casimir)
    Bounding the kernel projector by the quadratic operator (\cref{eq:opineq}), 
    we are left with the problem of lower bounding
    \begin{equation}
        \E_{P \in \skewpauli_\nn} \rho_*(P/2)^2 / \ell^2 .
    \end{equation}
    By a completely analogous calculation as in~\cref{eq:sutwoCasimir}, this average is a scalar multiple of the identity.
    
    (Step~4: find a large number of~$\sutwo$'s)
    To estimate the unique eigenvalue of this average,
    we look at a vector $\ket \psi$ such that $\rho_*(\half \sigma^y \otimes \one_2^{\otimes (\nn - 1)}) \ket \psi = \ell \ket \psi$.
    We can find $\abs{\skewpauli_{\nn-1}}$ $\sutwo$-subalgebras:
    \begin{equation}
        \sigma^y \otimes \one_2^{\otimes (\nn - 1)},\quad  \sigma^x \otimes W,  \quad \sigma^z \otimes W,
    \end{equation}
    where each triple of $\sutwo$ generators is uniquely labeled by~$W \in \skewpauli_{\nn-1}$.
    Hence,
    \begin{equation}
        \sum_{P \in \skewpauli_\nn} \rho_*(P/2)^2  \succeq \one \left( \ell^2 + \abs{\skewpauli_{\nn-1}} \ell \right) .
    \end{equation}
    The theorem is proved since $\ell \le t/2$.
\end{proof}

\subsection{Skew-symmetric elementary matrix basis}

We give another orthogonal design.
In this subsection, we will not require $\NN$ to be a power of~$2$.

Let $\NN \ge 3$ be any integer.
For any integers $a,b$ ($1 \le a, b \le \NN$),
let $E_{a,b} = \ket a \!\bra b - \ket b \! \bra a$ 
denote the skew-symmetric $\NN \times \NN$ matrix
in which there are only two nonzero matrix entries~$\pm 1$.
Define
\begin{equation}
\skewe_\NN  = \left\{
    E_{a,b}  \in \RR^{\NN \times \NN} \,\,\middle|\,\,
    1 \le a < b \le \NN
    \right\}
\end{equation}
Clearly, $\skewe_\NN$ is a linear basis for~$\so(\NN)$.
We see that $[E_{a,b}, E_{b,c}] = E_{a,c}$ for any~$a,b,c$.
This basis is convenient because different elements are orthogonal with respect to the Killing form.

\begin{theorem}\label{thm:RandomSkewSymmetricElementaryRotation}
    Let $\NN \ge 3$ and $t \ge 1$ be any integers.
    Then,
    \begin{equation}
    \norm*{
        \E_{\theta \sim (-\pi,\pi), \, E \in \skewe_\NN} (e^{\theta E})^{\otimes t} - \E_{O \sim \SO(\NN)} O^{\otimes t}
    }
    \le
    1 - \frac{1}{t}\frac{2(\NN - 2)}{\NN(\NN - 1)} - \frac{2}{\NN(\NN-1)} .
    \end{equation}
\end{theorem}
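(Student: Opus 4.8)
The plan is to run the argument of \cref{thm:SpectralGapByQuadraticCasimir,thm:QuadraticCasimirBound} (equivalently, \cref{thm:RandomSkewSymmetricPauliRotation}) essentially verbatim, with the basis $\skewe_\NN$ in place of $\skewpauli_\nn$ and the structure relations $[E_{a,b},E_{b,c}]=E_{a,c}$ (and its cyclic companions) playing the role the anticommuting Pauli triples played there.

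\emph{Reduction to irreps.} By \cref{thm:HaarProjectionTrivialRep}, $\E_{O\sim\SO(\NN)}O^{\otimes t}$ is the orthogonal projector onto the trivial subrepresentation of $\tau:O\mapsto O^{\otimes t}$. Decomposing $\tau$ into irreducibles (complete reducibility for the compact group $\SO(\NN)$), the operator $\E_{\theta,E}(e^{\theta E})^{\otimes t}$ is block diagonal along the same decomposition and agrees with the Haar projector on every trivial block, so the norm in question equals $\max_{\rho_*}\norm*{\E_{E\in\skewe_\NN} K(\rho_*(E))}$, the maximum over nontrivial irreducible $\so(\NN)$-subrepresentations $\rho_*$ of $\tau$. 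The angle average is handled as before: the one-parameter subgroup $\{e^{\theta E_{a,b}}\}$ has period $2\pi$, so $e^{2\pi\rho_*(E_{a,b})}=\one$ and the eigenvalues of $\rho_*(E_{a,b})$ lie in $\ii\ZZ$; hence $\E_{\theta\sim(-\pi,\pi)}e^{\theta\rho_*(E)}$ kills every nonzero eigenvalue and equals the kernel projector $K(\rho_*(E))$.

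\emph{Uniform norm and the quadratic bound.} Each $E_{a,b}$ is carried to $\pm E_{c,d}$ by a signed permutation matrix of determinant $1$ (permute coordinates to send $\{a,b\}$ to $\{c,d\}$, and compose with a sign flip on a coordinate outside $\{c,d\}$ if the permutation is odd; available since $\NN\ge3$), so $\ell:=\norm{\rho_*(E)}$ is common to all $E\in\skewe_\NN$, and since $\ii E_{a,b}$ has eigenvalues $0,\pm1$ in the defining representation, $\ell$ is a positive integer with $\ell\le t$ (it vanishes only on the trivial irrep, as $\skewe_\NN$ spans $\so(\NN)$). Applying \cref{eq:opineq} to the Hermitian operator $\ii\rho_*(E)$ gives $K(\rho_*(E))\preceq\one+\rho_*(E)^2/\ell^2$, hence
\begin{equation}
    \E_{E\in\skewe_\NN}K(\rho_*(E))\preceq\one+\frac{1}{\ell^2\binom{\NN}{2}}\sum_{E\in\skewe_\NN}\rho_*(E)^2 .
\end{equation}
Since $\{E_{a,b},E_{b,c},E_{a,c}\}$ spans a copy of $\sutwo$, the Leibniz-rule cancellation of \cref{eq:sutwoCasimir} shows $\sum_E\rho_*(E)^2$ commutes with every $\rho_*(E')$, so by Schur's lemma it equals $\lambda\one$ for a scalar $\lambda\le0$.

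\emph{Lower bound on the Casimir scalar.} Let $\ket\psi$ satisfy $\ii\rho_*(E_{1,2})\ket\psi=\ell\ket\psi$ with $\ell$ the top eigenvalue. For each $c\in\{3,\dots,\NN\}$ the triple $\{E_{1,2},E_{2,c},E_{1,c}\}$ is an $\sutwo$ in which $\ii\rho_*(E_{1,2})$ is the Cartan element, so on the highest-weight $\sutwo$-module generated by $\ket\psi$ the quadratic Casimir acts as $\ell(\ell+1)$, giving $\bra\psi\!\left(\rho_*(E_{1,c})^2+\rho_*(E_{2,c})^2\right)\!\ket\psi=-\ell$; combined with $\bra\psi\rho_*(E_{1,2})^2\ket\psi=-\ell^2$ and $\bra\psi\rho_*(E_{a,b})^2\ket\psi\le0$ for the remaining $\binom{\NN-2}{2}$ basis elements, we obtain $\lambda\le-\ell^2-(\NN-2)\ell$. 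Substituting and using $\binom{\NN}{2}=\NN(\NN-1)/2$ and $\ell\le t$ yields $\norm*{\E_{E}K(\rho_*(E))}\le 1-\frac{\ell+\NN-2}{\ell\binom{\NN}{2}}\le 1-\frac{1}{t}\frac{2(\NN-2)}{\NN(\NN-1)}-\frac{2}{\NN(\NN-1)}$, as claimed. The only points requiring care are the sign conventions in identifying $\{E_{a,b},E_{b,c},E_{a,c}\}$ with a standard triple $[J_a,J_b]=\ii J_c$ and the determinant-$1$ conjugacy for small $\NN$; there is no essential obstacle, the representation-theoretic content being identical to the unitary and skew-symmetric-Pauli cases.
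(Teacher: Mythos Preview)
Your proof is correct and follows essentially the same four-step approach as the paper: reduce to nontrivial irreps via \cref{thm:HaarProjectionTrivialRep}, average over~$\theta$ to obtain kernel projectors (using that $e^{2\pi E}=\one$ so eigenvalues of $\rho_*(E)$ lie in $\ii\ZZ$), establish $\SO(\NN)$-conjugacy of all~$E_{a,b}$ via signed permutations, bound $K$ by the quadratic Casimir, and lower bound the latter using the $\NN-2$ copies of $\sutwo$ through $E_{1,2}$. Your sign bookkeeping with the anti-Hermitian $\rho_*(E)$ (writing $K\preceq\one+\rho_*(E)^2/\ell^2$ and $\lambda\le -\ell^2-(\NN-2)\ell$) is handled correctly and matches the paper's computation up to the obvious substitution $H=\ii\rho_*(E)$.
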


This can be used to generate an approximately Haar random $\NN \times \NN$ orthogonal matrix fast.
The exponential of an element~$E \in \skewe_\NN$ is a $2\times 2$ matrix, 
direct summed with an $\NN-2$ dimensional identity matrix.
Hence, multiplying a dense $\NN \times \NN$ matrix by $e^{\theta E}$ takes
$\bigO \NN$ arithmetic operations.
For some applications, 
this method can be better than generating $\NN \times \NN$ Gaussian random entries
and running the Gram--Schmidt process.

\begin{proof}
    As before, we consider an $\SO(\NN)$-irrep $\rho$,
    and estimate the norm of~$\E_{\theta,E} \rho(e^{\theta E})$. 
    
    (Step~1: random angles give kernel projectors.)
    The eigenvalues of any $E\in \skewe_\NN$ are $0, \pm \ii$.
    So, the average over~$\theta$ gives $\E_{\theta,E} \rho(e^{\theta E}) = \E_{E} K(\rho_*(E))$.
            
    (Step~2: identical eigenspectra for all represented operators)
    It is obvious that two different elements of~$\skewe_\NN$ are related by some row and column permutations.
    A transposition is not in~$\SO(\NN)$, but the product of a transposition and a diagonal matrix with $\NN-1$ entries being~$1$ and the remaining being~$-1$ is.
    Since $\NN \ge 3$, we can always find such diagonal matrix that leaves a given $E_{a,b}$ fixed 
    --- just look at a zero column or a row.
    Hence, any two elements of~$\skewe_\NN$ are congruent by~$\SO(\NN)$,
    and the eigenspectrum of represented operators~$\rho_*(E)$ is independent of~$E \in \skewe_\NN$.
    Let $\ell = \norm{\rho_*(E)} \le t$ for any~$E \in \skewe_\NN$. 
    
    (Step~3: to quadratic Casimir)
    We need to check that $\sum_{E \in \skewe_\NN} \rho_*(E)^2$ commutes with every~$E \in \skewe_\NN$.
    A moment's thought shows that it suffices to check the commutation of $\rho_*(E_{a,b})$ 
    with $\rho_*(E_{a,c})^2 + \rho_*(E_{b,c})^2$ for any~$c$,
    but this is exactly the same calculation as for the $\sutwo$ case.
    Hence, an upper bound~$\one - \E_E \rho_*(E)^2 / \ell^2$ on $\E_E K(\rho_*(E))$ is a scalar multiple of the identity.

    (Step~4: find a large number of~$\sutwo$'s)
    We focus on a vector with an eigenvalue~$\ell$ for~$\rho_*(E_{1,2})$.
    For any $c \ge 3$, we have an $\sutwo$-subalgebra generated by~$E_{1,2}, E_{1,c}, E_{2,c}$.
    So, $\sum_{E \in \skewe_\NN} \rho_*(E)^2 \succeq \one (\ell^2 + (\NN - 2) \ell)$.
\end{proof}

\subsection*{Acknowledgements}
We thank Thiago Bergamaschi, Jonas Haferkamp, Aram Harrow, Zeph Landau, Ryan O'Donnell, and Peter Shor for helpful discussions. 
This work was done in part while X.T. was visiting the Simons Institute for the Theory of Computing, 
and while J.H., Y.L., and X.T. were at the Park City Mathematics Institute 2023 Graduate Summer School. 
Y.L.~is supported by DOE Grant No. DE-SC0024124, NSF Grant No. 2311733, and MURI Grant No. S394857. 
X.T.~is supported by NSF Grant No. CCF-1729369.

\appendix

\section{Discrete angles}\label{sec:discreteangles}

In the proof of~\cref{thm:main1}, the only place where we use averaging over $\theta\sim(-\pi,\pi)$ is in the following context: for a nonzero hermitian operator $H$ with integer eigenvalues in $[-t,t]$, we have
\begin{equation}
    \E_{\theta\sim(-\pi,\pi)}e^{\ii \theta H}=K(H)\preceq \one-\frac{H^2}{\norm{H}^2},
\end{equation}
where $K(H)$ is the orthogonal projector onto the kernel of $H$ (see~\cref{eq:kernelupperbound}). 
Since $H$ has bounded norm, 
we can instead consider averaging over the angles in the discrete set $\Theta_t=\{m\pi/t:m\in\mathbb{Z}\cap [-t,t-1]\}$. 
Then, for any hermitian operator $H$ with integer eigenvalues in $[-t,t]$, we have
\begin{equation}\label{eq:discreteanglebound}
    \E_{\theta\sim\Theta_t}e^{\ii \theta H}=\E_{\theta\sim(-\pi,\pi)}e^{\ii \theta H}=K(H)
\end{equation}
because for any integer $k$,
\begin{equation}
    \sum_{m=-t}^{t-1} \exp(\ii m k \pi/t)=\begin{cases}
        2t&\text{if }k=0,\\
        0&\text{if }0<|k|\leq t.
    \end{cases}
\end{equation}
The rest of the proof is exactly the same as the proof of~\cref{thm:main1}.

\section{State designs}\label{sec:statedesign}

We consider distributions~$\nu$ on a complex projective space $\mathbb{CP}^{\NN-1}$ 
where $\NN = 2^\nn$ is a power of~$2$.
This is often called a state design because $\mathbb{CP}^{\NN-1}$
is the set of all normalized state vectors modulo global phase factors in an $\nn$-qubit system,
or equivalently the set of all rank-1 projectors $\ket \psi\! \bra \psi$ on~$(\CC^2)^{\otimes \nn}$.
There is a natural (left) action of a unitary group 
given by $\ket \psi \! \bra \psi \mapsto U \ket \psi \! \bra \psi U^\dag$
for $U \in \SU(\NN)$.
The Haar measure of~$\SU(\NN)$ induces an $\SU(\NN)$-invariant measure on~$\mathbb{CP}^{\NN-1}$.
This is the target distribution we wish to approximate.
A natural metric to measure the quality of approximation is 
closeness in $t$-th moments, maximized over all possible measurements.
This is succinctly described by the trace distance:
\begin{equation}
    \frac 1 2 
    \norm[\Big]{\underbrace{\E_{\psi \sim \nu} (\ket \psi \! \bra \psi) ^{\otimes t}}_{\calS_{\nu,t}} - \underbrace{\E_{U \sim \SU(\NN)} (U \ket{\alpha}\!\bra{\alpha}U^\dag)^{\otimes t}}_{\calS_{\mathrm{Haar},t}}  }_1 
    \label{eq:state-distance}
\end{equation}
where $\ket{\alpha}$ can be any normalized vector in~$(\CC^2)^{\otimes \nn}$ due to the right invariance of the Haar measure.
Any approximate unitary design can be used for state designs,
and a bound on the $t$-th moment trace distance directly comes from the analysis of the approximate unitary design.
For example, the result of~\cref{thm:main2} serves the purpose.
However, this is not necessarily the best one can show.

\begin{theorem}\label{thm:statedesign}
    Let $\norm{\cdot}_1$ denote the Schatten $1$-norm of a matrix, the sum of all singular values.
    For any integers $t,k,\nn \ge 1$ and a normalized vector~$\ket \alpha \in \CC^\NN$, 
    we have
    \begin{equation}
        \norm[\Big]{
           \calC_t^k( \ket{\alpha}^{\otimes t}\!\bra{\alpha}^{\otimes t} )
            - 
            \calS_{\mathrm{Haar},t}
        }_1 
        \le
        \binom{\NN + t -1}{t}^{1/2}
        \left( 
            1 - \frac{1}{2 t} \frac{\NN}{\NN+1} - \frac{\NN}{2(\NN^2-1)}
        \right)^k. \label{eq:statedesignresult}
    \end{equation}
\end{theorem}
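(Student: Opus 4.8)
The plan is to exploit that the input $\ket{\alpha}^{\otimes t}\!\bra{\alpha}^{\otimes t}$ already lives in the symmetric subspace, so that only a small invariant subrepresentation governs the convergence, and there the spectral gap is larger than in general.

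First I would note that for every $\theta$ and $P$, the operator $(e^{\ii\theta P/2})^{\otimes t}\ket{\alpha}^{\otimes t}\!\bra{\alpha}^{\otimes t}(e^{-\ii\theta P/2})^{\otimes t}$ is a rank-one projector onto a vector of the symmetric subspace $\Sym^t(\CC^\NN)\subseteq(\CC^\NN)^{\otimes t}$, since $(e^{\ii\theta P/2})^{\otimes t}$ commutes with the symmetrizer. Hence both $\calC_t^k(\ket{\alpha}^{\otimes t}\!\bra{\alpha}^{\otimes t})$ and $\calS_{\mathrm{Haar},t}=\binom{\NN+t-1}{t}^{-1}\Pi_{\Sym^t}$ are supported on $\Sym^t(\CC^\NN)$, a space of dimension $\binom{\NN+t-1}{t}$, so their difference $X$ has rank at most $\binom{\NN+t-1}{t}$ and $\norm{X}_1\le\binom{\NN+t-1}{t}^{1/2}\norm{X}_2$, reducing the claim to a Schatten-$2$ bound. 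Vectorizing operators by $A\eta A^\dag\mapsto(A\otimes\bar A)|\eta\rangle\!\rangle$ (an isometry for $\norm{\cdot}_2$), the superoperator $\calC_t$ becomes $\calC_{\mu,t}$ of \cref{eq:calC} at order $t$, the channel $\calH_t$ becomes the Haar projector of \cref{thm:HaarProjectionTrivialRep}, and $|\ket{\alpha}^{\otimes t}\!\bra{\alpha}^{\otimes t}\rangle\!\rangle$ is a unit vector of $W:=\Sym^t(\CC^\NN)\otimes\overline{\Sym^t(\CC^\NN)}$, an invariant subrepresentation of $\tau:U\mapsto(U\otimes\bar U)^{\otimes t}$. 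Since $\calH_t$ restricts on $W$ to the projector onto the trivial subrepresentation, the gap amplifies (as in the introduction), so it suffices to bound $\norm{\calC_{\mu,t}-\calH_t}|_W$ and raise it to the $k$-th power.

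The substance is then the analysis of $W$. By a standard branching computation, $W$ decomposes as an $\su(\NN)$-representation into $\bigoplus_{j=0}^{t}V_j$, where $V_j$ is the irreducible representation with highest weight $\mu^{(j)}=j(L_1-L_\NN)=(j,0,\dots,0,-j)$ and $V_0$ is trivial; consistency can be checked by a dimension count. Fix $j\ge1$. With $H=\tfrac12\sigma^z\otimes\one_2^{\otimes(\nn-1)}$, the highest-weight formula recalled before \cref{thm:improvedlowerbound} gives $\ell=\norm{\phi_*(P/2)}=\mu^{(j)}(H)=j$ on $V_j$, while $\langle\mu^{(j)},\mu^{(j)}\rangle=2j^2$ and $\langle\mu^{(j)},\delta\rangle=2j(\NN-1)$. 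Substituting into \cref{eq:casimircalculation} yields $\ell^{-2}\sum_{P\in\pauli_\nn}J_P^2=\tfrac{\NN(j+\NN-1)}{2j}\one$ on $V_j$. Combining with the averaged kernel bound \cref{eq:avgkernelprojbound} and $\abs{\pauli_\nn}=\NN^2-1$, we get for the irrep $\phi_*$ on $V_j$
\begin{equation}
  \E_{P\in\pauli_\nn}K(\phi_*(P/2))\ \preceq\ \parens*{1-\frac{\NN(j+\NN-1)}{2j(\NN^2-1)}}\one .
\end{equation}
Since $\tfrac{\NN(j+\NN-1)}{2j(\NN^2-1)}=\tfrac{\NN}{2(\NN^2-1)}+\tfrac{\NN}{2(\NN+1)j}$ is decreasing in $j$, over $1\le j\le t$ the subtracted quantity is smallest at $j=t$, where it equals $\tfrac{\NN}{2(\NN^2-1)}+\tfrac{1}{2t}\tfrac{\NN}{\NN+1}$. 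By \cref{thm:KernelSum} applied within $W$ (ranging over the nontrivial $V_j$ only), $\norm{\calC_{\mu,t}-\calH_t}|_W\le 1-\tfrac{\NN}{2(\NN^2-1)}-\tfrac{1}{2t}\tfrac{\NN}{\NN+1}$. Raising this to the $k$-th power, using that $|\ket{\alpha}^{\otimes t}\!\bra{\alpha}^{\otimes t}\rangle\!\rangle$ is a unit vector, and multiplying by $\binom{\NN+t-1}{t}^{1/2}$ from the rank bound gives \cref{eq:statedesignresult}.

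I expect the main obstacle to be justifying the decomposition $W\cong\bigoplus_{j=0}^{t}V_{j(L_1-L_\NN)}$ — in particular, ruling out any other highest weight (such as the $\mu_*$ occurring in \cref{thm:improvedlowerbound}), since it is precisely this restriction on which irreps appear that makes the state-design gap strictly larger than what the general unitary-design bound \cref{eq:maingapbound} would supply. The remaining points to be careful about are the vectorization convention and the support claim $X\in\End(\Sym^t(\CC^\NN))$ underlying the $\norm{\cdot}_1\le\binom{\NN+t-1}{t}^{1/2}\norm{\cdot}_2$ step.
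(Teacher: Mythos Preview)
Your proposal is correct and follows essentially the same route as the paper: restrict to $W=\Sym^t(\CC^\NN)\otimes\overline{\Sym^t(\CC^\NN)}$, decompose into the irreps with highest weights $j(L_1-L_\NN)$ for $0\le j\le t$, compute the Casimir via \cref{eq:casimircalculation}, take the worst case $j=t$, and convert $2$-norm to $1$-norm via the rank bound $\dim\Sym^t(\CC^\NN)=\binom{\NN+t-1}{t}$. The obstacle you flag --- the decomposition $W\cong\bigoplus_{j=0}^t V_{j(L_1-L_\NN)}$ --- is dispatched in the paper by a direct citation of the Littlewood--Richardson rule (the Pieri-type special case \cite[15.25(i)]{FultonHarris}) for $\Sym^t(\CC^\NN)\otimes\Sym^t((\CC^\NN)^*)$, so that step is not left open.
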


The last term in the parenthesis is~$\approx (2\NN)^{-1}$ for large~$\NN$,
which contrasts to the last term~$\approx \NN^{-2}$ in~\cref{thm:main1}.

\begin{proof}
    The input $(\ket{\alpha}\!\bra \alpha)^{\otimes t}$ is invariant under
    tensor factor permutations either on the ket or bra factors.
    The action by $\SU(\NN)$ commutes with this permutation symmetry,
    and hence the input vector is in an $\SU(\NN)$-representation $\Sigma = \Sym^t(\CC^\NN) \otimes \Sym^t((\CC^\NN)^*)$.
    By the Littlewood--Richardson rule (actually its special case~\cite[15.25(i)]{FultonHarris}),
    we have a decomposition of~$\Sigma$ into irreps:
    \begin{equation}
        \Sigma = \Sym^t(\CC^\NN) \otimes \Sym^t((\CC^\NN)^*) 
        =\bigoplus_{s=0}^t \underbrace{\mathrm{highest\,weight}~ s (L_1 - L_\NN) }_{\Gamma_s}
    \end{equation}
    Here, $L_i$ is the dual of the diagonal matrix (an element of the Cartan subalgebra)
    where there is a sole nonzero entry that is~$1$ at the $i$-th position.
    Note that all the multiplicities of the irreps are~$1$,
    and $\Gamma_0$ is a one-dimensional trivial representation.
    Decompose~$(\ket \alpha \! \bra \alpha)^{\otimes t}$ into $\bigoplus_{s=0}^t \gamma_s(\alpha)$
    according to the irrep decomposition~$\Gamma_s$.%
    \footnote{
        Here, $\ket \alpha ^{\otimes t} \! \bra \alpha ^{\otimes t}$ is a vector of the representation space~$\Sigma$. The inner product is inherited from $(\CC^\NN)^{\otimes t} \otimes (\CC^\NN)^{* \otimes t}$ and is thus the Hilbert--Schmidt inner product.
        Since $\Sigma$ is a unitary representation,
        the components $\gamma_s(\alpha) \in \Sigma$ for $s = 0,1,\ldots, t$ are orthogonal to each other,
        and have length (defined by the inner product) 
        equal to the Schatten $2$-norm of the corresponding matrix in~$(\CC^\NN)^{\otimes t} \otimes (\CC^\NN)^{* \otimes t}$.
    }
    \cref{thm:HaarProjectionTrivialRep} applied to~$\Sigma$ says 
    that the Haar average of~$\calS_{\mathrm{Haar},t}$
    projects $(\ket \alpha \! \bra \alpha)^{\otimes t}$ onto~$\gamma_0(\alpha)$.
    This projection is independent of~$\alpha$ because~$\gamma_0 = \gamma_0(\alpha)$ is uniquely
    determined by the trace-preserving property.%
    \footnote{
        The other components $\gamma_s(\alpha)$ with $s > 0$ depend on~$\alpha$,
        but $\norm{\gamma_s(\alpha)}_2$ are independent of~$\alpha$,
        since they are invariant under~$\SU(\NN)$ whose action is transitive 
        on~$\mathbb{CP}^{\NN-1}$.
    }
    
    It is now clear that
    \begin{equation}
        \calC_t^k( (\ket \alpha \! \bra \alpha)^{\otimes t} ) - \calS_{\mathrm{Haar},t}
        =
        \calC_t^k\left( \bigoplus_{s = 1}^t \gamma_s(\alpha) \right). \label{eq:statetarget}
    \end{equation}
    We are going to bound the Schatten $2$-norm of~\cref{eq:statetarget} 
    by the factor in the parenthesis of~\cref{eq:statedesignresult}.
    Since it is after all a matrix acting on~$\Sym^t(\CC^\NN)$ of dimension~$\binom{\NN + t -1}{t}$,
    conversion to the Schatten $1$-norm gives the theorem.

    \Cref{thm:ThetaAverage} says that
    $\norm[\big]{\calC_t^k|_{\Gamma_s}}_{2 \to 2} = \norm{\E_{P \in \pauli_\nn} K(\Gamma_{s*}(P/2))}^k$ where $\Gamma_{s*}$ is the induced Lie algebra representation.
    Invoking~\cref{eq:kernelupperbound}, \cref{thm:quadraticSumIsInvariant}, 
    and most importantly~\cref{eq:casimircalculation} with highest weights $s(L_1 - L_\NN)$
    where $s = 1,2,\ldots,t$,
    we find that
    \begin{equation}
        \norm*{
            \bigoplus_{s=1}^t \E_P K(\Gamma_{s*}(P/2))
        }
        \le
            1 - \frac{1}{2 t} \frac{\NN}{\NN+1} - \frac{\NN}{2(\NN^2-1)} .
        \qedhere
    \end{equation}
\end{proof}

One can perform similar calculation for a real projective space~$\mathbb{RP}^{\NN-1}$,
which is the same as the hemisphere (spherical cap) of dimension~$\NN-1$,
excluding the equator of measure zero. 

\printbibliography

\end{document}